\newtheorem{theorem}{Theorem}[section]
\newtheorem{lemma}[theorem]{Lemma}
\newtheorem{corollary}[theorem]{Corollary}
\newtheorem{claim}{Claim}
\newtheorem{proposition}[theorem]{Proposition}
\newtheorem{definition}[theorem]{Definition}
\newcommand{\algindent}{\hspace{\algorithmicindent}}
\newcommand{\kbscheme}{\ensuremath{\lambda_{\mathit{KB}}}}
\newcommand{\ackscheme}{\ensuremath{\lambda_{\mathit{ack}}}}
\newcommand{\gossipscheme}{\ensuremath{\lambda_{\mathit{gossip}}}}
\begin{document}
\def\thefootnote{\fnsymbol{footnote}}

\title{{Labeling Schemes for Deterministic Radio Multi-Broadcast}}

%
%\titlerunning{Abbreviated paper title}
% If the paper title is too long for the running head, you can set
% an abbreviated paper title here
%
	\author{
	Colin Krisko\footnotemark[1]
	\and Avery Miller\footnotemark[2]
}
\footnotetext[1]{Department of Computer Science, University of Manitoba, Winnipeg, Manitoba, R3T 2N2, Canada.}
\footnotetext[2]{Department of Computer Science, University of Manitoba, Winnipeg, Manitoba, R3T 2N2, Canada. {\tt avery.miller@umanitoba.ca}. Supported by NSERC Discovery Grant RGPIN--2017--05936.}

\date{}
\maketitle              % typeset the header of the contribution
\thispagestyle{empty}
\begin{abstract}
 We consider the multi-broadcast problem in arbitrary connected radio networks consisting of $n$ nodes. There are $k$ designated \emph{source nodes} for some fixed $k \in \{1,\ldots,n\}$, and each source node has a distinct piece of information that it wants to share with all nodes in the network. When $k=1$, this is known as the broadcasting problem, and when $k=n$, this is known as the gossiping problem. We consider the feasibility of solving multi-broadcast deterministically in radio networks. It is known that multi-broadcast is solvable when the nodes have distinct identifiers (e.g., using round-robin), and, it has been shown by Ellen, Gorain, Miller, and Pelc (2019) that the broadcasting problem is solvable if the nodes have been carefully assigned 2-bit labels rather than distinct identifiers. We set out to determine the shortest possible labels so that multi-broadcast can be solved deterministically in the labeled radio network by some universal deterministic distributed algorithm. 

First, we show that every radio network $G$ with maximum degree $\Delta$ can be labeled using $O(\min\{\log k,\log \Delta\})$-bit labels in such a way that multi-broadcast with $k$ sources can be accomplished. This bound is tight, in the sense that there are networks such that any labeling scheme sufficient for multi-broadcast with $k$ sources requires $\Omega(\min\{\log k,\log \Delta\})$-bit labels.

However, the above result is somewhat unsatisfactory: the bound is tight for certain network topologies (e.g., complete graphs), but there are networks where significantly shorter labels are sufficient. For example, we show how to construct a tree with maximum degree $\Theta(\sqrt{n})$ in which gossiping (i.e., multi-broadcast with $n$ sources) can actually be solved after labeling the nodes with $O(1)$-bit labels. So, we set out to find a labeling scheme for multi-broadcast that uses the optimal number of distinct labels in \emph{every} network. For all trees, we provide a labeling scheme and accompanying algorithm that will solve gossiping, and, we prove an impossibility result that demonstrates that our labeling scheme is optimal for gossiping in each tree. In particular, we prove that $\Theta(\log D(G))$-bit labels are necessary and sufficient in every tree $G$, where $D(G)$ denotes the \emph{distinguishing number of $G$}. This result also applies more generally to multi-broadcast in trees with $k \in \{2,\ldots,n\}$ sources in the case where the $k$ sources are not known when the labeling scheme is applied.
\end{abstract}
\section{Introduction}
Information dissemination is one of the fundamental goals for network algorithms. One important primitive is known as $k$-broadcast: in a network of $n$ nodes, there are $k$ \emph{source nodes} that each have some initial piece of information that they wish to share with all other nodes in the network.

We consider $k$-broadcast in synchronous radio networks, which is a particular model of wireless networks. More specifically, in a synchronous radio network, time proceeds in rounds, and each node in the network makes a decision in each round whether it will listen, or, transmit a message. In any round, a node receives a message if it listens and exactly one of its neighbours transmits. Otherwise, the node receives nothing, for one of three reasons: it is not listening, or, none of its neighbours are transmitting, or, two or more of its neighbours are transmitting (this case is known as a \emph{collision}, which models radio signal interference).

The possibility of collisions introduces an interesting challenge, as many simultaneous transmissions can prevent information from spreading in the network. In order to solve $k$-broadcast (as well as many other problems) there needs to be some way of breaking symmetry in the behaviour of the nodes, and one way this can be accomplished is by having each node use an assigned label or identifier during its execution. In fact, it is not hard to see that using assigned labels is necessary: it is impossible to deterministically solve 1-broadcast in a 4-cycle with unlabeled nodes (or identical labels) since, after the source node transmits its source message, its two neighbours will behave identically in all future rounds (i.e., both transmit or both listen), which means that the remaining node will never receive the source message. For this reason, many deterministic solutions to communication tasks in radio networks are designed for networks where each node has a unique identifier. In such networks, $k$-broadcast is always solvable using a simple round-robin algorithm: each node uses its unique identifier to ensure that it transmits in a round by itself, which avoids all transmission collisions, and this is repeated until all information has reached all nodes.

With regards to solving $k$-broadcast in radio networks using a deterministic algorithm, the previous paragraph presents two extremes: at least one-bit labels are required, since assigning the same label to all nodes leads to impossibility; however, $O(\log n)$-bit labels are sufficient, since assigning a unique binary string to each of the $n$ nodes and running a round-robin algorithm will eventually solve the task. This leads to a natural question: what is the shortest label size that allows us to solve $k$-broadcast in radio networks using a deterministic algorithm? A result by Ellen, Gorain, Miller, and Pelc \cite{EGMPSPAA} demonstrates that 2-bit labels are necessary and sufficient in the special case of 1-broadcast. In this work, we set out to answer this question more generally.

\subsection{The Model}
We consider networks modeled as simple undirected connected graphs with an arbitrary number of nodes $n$. For any fixed integer $k \in \{1,\ldots,n\}$, there are $k$ nodes $s_1,\ldots,s_k$ that are designated as \emph{sources}. For each $i \in \{1,\ldots,k\}$, source node $s_i$ initially has a \emph{source message} $\mu_i$.

Execution proceeds in synchronous rounds: each node has a local clock and all local clocks run at the same speed. We do not assume that the local time at each node is the same. Each node has a radio that it can use to send or receive transmissions. In each round, each node must choose one radio mode: \emph{transmit} or \emph{listen}. In transmit mode, a node sends an identical transmission to all its neighbours in the network. In listen mode, a node is silent and may receive transmissions. More specifically, in each round $t$ at each node $v$: 
\begin{itemize}
	\item If $v$ is in transmit mode in round $t$, then $v$ does not hear anything in round $t$.
	\item If $v$ is in listen mode in round $t$, and $v$ has no neighbours in transmit mode in round $t$, then $v$ does not hear anything in round $t$.
	\item If $v$ is in listen mode in round $t$, and $v$ has exactly one neighbour $w$ in transmit mode in round $t$, then $v$ receives the message contained in the transmission by $w$.
	\item If $v$ is in listen mode in round $t$, and $v$ has two or more neighbours in transmit mode in round $t$, then $v$ does not hear anything in round $t$. This case is often referred to as a \emph{collision}, and we assume that nodes have no way of detecting when a collision occurs.
\end{itemize}
%(1) if $v$ is in transmit mode in round $t$, then $v$ does not hear anything in round $t$; (2) if $v$ is in listen mode in round $t$, and $v$ has no neighbours in transmit mode in round $t$, then $v$ does not hear anything in round $t$; (3) if $v$ is in listen mode in round $t$, and $v$ has exactly one neighbour $w$ in transmit mode in round $t$, then $v$ receives the message contained in the transmission by $w$; (4) if $v$ is in listen mode in round $t$, and $v$ has two or more neighbours in transmit mode in round $t$, then $v$ does not hear anything in round $t$. This final case is often referred to as a \emph{collision}, and we assume that nodes have no way of detecting when a collision occurs.

%We assume that the nodes do not possess any pre-assigned identifiers or labels. 

\subsection{The Problem}
The \emph{$k$-broadcast problem} is solved when each node in the network possesses all of the source messages $\mu_1,\ldots,\mu_k$. Two well-known special cases of this problem are when $k=1$ (called \emph{broadcast}) and $k=n$ (called \emph{gossiping}). We will also consider a variant of this problem called \emph{acknowledged $k$-broadcast}, which requires that, at termination, all source nodes know that all nodes possess all of the source messages.

A \emph{labeling scheme} for a network $G = (V,E)$ is any function $\lambda$ from the set $V$ of nodes into the set of finite binary strings. This function $\lambda$ has complete information about $G$: the node set, the edge set, and the set of $k$ designated sources. The string $\lambda(v)$ is called the \emph{label} of the node $v$. Labels assigned by a labeling scheme are not necessarily distinct. The \emph{length} of a labeling scheme is the maximum label length taken over all network nodes. 

%{\color{red} more formal with definition of algorithm?}
%for an arbitrary node $x$, define $h_x[0]$ to be the label assigned to $x$ by $\lambda$, and, for each $t \geq 1$, define $h_x[t]$ to be the message received by $x$ in round $t$ (or $\bot$ if $x$ receives no message).

Suppose that each network $G$ has been labeled by some labeling scheme. We consider solving the $k$-broadcast task using a \emph{universal deterministic distributed algorithm}. In particular, each node initially knows its own label, and each of the $k$ source nodes $s_i$ possesses their source message $\mu_i$. In each round $t$, each node makes a decision whether it will transmit or listen in round $t$, and, if it decides to transmit, it decides on a finite binary string $m$ that it will send in its transmission during round $t$. These decisions are based \emph{only} on the current history of the node, that is: the label of the node, the node's source message (if it has one), and the sequence of messages received by the node before round $t$. 

Our goal in this paper is to determine the minimum possible length of a labeling scheme $\lambda$ such that there exists a universal deterministic distributed algorithm that solves $k$-broadcast in networks labeled with $\lambda$.

\subsection{Our Results}

In Section \ref{arbkbroad}, we show that every radio network $G$ with maximum degree $\Delta$ can be labeled using $O(\min\{\log k,\log \Delta\})$-bit labels in such a way that $k$-broadcast can be solved by a universal deterministic distributed algorithm, and we explicitly provide such an algorithm. In Section \ref{existential}, we demonstrate that this bound is tight, in the sense that there exist networks such that every labeling scheme sufficient for $k$-broadcast requires $\Omega(\min\{\log k,\log \Delta\})$-bit labels. However, in Section \ref{better}, we demonstrate that these bounds are not tight in every network: there exist trees on $n$ nodes with maximum degree $\Theta(\sqrt{n})$ in which $n$-broadcast (gossiping) can be solved after labeling each node with $O(1)$ bits. This inspires the question: can we prove tight bounds that hold in \emph{every} graph?

Restricting to the class of all trees, in Section \ref{trees}, we provide a labeling scheme and an accompanying universal deterministic distributed algorithm that solves gossiping, and we prove that the length of the labeling scheme is optimal for every graph in the class. In particular, we prove that $\Theta(\log D(G))$-bit labels are necessary and sufficient for solving gossiping in each tree $G$, where $D(G)$ denotes the distinguishing number of $G$, i.e., the smallest integer $c$ such that $G$ has a node labeling using $\{1,\ldots,c\}$ that is not preserved by any non-trivial graph automorphism. This result also applies more generally to $k$-broadcast in trees with $k \in \{2,\ldots,n\}$ sources in the case where the $k$ sources are not known when the labeling scheme is applied. From previous work about the distinguishing number of trees \cite{DistinguishTrees}, our bound can range anywhere from $\Theta(1)$ to $\Theta(\log n)$ depending on the tree, although it is known that the distinguishing number of any tree is bounded above by $\Delta$.

\subsection{Related Work}
Information dissemination tasks have been well-studied in radio network models in the context where each node has been pre-assigned a unique identifier. One set of results concerns centralized algorithms, i.e., each node has complete knowledge of the network. In this case, much is known about efficient deterministic solutions for broadcast \cite{AlonBarLinPel,Chlamtac,ChlamKut,ElkKort,GaberMans,GasPelXin,KowPelc}, gossiping \cite{GasPot,GasPotXin}, and multi-broadcast \cite{LevinKowSeg}. Another direction of research concerns distributed algorithms, i.e., each node initially only knows its own identifier. Again, there has been much progress in devising efficient  deterministic algorithms for broadcast \cite{ChlebGasGibPelcRyt,ChlebGasOstRob,ChrobGasRyt,ClemMontiSilvDistBcast}, gossiping \cite{ChristGasLing,ChrobGasRyt,GossipSurvey,GasLing,GasPagPotRad}, and multi-broadcast \cite{ChlebusMultiBcast,ClemMontiSilvMultiBcast,LevinKowSeg}.

As opposed to pre-assigned identifiers, there has been much previous work related to algorithmically solving tasks more efficiently after choosing labels for the nodes of the network, or choosing labels for mobile agents moving in the network (see related surveys \cite{OnlineAdviceSurvey,dobrev2013computing,IntroLocalCert,ilcinkasphd}). We restrict attention to previous work concerning tasks in radio networks. In \cite{GPtoptrees}, the authors proved that $\Theta(\log\log \Delta)$-bit labels are necessary and sufficient for topology recognition in radio networks with tree topology. When nodes have collision detectors, the authors of \cite{GPSizeAndDiam} proved that $\Theta(\log\log \Delta)$-bit labels are necessary and sufficient for computing the size of any radio network, while $O(1)$-bit labels are sufficient for computing the diameter of any radio network.  In \cite{fastradioadvice}, the authors considered the set of radio networks where broadcast is possible in $O(1)$ rounds when all nodes know the complete network topology, and they prove that broadcast is possible in such networks when the sum of the lengths of all labels is $O(n)$, and impossible when the sum is $o(n)$.

Most relevant to our work are previous results involving labeling schemes for information dissemination tasks. It was first shown in \cite{EGMP} that broadcast could be achieved in any radio network after applying a labeling scheme with length 2. The worst-case number of rounds used by their algorithm was $\Theta(n)$. In \cite{BuLevelSep}, the authors restricted attention to the class of level-separable radio networks, proved that 1-bit labels were sufficient in order to solve broadcast, and provided an accompanying broadcast algorithm using at most $2D$ rounds, where $D$ represents the source eccentricity. In \cite{EG}, the authors once again showed that $O(1)$-bit labels were sufficient for solving broadcast in any radio network, but they provided significantly faster broadcast algorithms: a non-constructive proof that an $O(D\log n + \log^2 n)$-round algorithm exists, and an explicit algorithm that completes within $O(D\log^2 n)$ rounds. In \cite{BuConverge}, the authors considered arbitrary radio networks, but instead studied the convergecast task: each node has an initial message, and all of these must eventually reach a designated sink node. They provide a labeling scheme using $O(\log n)$-bit labels, and an accompanying convergecast algorithm that uses $O(n)$ rounds. They prove that these bounds are tight for certain network topologies by proving matching lower bounds.

\section{Labeling Schemes and Algorithms for $k$-Broadcast in Arbitrary Graphs}\label{arbkbroad}

\subsection{Labeling Schemes and Algorithms for Acknowledged Broadcast}\label{EGMPAckBroadcast}
In this section, we recall some useful results from \cite{EGMP,EGMPSPAA} about solving acknowledged broadcast, and then extend them to define a new algorithm that will be used later in our work.

\subsubsection{Acknowledged Broadcast \cite{EGMPSPAA}} Given an arbitrary network $G$ with a designated start node $s_G$, there is a labeling scheme $\ackscheme$ that labels each node of $G$ using three bits called $join$, $stay$, and $ack$. There is a deterministic distributed algorithm $\mathcal{B}_{ack}$ that executes on the labeled version of $G$ that solves acknowledged broadcast: the designated start node $s_G$ possesses a message $\mu$, the message $\mu$ is eventually received by all other nodes in $G$, and after this occurs, the node $s_G$ eventually receives a message containing the string ``ack".  In fact, the algorithm $\mathcal{B}_{ack}$ can be viewed as two algorithms performed consecutively. First, a subroutine $\mathcal{B}$ is initiated by $s_G$, and this algorithm performs the broadcast of $\mu$ that eventually reaches all nodes, and, in the process, establishes a global clock (i.e., all nodes have their local clock value equal to $s_G$'s local clock at the end of the execution of $\mathcal{B}$). Then, a subroutine $\mathcal{ACK}$ is initiated by a node $z$ that sends the ``ack" message that is eventually received by $s_G$. The ``ack" message travels one hop per round along the same path that $\mu$ traveled from $s_G$ to $z$, but in reverse order, so the time to complete $\mathcal{ACK}$ is bounded above by the time to complete $\mathcal{B}$. In \cite{EGMPSPAA}, the unique node $z$ that initiates the $\mathcal{ACK}$ is designated by the labeling scheme using the label 001, and $z$ was chosen due to it being the last node to receive $\mu$ during the execution of $\mathcal{B}$. This choice of $z$ is important for the correctness of the algorithm, as it ensures that the execution of $\mathcal{B}$ is finished so that $\mathcal{ACK}$ can run on its own (which prevents transmissions from the two subroutines from interfering with one another). However, we note that any node could initiate $\mathcal{ACK}$, as long as it does so after the execution of $\mathcal{B}$ is finished. We will use this fact below to create a modified version of $\mathcal{B}_{ack}$ that will work in the case where we want to designate a specific node in $G$ as the initiator of the $\mathcal{ACK}$ subroutine. Another useful observation is that the labeling scheme $\ackscheme$ never sets the $join$, $stay$, and $ack$ bits all to 1 at any node. We will use this fact later to designate a special node in the network as a ``coordinator" by setting these three bits to 1, and this will not affect the original labeling or the behaviour of $\mathcal{B}_{ack}$.

\subsubsection{Bounded Acknowledged Broadcast \cite{EGMP}}
Using the same labeling scheme $\ackscheme$ as above, we describe a modification of $\mathcal{B}_{ack}$ so that it satisfies the following property: there is a common round $t_{done}$ in which all nodes know that the broadcast of the message $\mu$ has been completed, and, all nodes know an upper bound $m$ on how many rounds it took to complete the broadcast of $\mu$ (i.e., the number of rounds that elapse during the execution of $\mathcal{B}$). To implement this: when $z$ initiates the $\mathcal{ACK}$ algorithm, it takes note of the current round number $m$ and includes it in the ``ack" message, and when $s_G$ receives the ``ack" message, it initiates a broadcast using $\mathcal{B}$ with a new message $\mu'$ that contains the value of $m$. All nodes in the network will receive the broadcast of $\mu'$ before round $3m$: the broadcast of $\mu$ takes at most $m$ rounds, the $\mathcal{ACK}$ algorithm takes at most $m$ rounds, and the broadcast of $\mu'$ takes at most $m$ rounds. So the value $t_{done}=3m$ satisfies the desired property. We denote this version of the algorithm by $\mathcal{B}_{bounded}$.

\subsubsection{Acknowledged Broadcast with Designated Acknowledger}
Using the same labeling scheme $\ackscheme$ as above, and assuming that all nodes know an upper bound $m$ on the number of rounds that elapse during the execution of $\mathcal{B}$ (which could be learned by first executing $\mathcal{B}_{bounded}$, for example), we describe a modification of $\mathcal{B}_{ack}$ so that the acknowledgement process begins from a designated node $z_{des}$ that knows that it must initiate the $\mathcal{ACK}$ algorithm (e.g., it could be given a special label to indicate this). The algorithm consists of first executing $\mathcal{B}$, which performs the broadcast of $\mu$ starting at $s_G$, then, in round $m+1$, the node $z_{des}$ initiates the $\mathcal{ACK}$ algorithm. This algorithm works and completes within $2m$ rounds, since: the execution of $\mathcal{B}$ is finished by round $m$ (so all nodes know $\mu$ by round $m$), the execution of $\mathcal{ACK}$ takes at most an additional $m$ rounds, and, there is no interference between the $\mathcal{B}$ and $\mathcal{ACK}$ subroutines. We denote this version of the algorithm by $\mathcal{B}_{ack:des}$.

%The following details about $\lambda_{ack}$ and $\mathcal{B}_{ack}$ will be important for our work. First, we note that $\lambda_{ack}$ never sets the $join$, $stay$, and $ack$ bits all to 1 at any node. We will use this fact to designate a special node in the network as a ``coordinator" by setting these three bits to 1, and this will not interfere with the original labeling. Second, we note that the $\mathcal{B}_{ack}$ algorithm can be decomposed into two disjoint stages: (1) the initial broadcast of $\mu$ that eventually reaches all nodes, initiated by the unique source node $s_G$; and, (2) the sending of an ``ack" message that eventually reaches $s_G$, initiated by the last node to receive the source message $\mu$ (which is designated by the labeling scheme by setting the $join$, $stay$, and $ack$ bits to 0, 0, 1, respectively). We will use this fact

\subsection{Labeling Scheme $\kbscheme$ for $k$-Broadcast}\label{theKBscheme}

In this section, we provide a labeling scheme $\kbscheme$ that will be used by our algorithm $\mathcal{KB}$ (described in Section \ref{theKBalg}). At a high level, the algorithm $\mathcal{KB}$ works in two steps: first, the $k$ source messages are collected at a coordinator node (arbitrarily chosen by the labeling scheme), then, the coordinator broadcasts all the source messages to the entire network.

The number of bits used by our scheme is $O(\min\{\log k,\log \Delta\})$. To achieve this upper bound, two different labeling strategies are used, depending on the relationship between $k$ and $\Delta$. The labeling scheme has complete information about the network and the $k$ designated sources, which it uses to choose which labeling strategy to employ, and it uses a single bit in the labels to signal to the $k$-broadcast algorithm which strategy was used. %Depending on which labeling strategy was used, our $k$-broadcast algorithm will use a corresponding strategy for collecting the source messages at the coordinator node.

At a high level, the labeling scheme follows one of the two following strategies:
\begin{enumerate}
	\item When the number of sources is no larger than the maximum degree of the graph (i.e., $k \leq \Delta$), the strategy is to give each source node a unique label so that, in the $k$-broadcast algorithm, each source node can send its source message to the coordinator, one at a time.
	\item When the number of sources is larger than the maximum degree of the graph (i.e., $k > \Delta$), the strategy is to label the network using a distance-two colouring so that, in the $k$-broadcast algorithm, a round-robin strategy that avoids transmission collisions can be used to share the source messages with the coordinator.
\end{enumerate}

To implement this idea, suppose that we are provided with a graph $G$ with $k$ designated source nodes $s_1,\ldots,s_k$. We assign a label to each node $v$ in $G$, and the label at each node $v$ consists of 5 components: a $strat$ bit, a $join$ bit, a $stay$ bit, an $ack$ bit, and a binary string $sched$. The labeling scheme assigns values to the components as follows:
\begin{enumerate}
	\item Choose an arbitrary node $r \in G$. This node will act as the \emph{coordinator}.
	\item Apply the labeling scheme $\ackscheme$ (see Section \ref{EGMPAckBroadcast}) to $G$ with designated start node $s_G=r$. This will set the $join$, $stay$, and $ack$ components (each consisting of one bit) at each node $v$. For the coordinator node $r$, set $join=stay=ack=1$.
	%	\item Simulate the EGMP Broadcast algorithm with $\rho$ as root to determine the EGMP tree $T_{\egmp}$.
	%	\item Compute the minimal subtree of $T_{\egmp}$ rooted at $\rho$ that contains all source nodes $s_1,\ldots,s_k$. Denote this subtree as $M$. Let $m$ be the number of leaves in $M$, and denote the leaves of $M$ by $\ell_1,\ldots,\ell_m$.
	\item There are two cases:
	\begin{itemize}
		\item If $k \leq \Delta$, then: 
		\begin{enumerate}
			\item Set $strat$ to 0 at each node $v$.
			\item For each source node $s_i$ with $i \in \{1,\ldots,k\}$, set $sched$ at $s_i$ to be the binary representation of $i$.
			\item For each node $v \not\in \{s_1,\ldots,s_k\}$, set $sched$ at $v$ to the value 0.
		\end{enumerate}
		\item If $k > \Delta$, then: 
		\begin{enumerate}
			\item Set $strat$ to 1 at each node $v$.
			\item Compute a distance-two colouring of the graph $G$. More specifically, compute the graph $G^2$ and use the greedy colouring algorithm \cite{brooks} to properly colour the nodes of $G^2$ using integers, with smallest colour equal to 1. Let $c$ be the number of colours used.
			\item For each node $v$, set $sched$ at $v$ to be the $\lceil \log_2 (c+1) \rceil$-bit binary representation of the colour assigned to $v$ in the distance-two colouring of $G$.
		\end{enumerate}
	\end{itemize}

	%\item For each node $v \in M$, compute the distance $d(\rho,v)$ modulo 3 in $M$, and set $slot(v)$ to be the 2-bit binary representation of this value. At each node $v \not\in M$, set $slot(v)$ to be 00.
	%\item Compute a minimum proper edge colouring of $M$ using the colours $1,\ldots,\chi'(M)$. For each node $v \in M - \{\rho\}$, determine the colour assigned to the edge between $v$ and its parent in $M$, and set $colour(v)$ to be the $\lceil \log_2(\chi'(M)) \rceil$-bit binary representation of this value. For $v=\rho$ and each $v \not\in M$, set $colour(v)$ to be the string of length $\lceil \log_2(\chi'(M)) \rceil$ consisting of 0's.
\end{enumerate}

%As input, we are provided with a graph $G$ with $k$ designated source nodes $s_1,\ldots,s_k$. We assign a label to each node $v$ in $G$, and the label at each node $v$ consists of 5 components: $\big\langle mode, join(v), stay(v), ack(v), colour(v) \big\rangle$. The $mode$ bit is set to 1 to indicate that Strategy 2 is being used. The labeling scheme assigns values to the remaining components as follows:
%\begin{enumerate}
%	\item Choose an arbitrary node $\rho \in G$. This node will act as the \emph{coordinator}.
%	\item Apply the labeling scheme $\lambda_{\ackegmp}$ to $G$ with designated source $\rho$. This will set the $join(v)$, $stay(v)$, and $ack(v)$ components (each consisting of one bit) at each node $v$.
%	
%\end{enumerate}

\subsection{Algorithm $\mathcal{KB}$ for $k$-Broadcast}\label{theKBalg}
We now describe our deterministic distributed $k$-broadcast algorithm $\mathcal{KB}$ that is executed after the network nodes have been labeled using the labeling scheme $\kbscheme$ from Section \ref{theKBscheme}. The algorithm's execution consists of three subroutines performed consecutively: {\bf Initialize}, {\bf Aggregate}, and {\bf Inform}. These subroutines will make use of the algorithms $\mathcal{B}_{ack}$, $\mathcal{B}_{bounded}$, and $\mathcal{B}_{ack:des}$ described in Section \ref{EGMPAckBroadcast}.

The {\bf Initialize} subroutine consists of executing $\mathcal{B}_{bounded}$. The start node is the coordinator $r$ (the unique node with $join$, $stay$, and $ack$ bits all set to 1), and the broadcast message is ``init". At the conclusion of the execution, all nodes know an upper bound $m$ on the number of rounds that elapsed during the broadcast of the ``init" message, and, they all received this value before round $t_{done} = 3m$. Thus, in round $3m$, all nodes terminate the subroutine.

The {\bf Aggregate} subroutine is designed to collect all the source messages at the coordinator $r$. There are two possible algorithms, and the nodes will run one of these two algorithms depending on the value of the $strat$ bit that was set by the labeling scheme $\kbscheme$ (and this value is the same at all nodes). We present these two algorithms separately below.
\begin{itemize}
	\item If the $strat$ bit is 0, the nodes run \textsc{Individual-Collect}, which we now describe. The execution proceeds in phases, each consisting of exactly $2m$ rounds. For each $i \geq 1$, at the start of the $i^{th}$ phase, the coordinator $r$ initiates the $\mathcal{B}_{ack:des}$ algorithm with a broadcast message containing the value of $i$. In the $(m+1)^{th}$ round of the $i^{th}$ phase, the unique node that has its $sched$ bits set to the binary of representation of $i$ initiates the acknowledgement process, i.e., it will act as $z_{des}$. By the definition of $\kbscheme$, note that $z_{des} = s_i$. In its transmitted ``ack" message, $z_{des}$ includes its source message $\mu_i$. Eventually, there will be a phase $j$ in which the coordinator $r$ does not receive an ``ack" message, and, in phase $j+1$, the coordinator $r$ initiates the $\mathcal{B}_{ack}$ algorithm with broadcast message ``done". Upon receiving the ``done" message, each node terminates the \textsc{Individual-Collect} subroutine at the end of the current phase.
	\item If the $strat$ bit is 1, the nodes run \textsc{RoundRobin-Collect}, which we now describe. First, each node computes \texttt{numColours} using the calculation $2^{|sched|}-1$, where $|sched|$ represents the number of bits in the $sched$ part of its label (equivalently, the length of its label minus four). Then, the execution consists of $m$ phases, each consisting of exactly \texttt{numColours} rounds. In the $i^{th}$ round of each phase, a node transmits if and only if its $sched$ bits are equal to the binary representation of $i$, and its transmitted message is equal to the subset of source messages $\{\mu_1,\ldots,\mu_k\}$ that it knows. At the end of the $m^{th}$ phase, each node terminates the \textsc{RoundRobin-Collect} subroutine.
\end{itemize}

The {\bf Inform} subroutine consists of executing $\mathcal{B}_{bounded}$. The start node is the coordinator $r$, and the broadcast message is equal to the subset of source messages $\{\mu_1,\ldots,\mu_k\}$ that $r$ knows. All nodes terminate this subroutine at the same time, and they all know that $k$-broadcast has been completed.

\begin{theorem}\label{KBsolved}
	Consider any $n$-node unlabeled network $G$ with maximum degree $\Delta$, and, for any $k \in \{1,\ldots,n\}$, consider any designated source nodes $s_1,\ldots,s_k$ with source messages $\{\mu_1,\ldots,\mu_k\}$. By applying the labeling scheme $\kbscheme$ and then executing algorithm $\mathcal{KB}$, all nodes possess the complete set of source messages $\{\mu_1,\ldots,\mu_k\}$. The length of $\kbscheme$ is $O(\min\{\log k, \log \Delta\})$.
\end{theorem}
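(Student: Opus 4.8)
The plan is to prove correctness of the algorithm by tracking the execution through its three subroutines in order, establishing an invariant at the boundary of each, and separately verify the claimed label length. The correctness argument naturally splits along the two cases ($k \le \Delta$ and $k > \Delta$) only within the \textbf{Aggregate} subroutine, since \textbf{Initialize} and \textbf{Inform} are identical in both cases.

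First I would handle \textbf{Initialize}. By the properties of $\mathcal{B}_{bounded}$ recalled in Section~\ref{EGMPAckBroadcast}, after its execution every node shares a common clock value, all nodes know a common upper bound $m$ on the number of rounds taken by a single broadcast $\mathcal{B}$ from the coordinator $r$, and all nodes terminate the subroutine in the common round $t_{done}=3m$. The key point to record here is that $r$ is well-defined and unique: the labeling scheme sets $join=stay=ack=1$ only at $r$, and (as noted in the excerpt) $\ackscheme$ never produces this combination at any other node, so no ambiguity arises. This gives the invariant entering \textbf{Aggregate}: all nodes agree on $m$, share a clock, and $r$ is the unique coordinator.

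Next I would argue that after \textbf{Aggregate}, the coordinator $r$ possesses the complete set $\{\mu_1,\ldots,\mu_k\}$. In the case $strat=0$ (\textsc{Individual-Collect}), I would verify by induction on the phase index $i$ that in phase $i$ the coordinator initiates $\mathcal{B}_{ack:des}$ with value $i$, that the unique node whose $sched$ equals the binary representation of $i$ is exactly $s_i$ (by construction of $\kbscheme$), and that $s_i$ acting as $z_{des}$ delivers $\mu_i$ back to $r$ inside the $2m$ rounds guaranteed by $\mathcal{B}_{ack:des}$. The crucial timing fact, inherited from $\mathcal{B}_{ack:des}$, is that each phase is self-contained: $\mathcal{B}$ finishes by round $m$ and $\mathcal{ACK}$ runs in the remaining window without interference, so consecutive phases do not collide. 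Once $i$ exceeds $k$ there is no node with the matching $sched$ value, so $r$ receives no ``ack'', detects termination, and broadcasts ``done''. In the case $strat=1$ (\textsc{RoundRobin-Collect}), the argument rests on the distance-two colouring: I would show that if two nodes share a colour then they are at distance at least three, hence no listening node ever has two same-colour neighbours transmitting simultaneously, so every transmission in the designated round is collision-free and received by all neighbours. Then, since the colouring is a $\textrm{proper}$ distance-two colouring, a single phase of \texttt{numColours} rounds lets each node broadcast its currently-known set to all neighbours without collision, and $m \ge $ the eccentricity of $r$ phases suffice to propagate every source message to $r$ along shortest paths (formally, after $j$ phases every message originating within distance $j$ of any node has reached that node; taking $j=m$ covers all of $G$). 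The main obstacle I anticipate is precisely this propagation bound: I must argue that $m$ phases are enough, which requires relating the broadcast bound $m$ to the diameter or radius of $G$, and confirming that the round-robin schedule truly advances information one hop per phase despite nodes only transmitting once per phase.

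Finally, for \textbf{Inform}, a single $\mathcal{B}_{bounded}$ broadcast from $r$ carrying the full set $\{\mu_1,\ldots,\mu_k\}$ reaches every node and terminates synchronously, completing $k$-broadcast. It remains to bound the label length. Each label has a constant number of single-bit fields ($strat$, $join$, $stay$, $ack$) plus the $sched$ string. When $k \le \Delta$, the $sched$ field holds a binary representation of an index in $\{1,\ldots,k\}$, giving $O(\log k)$ bits. When $k > \Delta$, $sched$ holds a distance-two colour; since $G^2$ has maximum degree at most $\Delta^2$, the greedy colouring uses $c \le \Delta^2+1$ colours, so $sched$ has $\lceil \log_2(c+1)\rceil = O(\log \Delta)$ bits. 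Since the labeling scheme selects the strategy based on the comparison of $k$ and $\Delta$, the resulting length is $O(\min\{\log k,\log\Delta\})$, as claimed. The colour-count bound $c \le \Delta^2+1$ is the one routine estimate worth stating explicitly, as it is what pins down the $O(\log\Delta)$ side of the minimum.
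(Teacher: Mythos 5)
Your proposal is correct and follows essentially the same route as the paper's proof: correctness is argued subroutine by subroutine using the stated guarantees of $\mathcal{B}_{bounded}$ and $\mathcal{B}_{ack:des}$, the \textbf{Aggregate} analysis splits on the $strat$ bit exactly as in the paper (per-phase delivery of $\mu_i$ when $k\leq\Delta$; collision-free flooding for $m$ phases via the distance-two colouring when $k>\Delta$), and the length bound uses the same $c\leq\Delta^2+1$ estimate from the greedy colouring of $G^2$. The one point you flag as an obstacle, that $m$ phases suffice for propagation, is resolved the same way the paper does it: $m$ upper-bounds the distance from every node to the coordinator, and each collision-free phase advances every message at least one hop.
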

\begin{proof}
	First, we consider the length of the labeling scheme $\kbscheme$, as described in Section \ref{theKBscheme}. In the case where $k \leq \Delta$, the labeling scheme uses $O(\log k)$ bits: 1 bit for the $strat$ component, 3 bits for the $join$, $stay$, and $ack$ components, and $O(\log k)$ bits in the $sched$ component in order to store the binary representation of $i \in \{1,\ldots,k\}$ at each source node $s_i$ (and 1 bit at all non-source nodes). In the case where $k > \Delta$, the labeling scheme uses $O(\log \Delta)$ bits: 1 bit for the $strat$ component, 3 bits for the $join$, $stay$, and $ack$ components, and $O(\log \Delta)$ bits in the $sched$ component in order to represent the binary representation of the colour assigned by the distance-two colouring of $G$. To see why this last bound holds: recall by Brooks' Theorem \cite{brooks} that the greedy colouring algorithm applied to $G^2$ will use at most $1+\Delta_{G^2}$ colours (where $\Delta_{G^2}$ denotes the maximum degree of $G^2$), that $\Delta_{G^2} \in O(\Delta^2)$ (since, for each node in $G$, there are at most $\Delta^2$ other nodes in $G$ within distance 2), and that any integer bounded above by $\Delta^2$ has a binary representation using $O(\log \Delta^2) = O(\log \Delta)$ bits. Taken together, these two cases demonstrate that the length of the labeling scheme is $O(\min\{\log k, \log \Delta\})$.
	
	Next, we consider the correctness of $\mathcal{KB}$. Note that $\kbscheme$ assigns the same value to the $strat$ bit at all nodes, so all nodes are executing the same strategy. Also, regardless of strategy, the {\bf Initialize} and {\bf Inform} subroutines are identical. From the correctness of $\mathcal{B}_{bounded}$, at the end of {\bf Initialize}: all nodes know the same upper bound $m$ on the number of rounds that elapsed during the broadcast of the ``init" message, and all nodes terminated {\bf Initialize} in round $3m$. Also, from the correctness of $\mathcal{B}_{bounded}$, at the end of {\bf Inform}: all nodes know the subset of source messages that $r$ knows at the end of {\bf Aggregate}, and all nodes terminated {\bf Inform} at the same time. It remains to show that, regardless of the strategy used by {\bf Aggregate}, the coordinator $r$ possesses the entire set of source messages. In the case where $strat = 0$, the correctness of \textsc{Individual-Collect} follows almost immediately from the correctness of $\mathcal{B}_{ack:des}$: for each $i \in \{1,\ldots,k\}$, the $i^{th}$ phase ensures that $r$ receives the source message $\mu_i$ within $2m$ rounds. In the case where $strat=1$, the \textsc{RoundRobin-Collect} subroutine is a flooding algorithm that avoids all transmission collisions due to the distance-2 colouring of the nodes. It remains to confirm that it is executed for sufficiently many rounds. The length of each phase is $2^{|sched|}-1$, which is at least the number of colours $c$ used in the distance-2 colouring since $|sched| = \lceil \log_2 (c+1) \rceil$, so it follows that each node gets to transmit at least once per phase (without any collisions occurring at its neighbours). Finally, as $m$ is an upper bound on the distance between each node and the coordinator, $m$ phases are sufficient to ensure that each source message arrives at $r$ by the end of \textsc{RoundRobin-Collect}.
\end{proof}

\subsection{Existential Lower Bound}\label{existential}

In this section, we prove that in any complete graph $K_n$, any labeling scheme that is sufficient for solving $k$-broadcast has length at least $\Omega(\min\{\log k,\log \Delta\})$. This matches the $O(\min\{\log k,\log \Delta\})$ worst-case upper bound guaranteed by Theorem \ref{KBsolved}, which means that the upper bound cannot be improved in general. 

The idea behind the proof is to show that each source must be labeled differently by any labeling scheme: otherwise, using an indistinguishability argument, we prove that two sources with the same label will behave the same way in every round, which prevents any other node from receiving their source message (either due to both being silent, or both transmitting and causing collisions everywhere). Since any labeling scheme using at least $k$ different labels has length at least $\Omega(\log k)$, and $k \leq n = \Delta+1$, the result follows.

\begin{theorem}\label{existLower}
	Consider any integer $n > 1$, any $k \in \{1,\ldots,n\}$ and any labeling scheme $\lambda$. If there exists a universal deterministic distributed algorithm that solves $k$-broadcast on the complete graph $K_n$ labeled by $\lambda$, then the length of $\lambda$ is at least $\Omega(\min\{\log k,\log \Delta\})$.
\end{theorem}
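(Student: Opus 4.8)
The plan is to prove the lower bound via the indistinguishability argument sketched just before the statement. First I would observe that in $K_n$ every node is adjacent to every other node, so the receiving conditions are particularly simple: a listening node hears a message in round $t$ if and only if exactly one other node transmits in round $t$. The goal is to show that any labeling scheme $\lambda$ admitting a correct universal deterministic $k$-broadcast algorithm $\mathcal{A}$ must assign \emph{distinct} labels to the $k$ source nodes $s_1,\ldots,s_k$. Once this is established, the scheme uses at least $k$ distinct labels, so its length is at least $\lceil \log_2 k \rceil = \Omega(\log k)$; since $K_n$ has $\Delta = n-1 \geq k-1$, we have $\log k \le \log(\Delta+1) = O(\log \Delta)$, hence $\Omega(\log k) = \Omega(\min\{\log k, \log \Delta\})$, which is the claimed bound.

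To establish distinctness, I would argue by contradiction: suppose two sources $s_a$ and $s_b$ receive the same label under $\lambda$. The key technical claim is that $s_a$ and $s_b$ then behave identically in every round of the execution of $\mathcal{A}$ — in each round both transmit (sending the same message) or both listen, and whenever they listen they receive the same thing. I would prove this by induction on the round number $t$. The base case holds because a node's behaviour in the first round depends only on its label and its source message, and here the decision function of $\mathcal{A}$ must treat $s_a$ and $s_b$ the same since they have identical labels; the subtle point I would flag is that their source messages $\mu_a$ and $\mu_b$ differ, but the \emph{decision whether to transmit or listen, and which label-derived message to send}, cannot depend on the \emph{content} of the source message in a way that breaks symmetry — more carefully, one sets up a bijective renaming swapping the roles of $\mu_a$ and $\mu_b$ and argues that $\mathcal{A}$'s behaviour is symmetric under this renaming. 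For the inductive step, a listening node $v \notin \{s_a, s_b\}$ receives a message in round $t$ exactly when exactly one neighbour transmits; because $s_a$ and $s_b$ always act in lockstep, they contribute either zero transmitters (both silent) or two transmitters (both transmit, causing a collision at $v$) to $v$'s neighbourhood, so $v$'s received sequence is unaffected by which of $s_a,s_b$ is present and $v$ behaves identically in both symmetric executions; consequently $s_a$ and $s_b$, whose histories are determined by the same received sequences (up to the symmetric renaming), continue to behave identically in round $t+1$.

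With the lockstep claim in hand, I would derive the contradiction. Consider the moment $s_a$ and $s_b$ first attempt to disseminate source information: in every round, either both are silent — in which case neither $\mu_a$ nor $\mu_b$ reaches any other node in that round — or both transmit simultaneously, causing a collision at every other node $w$ (since $w$ is adjacent to both in $K_n$, so $w$ sees at least two transmitters), so again no other node receives anything from them. Therefore no node $w \notin \{s_a, s_b\}$ ever receives a transmission originating from $s_a$ or $s_b$ alone, and a careful tracking of information flow shows that the source message $\mu_a$ (say) can never reach any node other than possibly $s_b$ itself. Since $n > 1$ and $k \geq 1$, there is at least one node that must learn $\mu_a$ but cannot, contradicting the correctness of $\mathcal{A}$. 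Hence the sources must all be labeled distinctly, completing the argument.

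I expect the main obstacle to be making the symmetry argument in the inductive step fully rigorous, specifically handling the fact that $s_a$ and $s_b$ carry \emph{different} source messages. The clean way to do this is to define two executions that are identical except that the source messages $\mu_a$ and $\mu_b$ are swapped, and to prove by simultaneous induction that these two executions are related by the renaming that swaps $\mu_a \leftrightarrow \mu_b$ throughout; then comparing $s_a$ in one execution with $s_b$ in the other (both having the same label and symmetric inputs) forces their behaviours to coincide. Care is also needed to confirm that no node can act as a "relay" that launders $\mu_a$ to a third node via an intermediary — but this is precisely ruled out by the observation that $s_a, s_b$ jointly produce either silence or a collision everywhere, so their joint contribution to any other node's history is the \emph{empty reception}, identical in the two symmetric executions, which propagates the indistinguishability to the whole network.
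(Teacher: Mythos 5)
Your proposal follows the paper's proof almost step for step: assume two sources $s_\alpha,s_\beta$ share a label, show by induction on the round number that they act in lockstep, observe that in $K_n$ this means every other node sees either silence or a collision from the pair in every round (so $\mu_\alpha$ never reaches any other node), and then count --- $k$ distinct labels force length $\Omega(\log k)=\Omega(\min\{\log k,\log\Delta\})$ since $\Delta=n-1$. The case analysis in your inductive step (both transmit; nobody or at least two others transmit; exactly one third node transmits, which both then hear) is exactly the paper's.

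The one place you diverge is in handling the fact that $s_\alpha$ and $s_\beta$ carry different source messages, and the fix you propose does not deliver what you need. The paper defines a node's history as its label followed by its sequence of received messages and carries out the induction entirely \emph{within the single execution}, implicitly treating the transmit/listen schedule as a function of the label and received messages only. Your swap argument instead compares two executions $E$ and $E'$ related by exchanging $\mu_\alpha$ and $\mu_\beta$; since transposing $s_\alpha$ and $s_\beta$ is a label- and input-preserving automorphism of the sourced, labeled $K_n$, determinism gives you that $s_\alpha$ in $E$ behaves exactly as $s_\beta$ in $E'$. But that is a cross-execution statement: it does not imply that $s_\alpha$ and $s_\beta$ transmit in the same rounds of the \emph{same} execution $E$, which is what the collision argument requires, and $E$ and $E'$ need not produce the same schedules. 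Indeed, if the decision to transmit were allowed to depend on the content of the source message, two identically labeled sources with distinct messages could deliberately desynchronize (e.g., transmit in round $2i+b$ where $b$ is the $i$-th bit of an encoding of the message), and no renaming argument rules this out. So you should drop the swap and do what the paper does: run the induction on (label, received-message-sequence) histories inside one execution, which is precisely the step that encodes the assumption that a node's schedule is oblivious to its source message's content.
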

\begin{proof}
	Consider any labeling scheme $\lambda$ and any universal deterministic distributed algorithm that solves $k$-broadcast on $K_n$ labeled by $\lambda$.
	
	First, we prove that the labels assigned to the $k$ sources by $\lambda$ are distinct. If $k=1$, there is nothing to prove, so we proceed with the cases where $k \geq 2$. To obtain a contradiction, suppose that two sources $s_\alpha$ and $s_\beta$ are assigned the same label $\ell$ by the labeling scheme $\lambda$.
	
	Consider the execution of $\mathcal{A}$ on the complete graph $K_n$ labeled by $\lambda$. We set out to prove that, in every round, $s_\alpha$ and $s_\beta$ perform the same action, i.e., either both listen, or both transmit. To do so, we consider node histories during the execution of $\mathcal{A}$: for an arbitrary node $x$, define $h_x[0]$ to be the label assigned to $x$ by $\lambda$, and, for each $t \geq 1$, define $h_x[t]$ to be the message received by $x$ in round $t$ (or $\bot$ if $x$ receives no message). As $\mathcal{A}$ is a deterministic algorithm, we know that $h_x[0\ldots t-1] = h_y[0\ldots t-1]$ implies that $x$ and $y$ perform the exact same action in round $t$, i.e., $x$ and $y$ both stay silent in round $t$, or, they both transmit the same message in round $t$. We apply this idea to the node histories $h_\alpha$ and $h_\beta$ of $s_\alpha$ and $s_\beta$, respectively, during the execution of $\mathcal{A}$.
	\begin{claim}
		For an arbitrary round $t \geq 1$, we have that $s_\alpha$ transmits in round $t$ if and only if $s_\beta$ transmits in round $t$.
	\end{claim}
	To prove the claim, it suffices to prove that, for each $t \geq 1$, nodes $s_\alpha$ and $s_\beta$ have the same history up to round $t-1$, i.e., $h_\alpha[0\ldots (t-1)] = h_\beta[0 \ldots (t-1)]$. This is because $\mathcal{A}$ is a deterministic distributed algorithm. We proceed by induction on the round number $t$. For the base case, consider $t=1$. By assumption, we have that $\lambda(s_\alpha) = \lambda(s_\beta) = \ell$, which implies that $h_\alpha[0] = \ell = h_{\beta}[0]$, as required. As induction hypothesis, assume that, for some $t \geq 2$, that $h_\alpha[0\ldots (t-2)] = h_{\beta}[0 \ldots (t-2)]$. For the inductive step, consider the possible cases for the value of $h_\alpha[t-1]$:
	\begin{itemize}
		\item {\bf Suppose that $h_\alpha[t-1] = \bot$.} There are several sub-cases to consider:
		\begin{itemize}
			\item {\bf $s_\alpha$ transmits in round $t-1$.} Then, by the induction hypothesis, we know that $h_\alpha[0\ldots (t-2)] = h_{\beta}[0 \ldots (t-2)]$, which implies that $s_\beta$ transmits in round $t-1$ as well. This means that $s_\beta$ does not receive a transmission in round $t-1$, so $h_{\beta}[t-1] = \bot$, as required.
			\item {\bf $s_\alpha$ does not transmit in round $t-1$, but two or more neighbours of $s_\alpha$ transmit in round $t-1$.} This means that two or more nodes transmit in round $t-1$. If $s_\beta$ is not one of these nodes, then it has them as neighbours in $K_n$, so a collision occurs at $s_\beta$ in round $t-1$, i.e., $h_{\beta}[t-1] = \bot$. Otherwise, if $s_\beta$ is one of the transmitting nodes, then $s_\beta$ does not receive a transmission in round $t-1$, so $h_{\beta}[t-1] = \bot$, as required.
			\item {\bf No node in $K_n$ transmits in round $t-1$.} Then $s_\beta$ has no transmitting neighbours in round $t-1$, so $s_\beta$ does not receive a transmission in round $t-1$, i.e., $h_{\beta}[t-1] = \bot$, as required.
		\end{itemize}
		\item {\bf Suppose that $h_\alpha[t-1] = m$ for some binary string $m$.} In particular, this means that $s_\alpha$ is in listen mode during round $t-1$, and has exactly one neighbour, say $s_\gamma$, that transmits in round $t-1$ (and this message is $m$). In $K_n$, this means that $s_\gamma$ is the only transmitting node in round $t-1$. By the induction hypothesis, we know that $h_\alpha[0\ldots (t-2)] = h_{\beta}[0 \ldots (t-2)]$, which implies that $s_\beta$ is in listen mode during round $t-1$ as well. It follows that $s_\beta \neq s_\gamma$, so, in $K_n$, $s_\gamma$ is a neighbour of $s_\beta$. As $s_\gamma$ is the only transmitting node, it follows that $s_\beta$ receives $m$ from $s_\gamma$ in round $t-1$. Thus, $h_\beta[t-1] = m$, as required.
	\end{itemize}
	In all cases, we showed that $h_\alpha[t-1] = h_{\beta}[t-1]$, which, together with the induction hypothesis, proves that $h_\alpha[0\ldots (t-1)] = h_{\beta}[0\ldots (t-1)]$. This completes the proof of the claim.
	
	Next, we use the above claim to reach the desired contradiction: that $\mathcal{A}$ does not solve $k$-broadcast. In particular, we prove that no node ever receives $\mu_\alpha$ (i.e., the source message that is initially only possessed by $s_\alpha$). Indeed, in each round $t$, either:
	\begin{itemize}
		\item $s_\alpha$ is in listen mode, in which case $\mu_\alpha$ is not transmitted by $s_\alpha$ in round $t$, or,
		\item $s_\alpha$ is in transmit mode, in which case $s_\beta$ is also in transmit mode (by the above claim), so a collision happens at all nodes in $K_n$. This implies that no node receives $\mu_\alpha$ in round $t$.
	\end{itemize}
	As this contradicts the correctness of $\mathcal{A}$, our assumption that $s_\alpha$ and $s_\beta$ are assigned the same label is incorrect, which concludes the proof that the labels assigned to the $k$ sources by $\lambda$ are distinct.
	
	Finally, in any set of at least $k$ distinct binary strings, there is at least one that has length $\Omega(\log k)$ bits, which proves that the labeling scheme $\lambda$ has length at least $\Omega(\log k)$. As $k \leq n = \Delta+1$, we conclude that $\Omega(\log k) \subseteq \Omega(\min\{\log k,\log \Delta\})$, as desired.
\end{proof}

\subsection{An Example of Better Labeling}\label{better}
In this section, for infinitely many values of $n$, we construct a tree $T_n$ on $n$ nodes with maximum degree $\Delta \in \Theta(\sqrt{n})$ such that, after labeling each node with $O(1)$ bits, the $n$-broadcast task (gossiping) can be solved by a deterministic distributed algorithm. The length of the labeling scheme is significantly than the upper and lower bounds from Theorems \ref{KBsolved} and \ref{existLower}, which for $T_n$ would give $\Theta(\min\{\log k,\log \Delta\}) = \Theta(\log n)$.

Let $n$ be any triangular number greater than 1, i.e., there exists a positive integer $x \geq 2$ such that $n = x(x+1)/2$. The tree $T_n$ consists of: a node $r$, and, for each $i \in \{2,\ldots,x\}$, a node $\ell_i$ and a path of length $i$ with endpoints $\ell_i$ and $r$.  Note that node $r$ has degree $x-1 \in \Theta(\sqrt{n})$, each node in $\{\ell_2,\ldots,\ell_x\}$ has degree 1, and all other nodes have degree 2.

We label each node of $T_n$ with 2 bits: node $r$ is given the label 11, each node in $\{\ell_2,\ldots,\ell_{x-1}\}$ is given the label 01, the node $\ell_x$ is given the label 10, and all other nodes are given the label 00.

To solve $n$-broadcast, the idea is to initiate a Broadcast from node $r$ to send an ``init" message that gets forwarded along the paths towards each leaf $\ell_2,\ldots,\ell_x$, and, when each leaf receives the ``init" message, it sends a ``gather" message back along the path towards $r$. Each time a ``gather" message is forwarded by a node $v$, the node appends its own source message. The fact that the paths have distinct lengths means that the ``gather" messages along each path arrive back at $r$ at different times, which prevents collisions at $r$. When $\ell_x$ sends its ``gather" message, it also includes the string ``last", and when this message is received by $r$, it initiates another Broadcast of a ``spread" message containing all of the source messages. Of all the leaf-to-$r$ paths, the one involving leaf $\ell_x$ is the longest, which means that the ``gather" message containing ``last" is the last one that $r$ receives, and this guarantees that $r$ possesses all of the source messages before initiating the final Broadcast. Algorithm \ref{TnGossipPseudo} gives the pseudocode of the $n$-broadcast algorithm executing at each node in the labeled $T_n$.

We now prove that the algorithm solves $n$-broadcast. For each $i \in \{2,\ldots,x\}$, denote by $P_i$ the path of length $i$ with endpoints $r$ and $\ell_i$. In what follows, all round numbers are referring to $r$'s local clock (which starts at round $t=1$).
\begin{lemma}\label{lem:initbcast}
	For each $i \in \{2,\ldots,x\}$ and each $j \in \{1,\ldots,i\}$, the node at distance $j$ from $r$ in path $P_i$ receives an $\langle ``init" \rangle$ message for the first time in round $j$, and no other node in $P_i - \{r\}$ receives a message containing $``init"$ or $``gather"$ or $``spread"$ for the first time in round $j$.
\end{lemma}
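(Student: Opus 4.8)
The plan is to prove the statement by induction on the round number $j$, tracking the ``init'' wave as it propagates one hop per round along $P_i$ while simultaneously recording which nodes are still silent. For convenience, write $v_0 = r, v_1, \ldots, v_i = \ell_i$ for the nodes of $P_i$ listed by increasing distance from $r$; by construction $v_1,\ldots,v_{i-1}$ all carry label $00$, so, according to Algorithm~\ref{TnGossipPseudo}, each of them forwards ``init'' exactly once, in the round immediately after it first receives ``init''. I would strengthen the statement into an invariant suited to induction: for each $j \in \{1,\ldots,i\}$, \textbf{(A)} node $v_j$ receives ``init'' for the first time in round $j$, and \textbf{(B)} every node $v_{j'}$ with $j' > j$ has received no message at all during rounds $1,\ldots,j$. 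Clause (B) is the crucial addition, since it is exactly what rules out collisions in the inductive step.

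For the base case $j=1$, the only transmitter in round $1$ is $r$, which starts the broadcast of ``init'', and $r$'s unique neighbour inside $P_i$ is $v_1$; hence $v_1$ hears ``init'' with no collision, and this is its first message, while every $v_{j'}$ with $j' \geq 2$ has no transmitting neighbour and so hears nothing. For the inductive step, assuming (A) and (B) for $j-1$ with $2 \le j \le i$, clause (A) tells us that $v_{j-1}$ (an internal node, since $j-1 \le i-1$) first heard ``init'' in round $j-1$ and therefore transmits ``init'' in round $j$. The key point is that $v_j$ hears this transmission cleanly: the neighbours of $v_j$ are exactly $v_{j-1}$ and $v_{j+1}$, and clause (B) for $j-1$ guarantees that $v_{j+1}$ has received nothing so far and hence is silent in round $j$, so $v_{j-1}$ is the unique transmitter adjacent to $v_j$. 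This gives (A) for $j$. For (B) for $j$, note that $v_j$ is only listening in round $j$, so the one transmitter $v_{j-1}$ has neighbours $v_{j-2}$ and $v_j$ only; thus none of $v_{j+1},\ldots,v_i$ has a transmitting neighbour in round $j$, and they again hear nothing.

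It then remains to confirm that no node of $P_i - \{r\}$ other than $v_j$ has a first-time reception in round $j$. The nodes $v_1,\ldots,v_{j-1}$ already received ``init'' in strictly earlier rounds by clause (A), so round $j$ is not a first reception for any of them (even though $v_{j-2}$ re-hears ``init'' when $v_{j-1}$ forwards it), and the nodes $v_{j+1},\ldots,v_i$ receive nothing in round $j$ by clause (B). Finally, I would observe that no ``gather'' or ``spread'' message is even in transit along $P_i$ during rounds $1,\ldots,i$: a ``gather'' can first be produced only once the wavefront reaches the leaf $\ell_i$, i.e. in round $i$, so the earliest ``gather'' transmission on $P_i$ is in round $i+1$; and ``spread'' is not broadcast by $r$ until every ``gather'' has returned, which (since $\ell_x$ lies at distance $x \ge i$ and messages travel at most one hop per round) cannot occur until after round $i$. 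Moreover, because every node of $P_i - \{r\}$ has all of its neighbours inside $P_i$, no message from the other paths can reach it. Hence the only first-time reception anywhere in $P_i - \{r\}$ during round $j$ is $v_j$ receiving ``init''.

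The step I expect to be the main obstacle is the collision-freeness argument in the inductive step, and this is precisely why the induction must carry the auxiliary silence invariant (B): without explicitly knowing that the node one hop ahead of the wavefront has not yet been activated, one cannot exclude a simultaneous transmission that would corrupt the clean one-hop-per-round propagation. The remaining ingredients, namely the earlier activation of $v_1,\ldots,v_{j-1}$ and the fact that the ``gather'' and ``spread'' phases begin only after round $i$, are routine bookkeeping about the timing dictated by Algorithm~\ref{TnGossipPseudo}.
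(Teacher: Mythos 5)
Your proof is correct and follows essentially the same route as the paper's: an induction on $j$ that propagates the ``init'' wavefront one hop per round, where the crucial auxiliary invariant (your clause (B), that nodes ahead of the wavefront have heard nothing and are therefore silent) plays exactly the role of the paper's inductive clause that no node of $P_i - \{r\}$ other than $v_j$ has a first-time reception in round $j$, which is what rules out collisions. The only cosmetic difference is that you split the invariant into ``nodes ahead are completely silent'' plus a separate timing observation that no ``gather'' or ``spread'' can be in transit before round $i+1$, whereas the paper folds both into a single ``no first-time reception'' clause.
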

\begin{proof}
	Consider an arbitrary path $P_i$ for $i \in \{2,\ldots,x\}$. In what follows, we will denote by $v_d$ the node at distance $d$ from $r$ on path $P_i$. We prove the result by induction on $j$.
	
	For the base case, consider $j=1$. Note that node $v_1$ is the only neighbour of $r$ in $P_i$. According to lines \ref{t1cond}-\ref{t1transmit} of the algorithm, the node with label 11 transmits $\langle ``init" \rangle$ in round 1 (and not before). According to our labeling, only the node $r$ has label 11. By the definition of the algorithm, we see that all other transmissions occur in a round immediately following a round in which a message was received for the first time. It follows that $r$ is the only node that transmits in round 1, and no node transmits in an earlier round. It follows that $v_1$ receives an $\langle ``init" \rangle$ message for the first time in round 1, and is the only node in $P_i$ that receives any message.
	
	As induction hypothesis, assume that for some $j \in \{1,\ldots,i-1\}$, for each $j' \in \{1,\ldots,j\}$, node $v_{j'}$ receives an $\langle ``init" \rangle$ message for the first time in round $j'$, and no other node in $P_i - \{r\}$ receives a message containing $``init"$ or $``gather"$ or $``spread"$ for the first time in round $j'$.
	
	Consider the node $v_{j+1}$. By the induction hypothesis, in round $j$, node $v_j$ receives an $\langle ``init" \rangle$ message for the first time. Thus, in round $j+1$, node $v_j$ transmits. Further, by the induction hypothesis, no other node in $P_i - \{r\}$ receives a message containing $``init"$ or $``gather"$ or $``spread"$ for the first time in round $j$, so, by the definition of the algorithm, no other node in $P_i - \{r\}$ transmits in round $j+1$. Therefore, the only transmission that occurs in round $j+1$ by nodes in $P_i - \{r\}$ is the transmission of $\langle ``init" \rangle$ by $v_j$. In particular, it follows that no collision occurs at $v_{j-1}$ and $v_{j+1}$, so they both receive an $\langle ``init" \rangle$ message in round $j+1$. However, by the induction hypothesis, we know that $v_{j-1}$ received an $\langle ``init" \rangle$ message for the first time in round $j-1$, so round $j+1$ is not the first time it receives an $\langle ``init" \rangle$ message. Further, by the induction hypothesis, node $v_{j+1}$ does not receive a message containing $``init"$ or $``gather"$ or $``spread"$ for the first time in any of the rounds $1,\ldots,j$. It follows that $v_{j+1}$ is the only node in $P_i$ that receives an $\langle ``init" \rangle$ message for the first time in round $j+1$, and, no other nodes receive a message containing $``init"$ or $``gather"$ or $``spread"$ for the first time in round $j+1$, as desired.
\end{proof}
For an arbitrary $i \in \{2,\ldots,x\}$, node $\ell_i$ is the unique node in $P_i$ that is distance $i$ from $r$, so the following result follows from Lemma \ref{lem:initbcast} with $j=i$.
\begin{corollary}
	For each $i \in \{2,\ldots,x\}$, node $\ell_i$ receives an $\langle ``init" \rangle$ message in round $i$.
\end{corollary}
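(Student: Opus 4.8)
The plan is to obtain this corollary as an immediate specialization of Lemma \ref{lem:initbcast}, taking the path-position parameter $j$ to be its maximal admissible value $j = i$. First I would fix an arbitrary $i \in \{2,\ldots,x\}$ and recall the definition of $P_i$ from the construction of $T_n$: it is the path of length $i$ whose endpoints are $r$ and $\ell_i$. Since $\ell_i$ is the endpoint of $P_i$ distinct from $r$, it is precisely the node at distance $i$ from $r$ in $P_i$, and, because $P_i$ is a path, it is the \emph{unique} node of $P_i$ at that distance.

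Next I would verify that $j = i$ is a legitimate instantiation of the quantified variable in Lemma \ref{lem:initbcast}, whose statement ranges over $j \in \{1,\ldots,i\}$; the value $i$ clearly lies in this set. Applying the lemma with this choice, the ``node at distance $j$ from $r$ in path $P_i$'' is exactly $\ell_i$, and the lemma guarantees that this node receives an $\langle ``init" \rangle$ message for the first time in round $j = i$. In particular, $\ell_i$ receives an $\langle ``init" \rangle$ message in round $i$, which is exactly the conclusion of the corollary.

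I do not expect any genuine obstacle here: the only points requiring a quick check are that $\ell_i$ is indeed the distance-$i$ node of $P_i$ (immediate from the construction of $T_n$) and that $i$ is a permissible value of the index $j$ in the lemma (immediate since $i \in \{1,\ldots,i\}$). Note that the lemma in fact yields the slightly stronger ``first time'' statement, so the weaker claim of the corollary follows without any additional work.
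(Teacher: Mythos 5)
Your proposal is correct and matches the paper's own derivation exactly: the paper also obtains the corollary by instantiating Lemma \ref{lem:initbcast} with $j=i$ and observing that $\ell_i$ is the unique node of $P_i$ at distance $i$ from $r$. No issues.
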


\begin{lemma}\label{lem:gathercast}
	For each $i \in \{2,\ldots,x\}$ and each $j \in \{1,\ldots,i-1\}$, the node at distance $j$ from $\ell_i$ in path $P_i$ receives a message in round $i+j$ containing $``gather"$ for the first time, the message also contains all source messages of nodes within distance $j-1$ of $\ell_i$, and no other node in $P_i - \{r\}$ receives a message containing $``init"$ or $``gather"$ or $``spread"$ for the first time in round $i+j$.
\end{lemma}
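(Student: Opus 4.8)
The plan is to prove the statement by strong induction on $j$, closely mirroring the structure of the proof of Lemma~\ref{lem:initbcast} but now tracking the ``gather'' wave travelling inward along $P_i$ from $\ell_i$ toward $r$. Fix $i \in \{2,\ldots,x\}$ and, to lighten notation, let $u_j$ denote the node at distance $j$ from $\ell_i$ on $P_i$, so that $u_0 = \ell_i$, $u_i = r$, and $u_j$ coincides with the node $v_{i-j}$ from Lemma~\ref{lem:initbcast}. The inductive hypothesis I would carry is the full statement for all indices $j' \in \{1,\ldots,j\}$ simultaneously: each $u_{j'}$ first receives a gather message in round $i+j'$, that message carries exactly the source messages of the nodes within distance $j'-1$ of $\ell_i$, and no other node of $P_i - \{r\}$ has a first reception of ``init''/``gather''/``spread'' in round $i+j'$.

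For the base case $j=1$, I would invoke the preceding corollary: $\ell_i = u_0$ receives an ``init'' message in round $i$, so by the algorithm it transmits a ``gather'' message (tagged with its own source message, since $\ell_i$ is a leaf) in round $i+1$. Its unique neighbour in $P_i$ is $u_1$, which first received ``init'' in round $i-1$ by Lemma~\ref{lem:initbcast} and has not yet seen any ``gather''; hence $u_1$ receives this ``gather'' for the first time in round $i+1 = i+j$, and the message contains the source messages of the nodes within distance $0$ of $\ell_i$, namely $\ell_i$ itself. I would then confirm that no other node of $P_i - \{r\}$ has a first reception in round $i+1$: the only transmission reaching interior nodes of $P_i$ in that round is $u_0$'s, because every interior transmission is triggered by a first reception in the previous round and, by Lemma~\ref{lem:initbcast}, the ``init'' first-receptions on $P_i$ occur only in rounds $1,\ldots,i$.

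For the inductive step, the key structural fact to exploit (exactly as in Lemma~\ref{lem:initbcast}) is that a node of $P_i - \{r\}$ transmits in round $i+j+1$ only if it had a first reception in round $i+j$. By the inductive hypothesis, the only such node is $u_j$, which received ``gather'' for the first time in round $i+j$ and therefore transmits in round $i+j+1$, forwarding the ``gather'' after appending its own source message; by the source-message part of the hypothesis, this transmission carries the source messages of all nodes within distance $j$ of $\ell_i$. Since no other node of $P_i - \{r\}$ transmits in round $i+j+1$, the transmission is received without collision by both neighbours $u_{j-1}$ and $u_{j+1}$ of $u_j$. The node $u_{j-1}$ already had its first ``gather'' reception in round $i+j-1$ (hypothesis), so this is not new for it; the node $u_{j+1}$ had only received ``init'' (in round $i-j-1$, by Lemma~\ref{lem:initbcast}) and no earlier ``gather'', so round $i+j+1$ is its first ``gather'' reception, with the claimed source contents. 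This establishes the statement for index $j+1$ and completes the induction.

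The step I expect to require the most care is justifying that only $u_j$ transmits in round $i+j+1$, and in particular that the coordinator $r$ stays silent throughout these rounds. Ruling out interior transmitters follows from combining the reactive structure of the algorithm with the ``init''-timing from Lemma~\ref{lem:initbcast}, but $r$ must be handled separately: after its round-$1$ ``init'' transmission, $r$ transmits again only once it receives the ``gather'' tagged ``last'', which originates at $\ell_x$ and reaches $r$ only in round $2x$. Since every round relevant to $P_i - \{r\}$ here is at most $2i-1 \le 2x-1 < 2x$, node $r$ is silent in all of them, so it never interferes with the ``gather'' wave on $P_i$. I would isolate this observation about $r$'s silence before the induction in order to keep the inductive step clean.
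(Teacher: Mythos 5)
Your proposal is correct and follows essentially the same route as the paper's proof: induction on $j$ along $P_i$ from $\ell_i$ toward $r$, using the ``first reception in round $t$ implies transmission in round $t+1$'' structure together with Lemma~\ref{lem:initbcast} and the inductive hypothesis to show that $u_j$ is the only transmitter in round $i+j+1$. Your explicit verification that the coordinator $r$ stays silent until round $2x+1$ (so it cannot cause a collision at its neighbour on $P_i$) is a point the paper's proof leaves implicit, and it is a worthwhile addition.
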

\begin{proof}
	Consider an arbitrary path $P_i$ for $i \in \{2,\ldots,x\}$. In what follows, we will denote by $v_d$ the node at distance $d$ from $\ell_i$ on path $P_i$. We prove the result by induction on $j$.
	
	For the base case, consider $j=1$. Note that node $v_1$ is the only neighbour of $\ell_i$ in $P_i$. By Lemma \ref{lem:initbcast}, node $\ell_i$ receives an $\langle ``init" \rangle$ message for the first time in round $i$. Recall that, by our labeling, node $\ell_i$ is either labeled 01 (if $i \in \{2,\ldots,x-1\}$, or labeled 10 (if $i = x$). In both cases, after receiving an $\langle ``init" \rangle$ message in round $i$, node $\ell_i$ transmits a message containing $``gather"$ in round $i+1$, and also its value of \texttt{knownMsgs}, which contains $\ell_i$'s source message. Also, by Lemma \ref{lem:initbcast}, no other node in $P_i-\{r\}$ receives a message containing $``init"$ or $``gather"$ or $``spread"$ for the first time in round $i$, so, by the definition of the algorithm, no other node in $P_i-\{r\}$ transmits in round $i+1$. It follows that no collision occurs at node $v_1$ in round $i+1$, so it receives the $``gather"$ message in round $i+1$ containing $\ell_i$'s source message, and no other node in $P_i - \{r\}$ receives a message containing $``init"$ or $``gather"$ or $``spread"$ for the first time in round $i+1$.
	
	As induction hypothesis, assume that for some $j \in \{1,\ldots,i-2\}$, for each $j' \in \{1,\ldots,j\}$, node $v_{j'}$ receives a message in round $i+j'$ containing $``gather"$ for the first time, the message also contains all source messages of nodes within distance $j'-1$ of $\ell_i$, and no other node in $P_i - \{r\}$ receives a message containing $``init"$ or $``gather"$ or $``spread"$ for the first time in round $i+j'$.
	
	Consider the node $v_{j+1}$. By the induction hypothesis, in round $i+j$, node $v_j$ receives a message containing $``gather"$ for the first time, and the message also contains all source messages of nodes within distance $j'-1$ of $\ell_i$. By the definition of the algorithm, in round $i+j+1$, node $v_j$ transmits a message containing all of the messages it knows (including its own), so its message contains all source messages of nodes within distance $j$ of $\ell_i$.
	Further, by the induction hypothesis, no other node in $P_i - \{r\}$ receives a message containing $``init"$ or $``gather"$ or $``spread"$ for the first time in round $i+j$, so, by the definition of the algorithm, no other node in $P_i - \{r\}$ transmits in round $i+j+1$. Therefore, the only transmission that occurs in round $i+j+1$ by nodes in $P_i - \{r\}$ is the message transmitted by $v_j$. In particular, it follows that no collision occurs at $v_{j-1}$ and $v_{j+1}$, so they both receive $v_j$'s $``gather"$ message in round $i+j+1$ (and this message contains all source messages of nodes within distance $j$ of $\ell_i$). However, by the induction hypothesis, we know that $v_{j-1}$ received a message containing $``gather"$ for the first time in round $i+j-1$, so round $i+j-1$ is not the first time it receives a message containing $``gather"$. Further, by the induction hypothesis, node $v_{j+1}$ does not receive a message containing $``init"$ or $``gather"$ or $``spread"$ for the first time in any of the rounds $i+1,\ldots,i+j$. It follows that $v_{j+1}$ is the only node in $P_i - \{r\}$ that receives a $``gather"$ message for the first time in round $j+1$, and, no other nodes in $P_i - \{r\}$ receive a message containing $``init"$ or $``gather"$ or $``spread"$ for the first time in round $i+j+1$, as desired.
\end{proof}

By Lemma \ref{lem:gathercast}, for the neighbour $w$ of $r$ on an arbitrary path $P_i$, node $w$ receives a message containing $``gather"$ for the first time in round $2i-1$, and this message contains all source messages of nodes in $P_i - \{r,w\}$. By the definition of the algorithm, in round $2i$, node $w$ transmits a message containing $``gather"$ as well as all of the source messages it knows (including its own), and is the only neighbour to do so since any other neighbour of $r$ is on a path $P_{i'}$ with $i' \neq i$. Namely, for each $i \in \{2,\ldots,x\}$, node $r$ receives a message in round $2i$ containing the source messages of all nodes on path $P_i - \{r\}$. When $i=x$, the message sent by $w$ also contains $``last"$. Hence, we get the following two results.
\begin{corollary}\label{cor:rcvgather}
	For each $i \in \{2,\ldots,x-1\}$, node $r$ receives a message in round $2i$ containing ``gather", and the message contains all source messages of nodes in $P_i - \{r\}$.
\end{corollary}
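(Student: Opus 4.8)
The plan is to obtain the corollary as a direct instantiation of Lemma~\ref{lem:gathercast} at the neighbour of $r$ on the path $P_i$, combined with one additional argument ruling out collisions at $r$. First I would identify the neighbour $w$ of $r$ on $P_i$. Since $P_i$ has length $i$ with endpoints $r$ and $\ell_i$, the node $w$ lies at distance $1$ from $r$ and hence at distance $i-1$ from $\ell_i$. Setting $j = i-1$ in Lemma~\ref{lem:gathercast} (which is permissible since $i \geq 2$ forces $j \geq 1$) tells us that $w$ receives a message containing ``gather'' for the first time in round $i + (i-1) = 2i-1$, and that this message carries the source messages of all nodes within distance $i-2$ of $\ell_i$. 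Translating distances back to $r$, the nodes within distance $i-2$ of $\ell_i$ are exactly the nodes of $P_i$ at distance $2,\ldots,i$ from $r$, i.e., all of $P_i - \{r, w\}$.

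Next I would apply the algorithm's forwarding rule: having received a ``gather'' message for the first time in round $2i-1$, node $w$ transmits in round $2i$ a message containing ``gather'' together with every source message it knows, including its own. Hence $w$'s round-$2i$ transmission carries the source messages of all of $P_i - \{r\}$. It remains only to confirm that this transmission is actually received by $r$, i.e., that no collision occurs at $r$ in round $2i$.

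The hard part — and really the only nontrivial point, since collision-freeness \emph{within} a single path is already guaranteed by Lemma~\ref{lem:gathercast} — is to rule out collisions \emph{across} the different paths meeting at $r$. For this I would use the defining feature of the construction: the paths $P_2, \ldots, P_x$ have pairwise distinct lengths, and each neighbour of $r$ lies on exactly one of them. By the argument of the previous two paragraphs applied to an arbitrary path $P_{i'}$, the neighbour of $r$ on $P_{i'}$ first receives a ``gather'' message in round $2i'-1$ and, by the definition of the algorithm, forwards it to $r$ in round $2i'$; moreover it makes no ``gather'' transmission in any other round. Since the values $2i'$ for distinct $i' \in \{2,\ldots,x\}$ are all distinct, in round $2i$ the only neighbour of $r$ that transmits is $w$. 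Therefore $r$ listens to a single transmitter in round $2i$ and receives $w$'s message, which contains the source messages of all nodes in $P_i - \{r\}$, as claimed. (The restriction $i \leq x-1$ in the statement merely sets aside the path $P_x$, whose neighbour's message additionally contains ``last''; that case is handled by the companion corollary, but the collision analysis is identical.)
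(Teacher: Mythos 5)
Your proposal is correct and follows essentially the same route as the paper: instantiate Lemma~\ref{lem:gathercast} with $j=i-1$ at the neighbour $w$ of $r$, apply the forwarding rule to get $w$'s round-$2i$ transmission carrying all of $P_i-\{r\}$'s source messages, and rule out collisions at $r$ because distinct paths have distinct lengths so their neighbours of $r$ forward ``gather'' in distinct even rounds. Your treatment of the cross-path collision argument is in fact slightly more explicit than the paper's one-clause justification, but it is the same idea.
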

\begin{corollary}\label{cor:rcvlast}
	Node $r$ receives a message in round $2x$ containing both $``gather"$ and $``last"$ for the first time, and the message contains all source messages of nodes in $P_x - \{r\}$.
\end{corollary}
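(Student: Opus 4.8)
The plan is to instantiate Lemma~\ref{lem:gathercast} at the single path $P_x$ with the largest admissible parameter and then trace one additional transmission step to reach $r$. Concretely, I would first apply Lemma~\ref{lem:gathercast} with $i = x$ and $j = x-1$ (the maximum value of $j$ permitted when $i = x$). This identifies the node at distance $x-1$ from $\ell_x$ on $P_x$ --- that is, the neighbour $w$ of $r$ on $P_x$ --- and guarantees that $w$ receives a message containing ``gather'' for the first time in round $x + (x-1) = 2x-1$, and that this message already carries the source messages of every node within distance $x-2$ of $\ell_x$, i.e.\ of every node of $P_x - \{r,w\}$.

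Next I would follow the algorithm for one more round. Since $w$ receives its ``gather'' message for the first time in round $2x-1$, by the definition of the algorithm it transmits in round $2x$ a message containing ``gather'' together with all source messages it knows, including its own; thus its transmitted message carries the source messages of all of $P_x - \{r\}$. To conclude that $r$ actually receives this message in round $2x$, I would rule out a collision at $r$. The only neighbours of $r$ are the first nodes of the paths $P_i$ for $i \in \{2,\ldots,x\}$, and, by Lemma~\ref{lem:gathercast} applied with $j = i-1$ along each path, the neighbour on $P_i$ transmits its ``gather'' message toward $r$ in round $2i$. Because the paths have pairwise distinct lengths, the rounds $2i$ are pairwise distinct, so in round $2x$ the node $w$ is the unique neighbour of $r$ that transmits. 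Hence no collision occurs and $r$ receives $w$'s message in round $2x$, containing the source messages of all of $P_x - \{r\}$.

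The main obstacle is the ``last'' flag, which Lemma~\ref{lem:gathercast} does not track (the lemma records only the ``gather'' string and the accumulated source messages). To handle it, I would argue separately that the string ``last'' is introduced only by $\ell_x$ --- the unique node carrying the special label $10$ --- and is then forwarded along $P_x$ together with the ``gather'' message by the same chain of transmissions analysed above, so that in particular it is present in $w$'s round-$2x$ transmission. Since no node on any other path ever emits ``last'', round $2x$ is the first round in which $r$ receives a message containing both ``gather'' and ``last'', which yields the ``for the first time'' claim and completes the argument.
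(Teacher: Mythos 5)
Your proposal is correct and follows essentially the same route as the paper: instantiate Lemma~\ref{lem:gathercast} at $j=i-1$ to place the ``gather'' message at $r$'s neighbour $w$ on $P_i$ in round $2i-1$, let $w$ forward it in round $2i$, rule out collisions at $r$ via the distinct path lengths, and handle the ``last'' flag (which the lemma does not track) by noting it originates only at $\ell_x$ and rides along the same chain of transmissions. No gaps.
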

Finally, after round $2x$, node $r$ possesses all of the source messages, and initiates a final Broadcast of a message containing all of them. The following result shows that this message is eventually received by all nodes, which immediately implies that the algorithm is correct.

\begin{lemma}
	For each $i \in \{2,\ldots,x\}$ and each $j \in \{1,\ldots,i\}$, the node at distance $j$ from $r$ in path $P_i$ receives a message containing $``spread"$ for the first time in round $2x+j$, the message contains all the source messages, and no other node in $P_i - \{r\}$ receives a message containing $``init"$ or $``gather"$ or $``spread"$ for the first time in round $2x+j$.
\end{lemma}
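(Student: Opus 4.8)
The plan is to mirror the induction used in Lemma \ref{lem:initbcast}, since the final ``spread" broadcast from $r$ behaves exactly like the initial ``init" broadcast: it is merely shifted to begin in round $2x+1$ and carries a different payload. By Corollary \ref{cor:rcvlast}, node $r$ receives a message containing ``last" in round $2x$, at which point it possesses every source message, and by the definition of the algorithm this triggers $r$ to transmit a message containing ``spread" together with all source messages in round $2x+1$. I would carry out an induction on $j$ along a fixed path $P_i$, again denoting by $v_d$ the node at distance $d$ from $r$ on $P_i$.

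Before the induction, the key preliminary step is to confirm that $r$ is the \emph{only} node transmitting in round $2x+1$, so that each neighbour of $r$ receives the ``spread" message without collision. For this I would use the fact that, by the algorithm, every node transmits at most once for each of the ``init" and ``gather" phases, in a round determined by Lemmas \ref{lem:initbcast} and \ref{lem:gathercast}: the last ``init" transmission on any path occurs no later than round $x$, and the last ``gather" transmission (by the neighbour of $r$ on the longest path $P_x$) occurs in round $2x$. Hence no ``init" or ``gather" transmission occurs in round $2x+1$, and no node other than $r$ has yet received ``spread", so $r$ alone transmits in that round.

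For the base case $j=1$, node $v_1$ is the unique neighbour of $r$ on $P_i$, and since $r$ is the only transmitter in round $2x+1$, node $v_1$ receives the ``spread" message (containing all source messages) for the first time in round $2x+1$ with no collision, while no other node of $P_i - \{r\}$ receives a relevant message. The inductive step then replays the argument of Lemma \ref{lem:initbcast} almost verbatim: by the hypothesis $v_j$ first receives ``spread" in round $2x+j$, so it transmits in round $2x+j+1$; by the hypothesis no other node of $P_i - \{r\}$ received a relevant message for the first time in round $2x+j$, so none of them transmits in round $2x+j+1$; thus $v_j$'s transmission reaches both $v_{j-1}$ and $v_{j+1}$ without collision, but $v_{j-1}$ already received ``spread" in round $2x+j-1$, leaving $v_{j+1}$ as the unique first-time recipient in round $2x+j+1$. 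The carried message is unchanged from $r$'s transmission and hence contains all source messages.

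I expect the main obstacle to be the preliminary step of ruling out interference from residual ``init" and ``gather" transmissions during the ``spread" phase; once the timing established by the earlier lemmas guarantees that all such transmissions have ceased by round $2x$, the remaining induction is essentially identical to that of Lemma \ref{lem:initbcast} and requires no new ideas.
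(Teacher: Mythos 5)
Your proposal is correct and follows essentially the same route as the paper: an induction on $j$ along $P_i$ mirroring Lemma~\ref{lem:initbcast}, with the base case anchored by Corollary~\ref{cor:rcvlast} (and Corollary~\ref{cor:rcvgather} for the completeness of the payload) together with the observation that no node other than $r$ transmits in round $2x+1$. The only cosmetic difference is that the paper justifies this last observation by invoking Lemma~\ref{lem:gathercast} for round $2x$, whereas you give a slightly more global timing argument that all ``init'' and ``gather'' transmissions have ceased by then; the substance is the same.
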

\begin{proof}
		Consider an arbitrary path $P_i$ for $i \in \{2,\ldots,x\}$. In what follows, we will denote by $v_d$ the node at distance $d$ from $r$ on path $P_i$. We prove the result by induction on $j$.
	
	For the base case, consider $j=1$. Note that node $v_1$ is the only neighbour of $r$ in $P_i$. By Corollary \ref{cor:rcvlast}, node $r$ receives a message in round $2x$ containing both $``gather"$ and $``last"$ for the first time. By our labeling scheme, node $r$ has label 11, so, by the definition of the algorithm, node $r$ transmits a message in round $2x+1$ containing $``spread"$ and all of the source messages it possesses. Further, it follows from Corollary \ref{cor:rcvgather} that, by round $2x$, node $r$ has received the source messages of all nodes in $T_n$, so $r$'s transmitted message in round $2x+1$ contains all of the source messages. By Lemma \ref{lem:gathercast}, no other node in $P_i-\{r\}$ receives a message containing $``init"$ or $``gather"$ or $``spread"$ for the first time in round $2x$, so no node in $P_i-\{r\}$ transmits in round $2x+1$. Therefore, no collision occurs at $v_1$ in round $2x+1$, so $v_1$ receives $r$'s transmitted message containing $``spread"$ and all of the source messages, and, moreover, no node in $P_i - \{r,v_1\}$ receives a message containing $``init"$ or $``gather"$ or $``spread"$ for the first time in round $2x+1$.
	
	As induction hypothesis, assume that for some $j \in \{1,\ldots,i-1\}$, for each $j' \in \{1,\ldots,j\}$, node $v_{j'}$ receives a message containing $``spread"$ for the first time in round $2x+j'$, the message contains all the source messages, and no other node in $P_i - \{r\}$ receives a message containing $``init"$ or $``gather"$ or $``spread"$ for the first time in round $2x+j'$.
	
	Consider the node $v_{j+1}$. By the induction hypothesis, in round $2x+j$, node $v_j$ receives a message containing $``spread"$ for the first time. Thus, in round $2x+j+1$, node $v_j$ transmits. Further, by the induction hypothesis, no other node in $P_i - \{r\}$ receives a message containing $``init"$ or $``gather"$ or $``spread"$ for the first time in round $2x+j$, so, by the definition of the algorithm, no other node in $P_i - \{r\}$ transmits in round $2x+j+1$. Therefore, the only transmission that occurs in round $2x+j+1$ by nodes in $P_i - \{r\}$ is the transmission of a message containing ``spread" by $v_j$, and this transmission contains all the source messages. In particular, it follows that no collision occurs at $v_{j-1}$ and $v_{j+1}$, so they both receive $v_j$'s message in round $2x+j+1$. However, by the induction hypothesis, we know that $v_{j-1}$ received a message containing $``spread"$ for the first time in round $2x+j-1$, so round $2x+j+1$ is not the first time it receives an message containing $``spread"$. Further, by the induction hypothesis, node $v_{j+1}$ does not receive a message containing $``init"$ or $``gather"$ or $``spread"$ for the first time in any of the rounds $2x+1,\ldots,2x+j$. It follows that $v_{j+1}$ is the only node in $P_i - \{r\}$ that receives a message containing $``spread"$ for the first time in round $2x+j+1$, and, no other nodes in $P_i-\{r\}$ receive a message containing $``init"$ or $``gather"$ or $``spread"$ for the first time in round $2x+j+1$, as desired.
\end{proof}

\begin{algorithm}[H]
	\small
	\caption{The $n$-broadcast algorithm at each node $v$ in labeled $T_n$}
	\label{TnGossipPseudo}
	\begin{algorithmic}[1]
		\Statex {\color{gray} \% Node $v$ has a source message $\mu_v$, the node's two-bit label is stored in \texttt{label}}
		\State $\texttt{knownMsgs} \leftarrow \{\mu_v\}$
		\For{each round $t \geq 1$}
		\If{$\texttt{label} == 11$}{\color{gray}\algindent \% this is node $r$}
			\If{$t == 1$}\label{t1cond}
				\State transmit $\langle ``init" \rangle$\label{t1transmit}
			\ElsIf{received $\langle \textrm{``gather"},msgs \rangle$ in round $t-1$}
				\State $\texttt{knownMsgs} \leftarrow \texttt{knownMsgs} \cup msgs$
			\ElsIf{received $\langle \textrm{``gather"},\textrm{``last"},msgs \rangle$ in round $t\!-\!1$ for the first time}
				\State $\texttt{knownMsgs} \leftarrow \texttt{knownMsgs} \cup msgs$
				\State transmit $\langle \textrm{``spread"},\texttt{knownMsgs} \rangle$
			\Else
				\State listen for a message
			\EndIf
		\EndIf
		\If{$\texttt{label} == 01$}{\color{gray}\algindent \% this is one of the nodes $\ell_2,\ldots,\ell_{x-1}$}
			\If{received $\langle \textrm{``init"} \rangle$ in round $t-1$ for the first time}
				\State transmit $\langle \textrm{``gather"},\texttt{knownMsgs} \rangle$
			\ElsIf{received $\langle \textrm{``spread"},msgs\rangle$ in round $t-1$ for the first time}
				\State $\texttt{knownMsgs} \leftarrow \texttt{knownMsgs} \cup msgs$
				\State terminate()
			\Else
				\State listen for a message
			\EndIf
		\EndIf
		\If{$\texttt{label} == 10$}{\color{gray}\algindent \% this is node $\ell_{x}$}
			\If{received $\langle \textrm{``init"} \rangle$ in round $t-1$ for the first time}
			\State transmit $\langle \textrm{``gather"},\textrm{``last"},\texttt{knownMsgs} \rangle$
			\ElsIf{received $\langle \textrm{``spread"},msgs \rangle$ in round $t-1$ for the first time}
			\State $\texttt{knownMsgs} \leftarrow \texttt{knownMsgs} \cup msgs$
			\State terminate()
			\Else
			\State listen for a message
			\EndIf
		\EndIf
		\If{$\texttt{label} == 00$}{\color{gray}\algindent \% this is an internal node on an $r$-to-$\ell_{i}$ path}
			\If{received $\langle \textrm{``init"} \rangle$ in round $t-1$ for the first time}
				\State transmit $\langle \textrm{``init"}\rangle$
			\ElsIf{received $\langle \textrm{``gather"},msgs \rangle$ in round $t-1$ for the first time}
				\State $\texttt{knownMsgs} \leftarrow \texttt{knownMsgs} \cup msgs$
				\State transmit $\langle \textrm{``gather"},\texttt{knownMsgs} \rangle$
			\ElsIf{received $\langle \textrm{``gather"},\textrm{``last"},msgs \rangle$ in round $t\!-\!1$ for the first time}
				\State $\texttt{knownMsgs} \leftarrow \texttt{knownMsgs} \cup msgs$
				\State transmit $\langle \textrm{``gather"},\textrm{``last"},\texttt{knownMsgs} \rangle$
			\ElsIf{received $\langle \textrm{``spread"},msgs \rangle$ in round $t-1$ for the first time}
				\State $\texttt{knownMsgs} \leftarrow \texttt{knownMsgs} \cup msgs$
				\State transmit $\langle \textrm{``spread"},\texttt{knownMsgs} \rangle$
				\State terminate()
			\Else
				\State listen for a message
			\EndIf
		\EndIf
	
		\EndFor
	\end{algorithmic}
\end{algorithm}

\begin{corollary}
	There exists a $O(1)$-bit labeling scheme and a deterministic distributed algorithm that will solve $k$-broadcast in $T_n$, for all $k \in \{1,\ldots,n\}$.
\end{corollary}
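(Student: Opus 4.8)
The plan is to reduce $k$-broadcast to the gossiping result just established, exploiting the fact that both the labeling scheme and the algorithm are essentially oblivious to which nodes are sources. First I would observe that the 2-bit labeling scheme defined for $T_n$ makes no reference whatsoever to the set of sources: it depends only on the tree structure (the identity of $r$, the leaves $\ell_2,\ldots,\ell_x$, and the internal path nodes). Consequently, the very same $O(1)$-bit scheme can be used for any $k \in \{1,\ldots,n\}$ and any choice of the $k$ source nodes, with no change to its length.

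For the algorithm, I would run Algorithm \ref{TnGossipPseudo} with a single change to line 1: a node sets $\texttt{knownMsgs} \leftarrow \{\mu_v\}$ if it is a source and $\texttt{knownMsgs} \leftarrow \emptyset$ otherwise. This is well-defined in the distributed model, since a node can detect whether it is a source (a source node possesses a source message while a non-source node does not), and it requires no additional label bits.

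The crux of the argument is that every transmit/listen decision in Algorithm \ref{TnGossipPseudo}, and the round in which it is taken, depends only on the node's label and on the pattern of first-time receipts of messages tagged $``init"$, $``gather"$, or $``spread"$ --- never on the payload \texttt{knownMsgs} carried inside those messages. Therefore the entire transmission schedule and the collision-free message flow established above (the forward $``init"$ wave along each $P_i$, the staggered $``gather"$ waves that return to $r$ at the distinct times $2i$, and the final $``spread"$ wave) are completely unaffected by altering the payloads. In particular, Lemmas \ref{lem:initbcast} and \ref{lem:gathercast}, together with Corollaries \ref{cor:rcvgather} and \ref{cor:rcvlast} and the final lemma, all hold verbatim.

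It then remains only to track the contents of \texttt{knownMsgs}. Along each path $P_i$, the $``gather"$ message accumulates exactly the (possibly empty) source messages of the nodes it traverses, so upon reaching $r$ it carries precisely the source messages of the sources lying on $P_i$; aggregating over all paths, $r$ possesses $\bigcup_{v}(\text{source message of } v) = \{\mu_1,\ldots,\mu_k\}$ after round $2x$, and the final $``spread"$ broadcast delivers this exact set to every node --- which is precisely the condition for $k$-broadcast. I do not anticipate a genuine obstacle: because the flow lemmas are payload-independent, the only thing left to check is the bookkeeping of which source messages are collected, and this is immediate from the union structure. The case $k=n$ is exactly the gossiping result already proved.
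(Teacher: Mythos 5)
Your proposal is correct and matches the paper's intended argument: the corollary is stated without an explicit proof precisely because it follows from the preceding lemmas by having non-sources carry an empty (blank) source message, which is exactly the reduction you describe, and your observation that the control flow of Algorithm \ref{TnGossipPseudo} is payload-independent is the key point that makes the flow lemmas carry over verbatim. No gaps.
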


\section{Labeling Schemes and Algorithms for Gossiping in Trees}\label{trees}
In this section, we restrict to the class of trees and give an optimal labeling scheme that allows us to solve gossiping. (This also applies to $k$-broadcast for each $k \in \{1,\ldots,n-1\}$, as we could use the same algorithm and have the non-sources operate with a blank source message.) The labeling scheme has length $O(\log D(G))$ bits for every tree $G$, where $D(G)$ is the distinguishing number of $G$. First, we present the lower bound that demonstrates that this quantity is optimal: any labeling scheme that is sufficient for gossiping to be solved in a graph $G$ must use labels of size $\Omega(\log D(G))$. In fact, the proof of this lower bound also works in the case of $k$-broadcast for each $k \in \{2,\ldots,n-1\}$, however, only under the additional condition that the source nodes are not known when the labeling scheme is applied.

Before proceeding, we review some relevant graph-theoretic concepts that will be used to derive both the upper and lower bounds. Definition \ref{distdefs} reviews some terminology regarding distances, Definition \ref{autodefs} reviews definitions relating to graph automorphisms, and Definition \ref{distinguishdefs} defines the distinguishing number.

\begin{definition}\label{distdefs}
	Let $G = (V,E)$ be a graph. For every $u,v \in V$, the \emph{distance between $u$ and $v$}, denoted by $d(u,v)$, is the length of a shortest path between $u$ and $v$ in $G$. For any $v \in V$, the \emph{eccentricity of $v$}, denoted by $\mathrm{ecc}(v)$, is the largest distance from $v$ to any other node in $G$, i.e., $\max_{u \in V}\{d(u,v)\}$. The \emph{radius of $G$}, denoted by $\mathrm{rad}(G)$, is the smallest node eccentricity in $G$, i.e., $\min_{v \in V}\{ecc(v)\}$. The set of nodes with minimum eccentricity, i.e., $\{v \in V\ |\ \mathrm{ecc}(v) = \mathrm{rad}(G)\}$, is called the \emph{center of $G$}, and is denoted by $\mathrm{center}(G)$. For any node $v \in V$, we define the \emph{distance between $v$ and $\mathrm{center}(G)$} to be $\min_{u \in \mathrm{center}(G)} d(u,v)$.
\end{definition}

\begin{definition}\label{autodefs}
	Let $G = (V,E)$ be a graph. A bijection $\phi : V \rightarrow V$ is called an \emph{automorphism of $G$} if, for every $u,v \in V$, we have that $\{u,v\} \in E$ if and only if $\{\phi(u),\phi(v)\} \in E$. An automorphism $\phi$ is \emph{non-trivial} if there exists a $v \in V$ such that $\phi(v) \neq v$. For a fixed graph $G$, the set of all automorphisms of $G$ is denoted by $\mathrm{Aut}(G)$.
\end{definition}

\begin{definition}[Albertson and Collins \cite{albertsoncollins}]\label{distinguishdefs} 
	A labeling $\rho : V \rightarrow \{1,\ldots,c\}$ is called \emph{$c$-distinguishing} if, for every non-trivial $\phi \in \mathrm{Aut}(G)$, there exists $v \in V$ such that $\rho(v) \neq \rho(\phi(v))$. The \emph{distinguishing number of $G$}, denoted by $D(G)$, is the smallest integer $c$ such that $G$ has a labeling that is $c$-distinguishing.
\end{definition}

\subsection{Lower Bound}
First, we prove the following three facts that demonstrate various properties preserved by graph automorphisms: distance between nodes, $\mathrm{center}(G)$, and distance to $\mathrm{center}(G)$.

\begin{proposition}\label{preservedist}
	For any graph $G = (V,E)$, any $x,y \in V$, and any $\phi \in \mathrm{Aut}(G)$, we have $d(x,y) = d(\phi(x),\phi(y))$.
\end{proposition}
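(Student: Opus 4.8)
The plan is to prove that graph distance is preserved under automorphisms by showing that a shortest path is mapped to a walk of equal length, which establishes $d(\phi(x),\phi(y)) \le d(x,y)$, and then applying the same reasoning to $\phi^{-1}$ to obtain the reverse inequality. The key structural fact I would exploit is that an automorphism preserves adjacency in both directions, so it carries paths to paths of the same length.

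First I would take any shortest path between $x$ and $y$ in $G$, say a sequence of nodes $x = w_0, w_1, \ldots, w_{\ell} = y$ where $\ell = d(x,y)$ and each consecutive pair $\{w_{i-1},w_i\}$ is an edge of $G$. By the definition of an automorphism (Definition \ref{autodefs}), since each $\{w_{i-1},w_i\} \in E$, we have $\{\phi(w_{i-1}),\phi(w_i)\} \in E$ as well. Hence $\phi(x) = \phi(w_0), \phi(w_1), \ldots, \phi(w_{\ell}) = \phi(y)$ is a walk of length $\ell$ from $\phi(x)$ to $\phi(y)$; since a shortest path is no longer than any walk, this gives $d(\phi(x),\phi(y)) \le \ell = d(x,y)$.

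For the reverse direction, I would observe that the inverse map $\phi^{-1}$ is also an automorphism of $G$ (the inverse of a bijection that preserves adjacency in both directions is again such a bijection). Applying the inequality just proved to $\phi^{-1}$ and to the nodes $\phi(x)$ and $\phi(y)$ yields $d(\phi^{-1}(\phi(x)),\phi^{-1}(\phi(y))) \le d(\phi(x),\phi(y))$, i.e., $d(x,y) \le d(\phi(x),\phi(y))$. Combining the two inequalities gives $d(x,y) = d(\phi(x),\phi(y))$, as required.

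I do not anticipate a genuine obstacle here, as the argument is routine; the only point deserving slight care is the justification that $\phi^{-1} \in \mathrm{Aut}(G)$, which follows immediately from the symmetric, biconditional form of the adjacency-preservation condition in Definition \ref{autodefs}. One could alternatively avoid invoking the inverse by arguing directly that a shortest $\phi(x)$--$\phi(y)$ path of length strictly less than $d(x,y)$ would, upon applying $\phi^{-1}$, produce an $x$--$y$ walk shorter than $d(x,y)$ and thus contradict minimality; either framing completes the proof cleanly.
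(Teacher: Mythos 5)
Your proof is correct and follows essentially the same route as the paper's: map a shortest path forward under $\phi$ to get one inequality, then pull a shortest path back (the paper does this by writing each node as $\phi(\phi^{-1}(w_i))$ and using the biconditional adjacency condition, which is your $\phi^{-1}$ argument in slightly different clothing) to get the other. No gaps.
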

\begin{proof}
	Let $d(x,y) = m$, and let $d(\phi(x),\phi(y)) = m'$.
	
	By definition of distance, $d(x,y) = m$ means that there is a path $(v_1,\ldots,v_{m+1})$ where $x=v_1$, $y=v_{m+1}$, and $\{v_i,v_{i+1}\} \in E$ for each $i \in \{1,\ldots,m\}$. By the definition of automorphism, $\phi$ preserves vertex adjacency, i.e., $\{u,v\} \in E \Leftrightarrow \{\phi(u),\phi(v)\} \in E$, so there is a path $(\phi(v_1),\ldots,\phi(v_{m+1}))$ where $x=\phi(v_1)$, $y=\phi(v_{m+1})$, and $\{\phi(v_i),\phi(v_{i+1})\} \in E$ for each $i \in \{1,\ldots,m\}$. This provides a path between $\phi(x)$ and $\phi(y)$ with length $m$, which proves that $d(\phi(x),\phi(y)) = m' \geq m$. 
	
	 By definition of distance, $d(\phi(x),\phi(y)) = m'$ means that there is a path $(w_1,\ldots,w_{m'+1})$ where $\phi(x)=w_1$, $\phi(y)=w_{m+1}$, and $\{w_i,w_{i+1}\} \in E$ for each $i \in \{1,\ldots,m\}$. As $\phi$ is a bijection, we can re-write each $w_i = \phi(\phi^{-1}(w_i))$, and represent the path between $\phi(x)$ and $\phi(y)$ as $(\phi(\phi^{-1}(w_1)),\ldots,\phi(\phi^{-1}(w_{m'+1})))$. By the definition of automorphism, $\phi$ preserves vertex adjacency, i.e., $\{u,v\} \in E \Leftrightarrow \{\phi(u),\phi(v)\} \in E$, so there is a path $(\phi^{-1}(w_1),\ldots,\phi^{-1}(w_{m'+1}))$ where $x=\phi^{-1}(w_1)$, $y=\phi^{-1}(w_{m+1})$, and $\{\phi^{-1}(w_i),\phi^{-1}(w_{i+1})\} \in E$ for each $i \in \{1,\ldots,m\}$. This provides a path between $x$ and $y$ with length $m'$, which proves that $d(x,y) = m \geq m'$. 
	
	Altogether, we have shown that $m = m'$, as desired.
\end{proof}

\begin{proposition}\label{preservecenter}
	For any graph $G = (V,E)$ and any $\phi \in \mathrm{Aut}(G)$, we have $v \in \mathrm{center}(G)$ if and only if $\phi(v) \in \mathrm{center}(G)$.
\end{proposition}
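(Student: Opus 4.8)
The plan is to show that $\mathrm{center}(G)$ is invariant under any automorphism $\phi$, and the natural tool is the eccentricity function together with Proposition~\ref{preservedist}, which guarantees that $\phi$ preserves all pairwise distances. Since membership in $\mathrm{center}(G)$ is defined purely in terms of eccentricity (a node $v$ is in the center exactly when $\mathrm{ecc}(v) = \mathrm{rad}(G)$, the minimum eccentricity), it suffices to prove that $\phi$ preserves eccentricity, i.e., $\mathrm{ecc}(\phi(v)) = \mathrm{ecc}(v)$ for every $v \in V$.

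\medskip

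\noindent\textbf{First} I would fix an arbitrary $v \in V$ and compute $\mathrm{ecc}(\phi(v)) = \max_{u \in V} d(\phi(v), u)$. The key observation is that since $\phi$ is a bijection, as $u$ ranges over all of $V$, so does $\phi^{-1}(u)$; equivalently, $\{\phi(w) : w \in V\} = V$. This lets me re-index the maximum: writing each $u = \phi(w)$, I get
\[
\mathrm{ecc}(\phi(v)) = \max_{u \in V} d(\phi(v), u) = \max_{w \in V} d(\phi(v), \phi(w)).
\]
Now I apply Proposition~\ref{preservedist}, which gives $d(\phi(v), \phi(w)) = d(v, w)$ for every $w$. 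Substituting yields $\max_{w \in V} d(v,w) = \mathrm{ecc}(v)$, so $\mathrm{ecc}(\phi(v)) = \mathrm{ecc}(v)$, establishing that eccentricity is automorphism-invariant.

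\medskip

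\noindent\textbf{Next}, I would conclude the statement about the center. Taking the minimum over all nodes, the fact that $\phi$ is a bijection that preserves eccentricity immediately gives $\mathrm{rad}(G) = \min_{v} \mathrm{ecc}(v) = \min_{v} \mathrm{ecc}(\phi(v)) = \mathrm{rad}(G)$ (a consistency check, not new content), and more importantly: $v \in \mathrm{center}(G)$ means $\mathrm{ecc}(v) = \mathrm{rad}(G)$, which by the invariance just proved is equivalent to $\mathrm{ecc}(\phi(v)) = \mathrm{rad}(G)$, i.e., $\phi(v) \in \mathrm{center}(G)$. Since every step is an equivalence, this proves both directions of the ``if and only if'' simultaneously.

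\medskip

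\noindent\textbf{I do not expect any serious obstacle here}; the result is essentially a one-line consequence of Proposition~\ref{preservedist} once the re-indexing of the maximum is set up correctly. The only point requiring a modicum of care is the bijectivity argument justifying that the maximum over $u$ can be rewritten as a maximum over $\phi(w)$ without changing its value — this is where surjectivity of $\phi$ is genuinely used, and a reader might balk if it is glossed over. I would state that re-indexing explicitly rather than treating it as obvious.
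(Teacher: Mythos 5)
Your proof is correct, and it takes a slightly different (and in fact tighter) route than the paper's. The paper argues each direction of the equivalence separately: for the forward direction it picks a witness $u$ with $d(u,v) = \mathrm{rad}(G)$, maps it through $\phi$ to conclude $\mathrm{ecc}(\phi(v)) \geq \mathrm{rad}(G)$, and then appeals to the claim that every node has eccentricity at most $\mathrm{rad}(G)$ to upgrade this to equality (the converse direction repeats this with $\phi^{-1}$). That last appeal is the weak point of the paper's write-up: since $\mathrm{rad}(G)$ is the \emph{minimum} eccentricity, the inequality $\mathrm{ecc}(\phi(v)) \geq \mathrm{rad}(G)$ is trivially true for every node, and what is actually needed is the upper bound $\mathrm{ecc}(\phi(v)) \leq \mathrm{ecc}(v)$. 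Your re-indexing of the maximum, $\max_{u \in V} d(\phi(v),u) = \max_{w \in V} d(\phi(v),\phi(w)) = \max_{w \in V} d(v,w)$, supplies exactly that missing upper bound and delivers the full identity $\mathrm{ecc}(\phi(v)) = \mathrm{ecc}(v)$ in one step, from which both directions of the biconditional follow simultaneously. Your instinct to make the surjectivity step explicit is well placed --- it is the load-bearing part of the argument, and it is precisely the part the paper's version handles only implicitly.
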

\begin{proof}
	Suppose that $v \in \mathrm{center}(G)$. By definition, $v \in \mathrm{center}(G)$ implies that there exists a node $u$ such that $d(u,v) = \mathrm{rad}(G)$. By Proposition \ref{preservedist}, we know $d(u,v) = d(\phi(u),\phi(v))$, i.e., there exists a node (namely, $\phi(u)$) at distance $\mathrm{rad}(G)$ from $\phi(v)$. It follows that $\mathrm{ecc}(\phi(v)) \geq \mathrm{rad}(G)$. By the definition of $\mathrm{rad}(G)$, all nodes have eccentricity at most $\mathrm{rad}(G)$, so we conclude that $\mathrm{ecc}(\phi(v)) = \mathrm{rad}(G)$. This means that $\phi(v) \in \mathrm{center}(G)$.
	
	Conversely, suppose that $\phi(v) \in \mathrm{center}(G)$. By definition, $\phi(v) \in \mathrm{center}(G)$ implies that there exists a node $w$ such that $d(w,\phi(v)) = \mathrm{rad}(G)$. As $\phi$ is a bijection, we can write $\phi(\phi^{-1}(w))$, so $d(\phi(\phi^{-1}(w)),\phi(v)) = \mathrm{rad}(G)$. By Proposition \ref{preservedist}, we know $d(\phi(\phi^{-1}(w)),\phi(v)) = d(\phi^{-1}(w),v)$, i.e., there exists a node (namely, $\phi^{-1}(w)$) at distance $\mathrm{rad}(G)$ from $v$. It follows that $\mathrm{ecc}(v) \geq \mathrm{rad}(G)$. By the definition of $\mathrm{rad}(G)$, all nodes have eccentricity at most $\mathrm{rad}(G)$, so we conclude that $\mathrm{ecc}(v) = \mathrm{rad}(G)$. This means that $v \in \mathrm{center}(G)$.
\end{proof}
\begin{corollary}\label{preservedistcenter}
	For any graph $G = (V,E)$, any $\phi \in \mathrm{Aut}(G)$, and any $v \in V$, the distance between $v$ and $\mathrm{center}(G)$ is equal to the distance between $\phi(v)$ and $\mathrm{center}(G)$.
\end{corollary}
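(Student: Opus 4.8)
The goal is to prove Corollary \ref{preservedistcenter}: for any graph $G$, any automorphism $\phi$, and any node $v$, the distance from $v$ to $\mathrm{center}(G)$ equals the distance from $\phi(v)$ to $\mathrm{center}(G)$. Recall that the distance between a node $v$ and $\mathrm{center}(G)$ is defined as $\min_{u \in \mathrm{center}(G)} d(u,v)$. So the statement to prove is $\min_{u \in \mathrm{center}(G)} d(u,v) = \min_{w \in \mathrm{center}(G)} d(w,\phi(v))$.

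The plan is to show the two minimizations range over the same set of distance-values, exploiting the two facts already established: Proposition \ref{preservedist} (automorphisms preserve pairwise distances) and Proposition \ref{preservecenter} (automorphisms map the center bijectively onto itself). The key observation is that $\phi$ restricts to a bijection from $\mathrm{center}(G)$ onto $\mathrm{center}(G)$. Indeed, Proposition \ref{preservecenter} guarantees that $u \in \mathrm{center}(G)$ if and only if $\phi(u) \in \mathrm{center}(G)$, and since $\phi$ is a bijection on all of $V$, its restriction to $\mathrm{center}(G)$ is an injection whose image is contained in $\mathrm{center}(G)$; combined with the fact that every center node is the $\phi$-image of a center node, this restriction is a bijection of $\mathrm{center}(G)$ onto itself.

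Granting this, I would prove the equality by a change of variables in the minimization. For each $u \in \mathrm{center}(G)$, Proposition \ref{preservedist} gives $d(u,v) = d(\phi(u),\phi(v))$. As $u$ ranges over $\mathrm{center}(G)$, the node $\phi(u)$ ranges over all of $\mathrm{center}(G)$ (by the bijection just described). Therefore the two sets of distances $\{d(u,v) : u \in \mathrm{center}(G)\}$ and $\{d(w,\phi(v)) : w \in \mathrm{center}(G)\}$ are identical: every element of the first equals $d(\phi(u),\phi(v))$ for the corresponding $w = \phi(u)$, and every $w \in \mathrm{center}(G)$ arises as $\phi(u)$ for some $u \in \mathrm{center}(G)$. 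Since these sets coincide, their minima coincide, giving exactly the claimed equality $\min_{u \in \mathrm{center}(G)} d(u,v) = \min_{w \in \mathrm{center}(G)} d(w,\phi(v))$.

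This corollary is essentially a bookkeeping consequence of the two preceding propositions, so I do not anticipate any genuine obstacle. The only point requiring care is making the change of variables rigorous — specifically, justifying that $\phi$ permutes $\mathrm{center}(G)$ rather than merely mapping it into itself, which is where surjectivity (and hence the bijectivity of $\phi$ on all of $V$) must be invoked. Once that permutation is in hand, the equality of the two minima is immediate, and no distance computations are needed beyond citing Proposition \ref{preservedist}.
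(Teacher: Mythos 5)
Your proposal is correct and uses essentially the same ingredients as the paper's proof: both rely on Proposition \ref{preservedist} to transport distances and Proposition \ref{preservecenter} to see that $\phi$ permutes $\mathrm{center}(G)$. The paper packages this as two inequalities ($m' \leq m$ via a witness $u$ pushed forward by $\phi$, and $m \leq m'$ via a witness pulled back by $\phi^{-1}$), which is just the unrolled form of your change-of-variables argument showing the two minimized sets coincide.
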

\begin{proof}
	Let $m$ be the distance between $v$ and $\mathrm{center}(G)$. Let $m'$ be the distance between $\phi(v)$ and $\mathrm{center}(G)$.
	
	Let $u$ be any node in $\mathrm{center}(G)$ such that $d(u,v) = m$. By Proposition \ref{preservedist}, we get that $d(\phi(u),\phi(v)) = d(u,v) = m$, and, by Proposition \ref{preservecenter}, node $\phi(u) \in \mathrm{center}(G)$. It follows that the distance between $\phi(v)$ and $\mathrm{center}(G)$ is at most $m$, i.e., $m' \leq m$.
	
	 Let $w$ be any node in $\mathrm{center}(G)$ such that $d(w,\phi(v)) = m'$. Since $\phi$ is a bijection, we can write $w = \phi(\phi^{-1}(w))$, so $d(\phi(\phi^{-1}(w)),\phi(v)) = m'$. By Proposition \ref{preservedist}, we get that $d(\phi^{-1}(w),v) = d(\phi(\phi^{-1}(w)),\phi(v)) = m'$, and, by Proposition \ref{preservecenter}, node $\phi^{-1}(w) \in \mathrm{center}(G)$. It follows that the distance between $v$ and $\mathrm{center}(G)$ is at most $m'$, i.e., $m \leq m'$.
	
	Altogether, we have shown that $m = m'$, as desired.
\end{proof}

%\begin{proposition}\label{prop:preserves}
%	Consider any graph $G$, any labeling $\lambda$ of $G$, and any automorphism $\phi \in \textrm{Aut}(G)$. If $\phi$ preserves $\lambda$, then $\lambda(x) = \lambda(y)$ for every $x,y \in V(T)$ such that $x \equiv_\phi y$.
%\end{proposition}
%\begin{proof}
%	We prove the contrapositive of the desired result. Suppose that $x,y \in V(G)$ are vertices such that $x \equiv_\phi y$ and $\lambda(x) \neq \lambda(y)$. By the definition of $x \equiv_\phi y$, there exists a positive integer $k$ such that $\phi^k(x) = y$. Consider the smallest such $k$. Thus, $\lambda(x) \neq \lambda(y)$ implies that $\lambda(x) \neq \lambda(\phi^k(x))$. By the minimality of $k$, it follows that $\lambda(x) = \lambda(\phi^{k-1}(x))$. Taking $\alpha$ to be the vertex $\phi^{k-1}(x)$, we see that $\lambda(\phi(\alpha)) = \lambda(\phi(\phi^{k-1}(x)) = \lambda(\phi^k(x)) \neq \lambda(x) = \lambda(\phi^{k-1}(x)) = \lambda(\alpha)$. Namely, we have shown that there exists a vertex $\alpha$ such that $\lambda(\phi(\alpha)) \neq \lambda(\alpha)$, which demonstrates that $\phi$ does not preserve $\lambda$.
%\end{proof}

	Next, we prove a structural result about any non-trivial automorphism $\phi$ of a tree. At a high level, we prove that for any node $x$ not fixed by $\phi$, the path with endpoints $x$ and $\phi(x)$ exhibits a reflective symmetry about the center of the path.
	%In particular, for any tree $G$ and any non-trivial automorphism $\phi \in \mathrm{Aut}(G)$, consider any node $x$ such that $x \neq \phi(x)$. Since $G$ is a tree, there exists a unique path with endpoints $x$ and $\phi(x)$, whose length we denote by $\ell$. Let $(v_1,\ldots,v_{\ell+1})$ be the sequence of nodes along this path, where $v_1 = x$ and $v_{\ell+1} = \phi(x)$. We prove that this path has a reflective symmetry with respect to $\phi$, i.e., $v_{\ell+1} = \phi(v_1)$, $v_\ell = \phi(v_2)$, etc.
\begin{lemma}\label{pathsymmetry}
	For any tree $G$ and any non-trivial automorphism $\phi \in \mathrm{Aut}(G)$, consider any node $x$ such that $x \neq \phi(x)$. Let $\ell \geq 1$ be the length of the path with endpoints $x$ and $\phi(x)$. Let $v_1 = x$ and let $v_{\ell+1} = \phi(x)$, and denote by   $(v_1,\ldots,v_{\ell+1})$ the sequence of nodes along the path. For each $i \in \{0,\ldots,\lfloor \ell/2 \rfloor\}$, we have that $v_{\ell+1-i} = \phi(v_{1+i})$.
\end{lemma}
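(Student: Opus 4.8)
The plan is to reinterpret the claimed identity as a statement that two geodesics sharing an endpoint agree on a long common prefix. Reading the path $P = (v_1,\ldots,v_{\ell+1})$ backwards from $\phi(x) = v_{\ell+1}$ gives the geodesic from $\phi(x)$ to $x$, while applying $\phi$ to $P$ yields the geodesic $(\phi(v_1),\ldots,\phi(v_{\ell+1}))$ from $\phi(x)$ to $\phi^2(x)$ (note $\phi(v_1) = \phi(x)$). The identity $v_{\ell+1-i} = \phi(v_{1+i})$ for $i \le \lfloor \ell/2\rfloor$ is then exactly the assertion that these two geodesics out of $\phi(x)$ agree on their first $\lfloor \ell/2\rfloor + 1$ nodes. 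In a tree, two geodesics from a common node $\phi(x)$ coincide precisely up to the node $w$ where the path from $x$ to $\phi^2(x)$ first meets $P$, i.e.\ up to the median $w$ of $\{x,\phi(x),\phi^2(x)\}$, and the common prefix has length $d(\phi(x),w) = \frac12\big(d(\phi(x),x) + d(\phi(x),\phi^2(x)) - d(x,\phi^2(x))\big)$. By Proposition~\ref{preservedist}, $\phi$ preserves distances, so $d(\phi(x),\phi^2(x)) = d(x,\phi(x)) = \ell$; writing $\delta = d(x,\phi^2(x))$, the common prefix has length $\ell - \delta/2$. Hence it suffices to show $\ell - \delta/2 \ge \lfloor \ell/2\rfloor$, for which $\delta \le \ell$ is enough (and $w = v_{1+\delta/2}$ sits on $P$ as needed).

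The crux is therefore the inequality $d(x,\phi^2(x)) \le d(x,\phi(x))$, stating that a ``two-step'' displacement along the orbit of $x$ under $\langle\phi\rangle$ is no larger than a ``one-step'' one. I expect this to be the \emph{main obstacle}. The plan is to exploit that $\phi$ has finite order on the finite tree $G$. Let $H$ be the minimal subtree of $G$ containing the orbit $\{\phi^j(x) : j \ge 0\}$. Since $\phi$ permutes the orbit and maps paths to paths, $\phi(H) = H$, so $\phi$ restricts to an automorphism of the finite tree $H$, and such an automorphism either fixes a vertex of $H$ or swaps two adjacent vertices of $H$ (equivalently, it preserves the center of $H$, which is a single vertex or a single edge, by the analogue of Proposition~\ref{preservecenter} applied to $H$).

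In the first case, $\phi$ fixes a vertex $c \in H$, so by Proposition~\ref{preservedist} all orbit points are equidistant from $c$, say at distance $\rho$. If $x$ and $\phi(x)$ lay in the same component $A$ of $H - c$, then $\phi(A) = A$ (as $\phi(x)\in A$), so the whole orbit and hence its hull $H$ would lie in $A$; but $c \in H \setminus A$, a contradiction. Thus $x$ and $\phi(x)$ lie in different components of $H - c$, their geodesic passes through $c$, giving $\ell = d(x,\phi(x)) = 2\rho$, and therefore $\delta = d(x,\phi^2(x)) \le d(x,c)+d(c,\phi^2(x)) = 2\rho = \ell$. In the second case, $\phi$ swaps adjacent vertices $c,c'$ and hence interchanges the two sides of the edge $\{c,c'\}$; the orbit points then alternate sides, so with $a = d(c,x)$ one computes $\ell = d(x,\phi(x)) = 2a+1$ and $\delta = d(x,\phi^2(x)) \le 2a = \ell-1$. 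Either way $\delta \le \ell$, establishing the key inequality.

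Combining these, the common prefix of the two geodesics out of $\phi(x)$ has length $\ell - \delta/2 \ge \ell/2 \ge \lfloor \ell/2\rfloor$, so for every $i \in \{0,\ldots,\lfloor \ell/2\rfloor\}$ the node at distance $i$ from $\phi(x)$ is the same on both geodesics, which is exactly $v_{\ell+1-i} = \phi(v_{1+i})$. Along the way I would verify the routine degenerate cases (for instance $\ell = 1$, and $\delta = 0$ corresponding to $\phi^2(x) = x$), but these present no difficulty once the displacement inequality is in place.
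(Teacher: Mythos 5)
Your proof is correct, but it takes a genuinely different route from the paper's. The paper proves the lemma by induction on $i$, ``zipping up'' the path from both ends: given $v_{\ell+1-i}=\phi(v_{1+i})$, it uses Corollary~\ref{preservedistcenter} (preservation of distance to $\mathrm{center}(G)$) together with the uniqueness, in a tree, of the neighbour that is one step closer to the center, to conclude that $\phi$ must carry $v_{2+i}$ to $v_{\ell-i}$. You instead argue globally: you recast the claim as saying that the geodesics from $\phi(x)$ to $x$ and from $\phi(x)$ to $\phi^2(x)$ share a prefix of length at least $\lfloor \ell/2\rfloor$, compute that prefix length as $\ell-\tfrac{1}{2}d(x,\phi^2(x))$ via the median of $\{x,\phi(x),\phi^2(x)\}$, and reduce everything to the displacement inequality $d(x,\phi^2(x))\le d(x,\phi(x))$, which you then prove using the center of the orbit hull $H$ and the fixed-vertex-versus-inverted-edge dichotomy for tree automorphisms. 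Your reduction is sound (the case analysis on $H-c$ and on the inverted central edge is airtight, and the degenerate cases you flag really are routine), and it yields strictly more information than the lemma asks for --- e.g.\ the two-step displacement bound, and the identification of the midpoint of $P$ with a $\phi$-invariant vertex or edge --- at the cost of importing two standard facts (the median/Gromov-product formula in trees, and the dichotomy, which as you note follows from the analogue of Proposition~\ref{preservecenter} applied to $H$) that the paper's purely local induction avoids by leaning only on results it has already proved for $G$ itself.
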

\begin{proof}
We proceed by induction on $i$. The case where $i = 0$ is true since we defined $v_1 = x$ and $v_{\ell+1} = \phi(x)$. As induction hypothesis, assume that for some $i \in \{0,\ldots,\lfloor \ell/2 \rfloor-1\}$, we have that $v_{\ell+1-i} = \phi(v_{1+i})$. 

We proceed to show that $v_{\ell+1-(i+1)} = \phi(v_{1+(i+1)})$. By the induction hypothesis, we know that $v_{\ell+1-i} = \phi(v_{1+i})$, so, by Corollary \ref{preservedistcenter}, it follows that the distance between $v_{\ell+1-i}$ and $\mathrm{center}(G)$ is equal to the distance between $v_{1+i}$ and $\mathrm{center}(G)$. As $G$ is a tree, $\mathrm{center}(G)$ is either a single node or the two endpoints of a single edge. In either case, there is a unique path between $v_{1+i}$ and the node in $\mathrm{center}(G)$ closest to it. In particular, $v_{1+i}$ has a unique neighbour $w_1$ whose distance to $\mathrm{center}(G)$ is smaller than $v_{1+i}$'s distance to $\mathrm{center}(G)$. Similarly, there is a unique path between $v_{\ell+1-i}$ and the node in $\mathrm{center}(G)$ closest to it. In particular, $v_{\ell+1-i}$ has a unique neighbour $w_2$ whose distance to $\mathrm{center}(G)$ is smaller than $v_{\ell+1-i}$'s distance to $\mathrm{center}(G)$. Since the automorphism $\phi$ preserves adjacencies (Definition \ref{autodefs}) and preserves distance to $\mathrm{center}(G)$ (Corollary \ref{preservedistcenter}), we conclude that $\phi(w_1) = w_2$. 

Next, observe that $w_1$ is necessarily equal to $v_{1+(i+1)}$: indeed, if $v_{1+(i+1)} \neq w_1$, then, since $G$ is a tree, all nodes on the path $(v_{1+(i+1)},\ldots,v_{\ell+1-i},\ldots,v_{\ell+1})$ would have a distance to $\mathrm{center}(G)$ strictly greater than the distance between $v_{1+i}$ and $\mathrm{center}(G)$, which contradicts the fact that $v_{1+i}$ and $v_{\ell+1-i}$ have the same distance to $\mathrm{center}(G)$. Similarly, observe that $w_2$ is necessarily equal to $v_{\ell+1-(i+1)}$: indeed, if $v_{\ell+1-(i+1)} \neq w_2$, then, since $G$ is a tree, all nodes on the path $(v_{1},\ldots,v_{1+i},\ldots,v_{\ell+1-(i+1)})$ would have a distance to $\mathrm{center}(G)$ strictly greater than the distance between $v_{\ell+1-i}$ and $\mathrm{center}(G)$, which contradicts the fact that $v_{\ell+1-i}$ and $v_{1+i}$ have the same distance to $\mathrm{center}(G)$. As we have previously shown that $\phi(w_1) = w_2$, this concludes the proof that $v_{\ell+1-(i+1)} = \phi(v_{1+(i+1)})$.
\end{proof}

\begin{theorem}
	Consider any labeling scheme $\lambda$ and any deterministic distributed algorithm $\mathcal{A}$. If $\mathcal{A}$ solves the gossiping task when executed by the nodes of $\lambda(G)$ for some tree $G$, then the length of $\lambda$ is $\Omega(\log D(G))$.
\end{theorem}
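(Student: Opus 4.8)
The plan is to prove the contrapositive in the form of a lower bound on the \emph{number} of distinct labels: if the universal algorithm $\mathcal{A}$ solves gossiping on $\lambda(G)$, I will show that $\lambda$ uses at least $D(G)-1$ distinct labels, which forces some label to have length at least $\log_2(D(G)-1)=\Omega(\log D(G))$ by the same counting fact used at the end of Theorem \ref{existLower}. The engine is an indistinguishability-by-symmetry argument. Viewing $\lambda$ as a vertex colouring, if it used fewer than $D(G)$ colours then, by Definition \ref{distinguishdefs}, it would not be distinguishing, so there would be a non-trivial $\phi\in\mathrm{Aut}(G)$ with $\lambda(\phi(v))=\lambda(v)$ for all $v$. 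Fixing such a $\phi$, I would choose the source messages to be \emph{constant on each orbit of $\phi$ and pairwise distinct across orbits} (a legitimate gossiping instance, since nothing forces the source messages to be distinct). I would then prove, by induction on the round number $t$, that $v$ and $\phi(v)$ have identical histories for every $v$: the base case uses $\lambda(v)=\lambda(\phi(v))$ together with the orbit-constant source messages, and the inductive step uses that $\phi$ restricts to a bijection $N(v)\to N(\phi(v))$, so determinism makes a neighbour $u$ transmit (the same message) exactly when $\phi(u)$ does; hence $v$ and $\phi(v)$ see the same number of transmitting neighbours and the same collision/reception outcome, and thus take identical actions in every round.

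Next I would convert this symmetry into a failure via a collision. Suppose $\phi$ fixes some vertex $m$ and $m$ has a neighbour $r$ with $\phi(r)\neq r$; then $\phi(r)\in N(\phi(m))=N(m)$ is a second, distinct neighbour of $m$, and in fact the whole orbit $O_r$ consists of distinct neighbours of $m$. By the history-coincidence claim all vertices of $O_r$ behave identically, so whenever one transmits they all do, producing at least two simultaneous transmitters in $N(m)$ and hence a collision at $m$. Therefore $m$ never receives a message from any vertex of $O_r$. But the common source value $c^\ast$ of the orbit $O_r$ is, by our choice of messages, carried initially only by the vertices of $O_r$, which are exactly the neighbours of $m$ through which $c^\ast$ could first reach $m$; so $c^\ast$ never arrives at $m$, and since $c^\ast\neq\mu_m$, node $m$ never possesses all source messages, contradicting correctness of $\mathcal{A}$. (Lemma \ref{pathsymmetry} is exactly what exhibits such a colliding pair at the midpoint of the $x$--$\phi(x)$ path; I instead locate $m$ globally, which is slightly cleaner.) To guarantee a usable $m$, I use Proposition \ref{preservecenter} and the fact that the fixed-point set of a tree automorphism is a connected subtree: if $\phi$ fixes any vertex at all, its fixed-point set $F$ is a non-empty proper subtree, and any boundary vertex $m\in F$ adjacent to $V\setminus F$ supplies exactly this configuration.

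The one obstacle is the case where $\phi$ fixes \emph{no} vertex. Since $\phi$ preserves the centre (Proposition \ref{preservecenter}) and the centre of a tree is a single vertex or a single edge, a fixed-point-free $\phi$ must swap the two endpoints $a,b$ of a centre edge. I would first pass to $\phi^2$, which fixes $a$: if $\phi^2\neq\mathrm{id}$ then $\phi^2$ is a non-trivial label-preserving automorphism with a fixed vertex and the previous paragraph applies. The genuinely residual situation is an involution that swaps a centre edge and has no fixed vertex; here the symmetry argument only shows the centre edge cannot be crossed, and since $\phi$ swaps the two sides every orbit meets both sides, so no value is trapped and the collision argument gives no contradiction --- indeed gossiping can succeed in such instances. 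To finish, letting $L$ be the number of distinct labels and $\Gamma$ the group of label-preserving automorphisms, the fact just proved (that $\Gamma$ has no non-trivial fixed-vertex element) forces, by a short group computation, that either $\Gamma=\{\mathrm{id}\}$ or $\Gamma=\{\mathrm{id},\phi\}$ for a single centre-swapping involution $\phi$: any two non-identity elements both send $a\mapsto b$, so their product fixes $a$ and must be the identity. In the first case $\lambda$ is distinguishing and $L\geq D(G)$; in the second, recolouring the single vertex $a$ with a fresh colour breaks $\phi$ without re-creating any symmetry (a $\lambda'$-preserving automorphism would have to fix $a$ and preserve $\lambda$, hence lie in $\Gamma$ and fix $a$, hence be trivial), yielding a distinguishing colouring with at most $L+1$ colours and so $L\geq D(G)-1$.

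I expect the main difficulty to be precisely this fixed-point-free centre-swapping case: both recognising that the naive symmetry argument fails there, and pinning down $\Gamma$ sharply enough to recover $L\geq D(G)-1$ via the recolouring trick. The remaining ingredients --- the history-coincidence induction and the collision argument --- are routine once the orbit-constant, cross-orbit-distinct source messages are set up, after which the bound $\Omega(\log D(G))$ follows from the distinct-label counting argument already used for Theorem \ref{existLower}.
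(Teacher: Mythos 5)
Your engine --- a label-preserving non-trivial automorphism $\phi$, the round-by-round induction showing $v$ and $\phi(v)$ have identical histories, and a permanent collision that blocks some source message forever --- is the same as the paper's. Where you genuinely diverge is in how the permanent collision is localized. The paper takes any $x$ with $\phi(x)\neq x$, proves the reflective symmetry of the $x$--$\phi(x)$ path (Lemma \ref{pathsymmetry}), and blocks the middle vertex (even length) or the middle edge (odd length); since $G$ is a tree, $x$'s source message must traverse that path to reach $\phi(x)$, so $\phi(x)$ never receives it. You instead locate a boundary vertex $m$ of the fixed-point subtree and use the fact that the whole orbit of a non-fixed neighbour sits inside $N(m)$ and transmits in unison. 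That localization is correct and arguably cleaner when $\phi$ has a fixed vertex, and your insistence on choosing the source messages to be constant on orbits is a legitimate care point: the model allows a node's decisions to depend on its source message, so the base case of the history induction does need $x$ and $\phi(x)$ to start in identical states, something the paper's write-up passes over silently.

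The gap is in your residual case. When the only non-trivial label-preserving automorphism is a fixed-point-free involution $\phi$ swapping the centre edge $\{a,b\}$, you conclude that ``no value is trapped'' and that gossiping can succeed, and you retreat to the group-theoretic bound $L\geq D(G)-1$. But the gossiping task, as the paper uses it (and as its proof of this theorem makes explicit), requires $\phi(x)$ to actually \emph{receive} $x$'s source message through the network --- possession is about provenance, not about whether an equal bit string happens to reside elsewhere. Since $\{a,b\}$ is the unique edge joining the two components swapped by $\phi$, and $a$ transmits exactly when $b$ does, neither endpoint ever receives from the other; hence no source message originating on one side ever reaches the other side, and correctness fails. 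This is precisely the paper's odd-length case, so no fallback is needed and the full bound $L\geq D(G)$ holds. The cost of your detour is not merely aesthetic: $L\geq D(G)-1$ gives label length at least $\log_2(D(G)-1)$, which is $0$ when $D(G)=2$, so for trees whose distinguishing number is $2$ (e.g., the two-vertex path, where identical empty labels provably make gossiping impossible) your argument establishes nothing, while the theorem asserts a positive lower bound there. The group computation pinning $\Gamma$ down to $\{\mathrm{id},\phi\}$ and the single-vertex recolouring are correct as far as they go, but they should be replaced by the direct centre-edge collision argument.
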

\begin{proof}
	Consider any labeling scheme $\lambda$ and any deterministic distributed algorithm $\mathcal{A}$ that solves the gossiping task when executed by the nodes of $\lambda(G)$ for some tree $G$. We prove that $\lambda$ is a distinguishing labeling of $G$, which implies that the number of distinct labels used by $\lambda$ is at least $D(G)$, and this immediately implies the result. 
	
	To obtain a contradiction, assume that there exists a non-trivial $\phi \in \textrm{Aut}(G)$ such that $\phi$ preserves $\lambda$. By definition, this means that, for each node $x$, we have $\lambda(x) = \lambda(\phi(x))$. 
	
	Consider the execution of $\mathcal{A}$ on the labeled network $\lambda(G)$. First, we set out to prove that, for each node $x$ in the network, nodes $x$ and $\phi(x)$ behave exactly the same way in each round. To do so, we consider node histories during the execution of $\mathcal{A}$: for an arbitrary node $x$, define $h_x[0]$ to be the label assigned to $x$ by $\lambda$, and, for each $t \geq 1$, define $h_x[t]$ to be the message received by $x$ in round $t$ (or $\bot$ if $x$ receives no message). As $\mathcal{A}$ is a deterministic distributed algorithm, we know that $h_x[0\ldots (t-1)] = h_y[0\ldots (t-1)]$ implies that $x$ and $y$ perform the exact same action in round $t$, i.e., $x$ and $y$ both stay silent in round $t$, or, they both transmit the same message in round $t$.
	\begin{claim}
		For an arbitrary round $t \geq 1$ and an arbitrary node $x$, we have that $x$ transmits in round $t$ if and only if $\phi(x)$ transmits in round $t$. Further, if $x$ and $\phi(x)$ both transmit in round $t$, then they transmit the same message.
	\end{claim}
	To prove the claim, it suffices to prove that, for each $t \geq 1$ and each node $x$, nodes $x$ and $\phi(x)$ have the same history up to round $t-1$, i.e., $h_x[0\ldots (t-1)] = h_{\phi(x)}[0 \ldots (t-1)]$. We proceed by induction on the round number $t$. For the base case, consider $t=1$. By assumption, we have that $\lambda(x) = \lambda(\phi(x))$ for each node $x$, which, by definition, gives $h_x[0] = h_{\phi(x)}[0]$ for each node $x$, as required. As induction hypothesis, assume that, for some $t \geq 1$, that $h_x[0\ldots (t-1)] = h_{\phi(x)}[0 \ldots (t-1)]$ for each node $x$. For the inductive step, consider an arbitrary node $x$, and consider the possible cases for the value of $h_x[t]$:
	\begin{itemize}
		\item {\bf Suppose that $h_x[t] = \bot$.} There are several sub-cases to consider:
		\begin{itemize}
			\item {\bf $x$ transmits in round $t$.} Then, by the induction hypothesis, we know that $h_x[0\ldots (t-1)] = h_{\phi(x)}[0 \ldots (t-1)]$, which implies that $\phi(x)$ transmits in round $t$ as well. This means that $\phi(x)$ does not receive a transmission in round $t$, so $h_{\phi(x)}[t] = \bot$, as required.
			\item {\bf $x$ does not transmit in round $t$, and no neighbours of $x$ transmit in round $t$.} Then, by the definition of automorphism, a node $y$ is a neighbour of $x$ if and only if node $\phi(y)$ is a neighbour of $\phi(x)$. In particular, each neighbour of $\phi(x)$ is some $\phi(y)$ where $y$ is a neighbour of $x$. By the induction hypothesis, we know that $h_y[0\ldots (t-1)] = h_{\phi(y)}[0 \ldots (t-1)]$ for each neighbour $y$ of $x$, so $\phi(y)$ and $y$ perform the same action in round $t$. By assumption, no neighbour $y$ of $x$ transmits in round $t$, so we get that no neighbour $\phi(y)$ of $\phi(x)$ transmits in round $t$, which implies that $h_{\phi(x)}[t] = \bot$, as required.
			\item {\bf $x$ does not transmit in round $t$, and two or more neighbours of $x$ transmit in round $t$.} Let $y_1, y_2$ be two distinct neighbours of $x$ that transmit in round $t$. By the definition of automorphism, $\phi(y_1)$ and $\phi(y_2)$ are neighbours of $\phi(x)$. By the induction hypothesis, we know that $h_{y_1}[0\ldots (t-1)] = h_{\phi(y_1)}[0 \ldots (t-1)]$ and $h_{y_2}[0\ldots (t-1)] = h_{\phi(y_2)}[0 \ldots (t-1)]$. In particular, this implies that $\phi(y_1)$ and $\phi(y_2)$ both transmit in round $t$, which causes a collision at $\phi(x)$ in round $t$. Thus, $\phi(x)$ does not receive a message in round $t$, i.e., $h_{\phi(x)}[t] = \bot$, as required.
		\end{itemize}
		\item {\bf Suppose that $h_x[t] = m$ for some binary string $m$.} In particular, this means that exactly one neighbour $y'$ of $x$ transmits in round $t$, and $y'$ sends binary string $m$ in its transmission. By the definition of automorphism, a node $y$ is a neighbour of $x$ if and only if node $\phi(y)$ is a neighbour of $\phi(x)$. In particular, each neighbour of $\phi(x)$ is some $\phi(y)$ where $y$ is a neighbour of $x$. By the induction hypothesis, we know that $h_y[0\ldots (t-1)] = h_{\phi(y)}[0 \ldots (t-1)]$ for each neighbour $y$ of $x$, so $\phi(y)$ and $y$ perform the same action in round $t$. By assumption, $y'$ transmits a message $m$ during round $t$, so $\phi(y')$ transmits message $m$ during round $t$. Each neighbour $y \neq y'$ of $x$ is silent in round $t$, so each neighbour $\phi(y) \neq \phi(y')$ of $\phi(x)$ is silent in round $t$. Thus, $\phi(x)$ receives message $m$ in round $t$, i.e., $h_{\phi(x)}[t] = m$, as required.
	\end{itemize}
	In all cases, we showed that $h_x[t] = h_{\phi(x)}[t]$ for an arbitrary node $x$, which, together with the induction hypothesis, proves that $h_x[0\ldots t] = h_{\phi(x)}[0\ldots t]$ for all nodes $x$. This completes the proof of the claim.

	Next, we use the above claim to reach a contradiction: that $\mathcal{A}$ does not solve the gossiping task. We consider a node $x$ such that $x \neq \phi(x)$ (which must exist since $\phi$ is a non-trivial automorphism). We will prove that $x$'s source message does not reach $\phi(x)$, which is sufficient to prove that the gossiping task is not correctly solved. (Note: this is the only place in the proof that needs to be modified for the result to apply more generally to $k$-broadcast with $k \in \{2,\ldots,n\}$ when the sources are not initially known: after the labeling scheme is applied, choose the $x$ and $\phi(x)$ specified above as two of the $k$ sources.)
	
	Consider the path $P$ with endpoints $x$ and $\phi(x)$, and denote the length of this path by $\ell$. Let $v_1 = x$ and let $v_{\ell+1} = \phi(x)$, and denote by $(v_1,\ldots,v_{\ell+1})$ the sequence of nodes along the path. We consider two cases based on whether or not $\ell$ is even or odd.
	\begin{itemize}
		\item \underline{\bf Case 1: $\ell$ is even.} In this case, the path $P$ has a central node $v_{1+(\ell/2)}$ (i.e., a node with the same distance to both endpoints of $P$). Consider the two neighbours of $v_{1+(\ell/2)}$ on $P$, i.e., nodes $v_{\ell/2}$ and $v_{2+(\ell/2)}$. By Lemma \ref{pathsymmetry} with $i = (\ell/2)-1$, it follows that $v_{2+(\ell/2)} = \phi(v_{\ell/2})$. By the above claim, for every round in the execution of $\mathcal{A}$, both $v_{\ell/2}$ and $v_{2+(\ell/2)}$ perform the same action, i.e., either both transmit or both listen. It follows that, in every round, $v_{1+(\ell/2)}$ does not receive a message from either $v_{\ell/2}$ or $v_{2+(\ell/2)}$, either due to transmission collision or due to both remaining silent.
		\item \underline{\bf Case 2: the length of $P$ is odd.} In this case, the path $P$ has a central edge with endpoints $v_{1+\lfloor \ell/2 \rfloor}$ and $v_{\ell+1-\lfloor \ell/2 \rfloor}$ (i.e., an edge whose two endpoints have same distance to their closest endpoint of $P$). By Lemma \ref{pathsymmetry} with $i = \lfloor \ell/2 \rfloor$, it follows that $v_{\ell+1-\lfloor \ell/2 \rfloor} = \phi{v_{1+\lfloor \ell/2 \rfloor}}$. By the above claim, for every round in the execution of $\mathcal{A}$, both $v_{1+\lfloor \ell/2 \rfloor}$ and $v_{\ell+1-\lfloor \ell/2 \rfloor}$ perform the same action, i.e., either both transmit or both listen. It follows that these two nodes never receive a message from each other.
	\end{itemize}
As $G$ is a tree, the only way that $x$'s source message can reach $\phi(x)$ is if, for each $j \in \{1,\ldots,\ell\}$, there exists a round in which node $v_{j+1}$ receives $x$'s source message from $v_j$. In both cases above, we demonstrated a value of $j$ for which this never occurs: $j=\ell/2$ in Case 1, and $j=\ell-\lfloor \ell/2 \rfloor$ in Case 2. Since $\phi(x)$ never receives $x$'s source message, we have shown that $\mathcal{A}$ does not solve the gossiping task, which proves the desired contradiction.
\end{proof}

\subsection{Upper Bound}
We present a $O(\log D(G))$-bit labeling scheme and a deterministic distributed gossiping algorithm that runs on the labeled network. The overall idea is similar to our labeling scheme and algorithm for $k$-broadcast in Section \ref{arbkbroad}. With regards to the labeling scheme: the nodes are labeled according to $\ackscheme$, the labeling scheme arbitrarily chooses a coordinator node $r$ and uniquely labels it as such, and then sets additional $sched$ bits according to a particular graph colouring (however, in the present case, a distinguishing labeling (see Definition \ref{distinguishdefs}) is used rather than a distance-two colouring). With regards to the algorithm, the execution consists of three subroutines performed consecutively: {\bf Initialize}, {\bf Aggregate}, and {\bf Inform}. In the first stage, using bounded acknowledged broadcast, each node learns its distance from the coordinator as well as the eccentricity of the coordinator. In the second stage, all of the source messages are collected at the coordinator node, although the procedure to do so is significantly more complex (as avoiding collisions is not as easy now that we are using a distinguishing labeling rather than a distance-two colouring). At a high level, each node will use its colour, its distance from the coordinator, and messages from its children in the tree to create an `encoding' of its rooted subtree, and we can prove that this is necessarily different than the encoding calculated by its siblings in the tree (due to the distinguishing labeling). Using these distinct encodings, the nodes choose distinct delay values that are mutually co-prime, and repeated transmissions separated by these delay values will guarantee that their messages will eventually be received by their parent in the tree. Finally, in the third stage, the coordinator shares the set of source messages with the entire network using a final broadcast.

\subsubsection{Labeling Scheme $\gossipscheme$}
As input, we are provided with a graph $G$. We assign a label to each node $v$ in $G$, and the label at each node $v$ consists of 5 components: a $join$ bit, a $stay$ bit, an $ack$ bit, a $term$ bit, and a binary string $sched$. The labeling scheme assigns values to the components as follows:
\begin{enumerate}
	\item Choose an arbitrary node $r \in G$. This node will act as the \emph{coordinator}.
	\item Apply the labeling scheme $\ackscheme$ (see Section \ref{EGMPAckBroadcast}) to $G$ with designated start node $s_G=r$. This will set the $join$, $stay$, and $ack$ components (each consisting of one bit) at each node $v$. For the coordinator node $r$, set $join=stay=ack=1$.
	\item For each node $v$, set the $sched$ bits to be the binary representation of $\psi(v)$, where $\psi$ is any $D(G)$-distinguishing labeling of the nodes of $G$.
	\item The $term$ bit is set to 1 at exactly one node: the node whose source message is last to arrive at the coordinator $r$ during the {\bf Aggregate} subroutine (as described at the end of Section \ref{defineGossipAlg}). The $term$ bit at all other nodes is set to 0.
\end{enumerate}

\subsubsection{Gossiping Algorithm}\label{defineGossipAlg}
We now describe our deterministic distributed gossiping algorithm $\mathcal{GOSSIP}$ that is executed after the tree's nodes have been labeled using labeling scheme $\gossipscheme$. The algorithm's execution consists of three subroutines performed consecutively: {\bf Initialize}, {\bf Aggregate}, and {\bf Inform}.

The first stage of the algorithm, consisting of the {\bf Initialize} subroutine, consists of executing $\mathcal{B}_{bounded}$. The start node is the coordinator $r$ (the unique node with $join$, $stay$, and $ack$ bits all set to 1), and the broadcast message is ``init" along with a \texttt{counter} value that is 0 in the coordinator's initial message. During the execution of the initial broadcast, each node increments this counter value before re-transmitting the broadcast message. As $G$ is a tree, no collisions occur during this execution, so the first received \texttt{counter} value by a node $v$ is actually $v$'s distance from $r$. Further, each node can set its local clock to this received \texttt{counter} value in order to establish a global clock. At the conclusion of the execution of $\mathcal{B}_{bounded}$, all nodes know an upper bound $m$ on the number of rounds that elapsed during the broadcast of the ``init" message, and, they all received this value before round $t_{done} = 3m$. Thus, in round $3m$, all nodes terminate the subroutine. Once again, as $G$ is a tree and no collisions occur during the execution of $\mathcal{B}_{bounded}$, the value of $m$ is equal to the exact height of the tree rooted at $r$, since this is exactly how many rounds it took for the initial broadcast to complete.

The third stage of the algorithm, consisting of the {\bf Inform} subroutine, consists of executing $\mathcal{B}_{ack}$. The start node is the coordinator $r$, and the broadcast message is equal to the set of source messages that $r$ knows. All nodes terminate this subroutine at the same time, and they all know that gossiping has been completed.

The second stage of the algorithm, mainly consisting of the {\bf Aggregate} subroutine, is the most interesting. The remainder of this section is dedicated to its description and analysis.

Consider the tree $G$ rooted at the coordinator node $r$. At the start of this stage, each node $v$ knows its distance from $r$, which we denote by $d(v)$. Also, each node knows the height $m$ of the tree. The main idea is for each node to compute an encoding of the subtree rooted at itself, and then use this encoding to determine a transmission delay value that it will use in order to avoid transmission collisions when sending information to its parent.

Each node $v$ maintains an encoding enc($v$) in every round $t$, and, in rounds in which $v$ decides to transmit, it always includes this enc($v$) value in the transmitted message. If $v$ has not yet received any messages from any children (which is necessarily the case in the first round, and, in all rounds when $v$ is a leaf), then node $v$ computes its encoding using the distance $d(v)$ from the coordinator and its colour $\psi(v)$ (which is stored in the $sched$ bits of its label). More specifically, the encoding is an integer computed as $\textrm{enc}(v) = 2^{\psi(v)}\cdot 3^{d(v)}$. Next, suppose that a node $w$ has received messages from some subset $\{v_1,\ldots,v_\ell\}$ of its children. Then $w$ computes its encoding using $d(w)$, its colour $\psi(w)$, and the encodings of its children. In particular, it computes $enc(w) = 2^{\psi(w)}\cdot 3^{d(w)} \cdot \prod_{j=1}^{\ell} p_{j+2}^{\textrm{enc}(v_j)}$, where $p_{j+2}$ denotes the $(j+2)^{\textrm{th}}$ smallest prime number. For any two nodes $a,b$ with a common ancestor $c$, we will be able to show that this method of encoding ensures that $\textrm{enc}(a), \textrm{enc}(b), \textrm{enc}(c)$ are all different: either because $d(c)$ is different from $d(a),d(b)$, or, because $\psi(a) \neq \psi(b)$, or, because the subtrees rooted at $a$ and $b$ are non-isomorphic (which must be the case if $\psi(a) = \psi(b)$, by the definition of distinguishing colouring). 

Using its computed $\textrm{enc}(v)$ value, each node $v$ computes a transmission delay value $\tau(v)$. In particular, it sets $\tau(v) = p_{\textrm{enc}(v)}$ (once again, using $p_i$ to denote the $i^{\textrm{th}}$ prime number). Then, $v$ transmits every $\tau(v)$ rounds, and in its transmitted message, it includes: the value of $\textrm{enc}(v)$, the set of source messages it knows, and its distance from $r$ (so that $v$'s parent can recognize that the message is coming from one of its children). Using the fact that siblings $a,b$ and any common ancestor $c$ will have different $\textrm{enc}(a), \textrm{enc}(b),\textrm{enc}(c)$, we will be able to show that all siblings and their common ancestors will have different transmission delay values that are mutually co-prime. This implies that each sibling will eventually successfully transmit its knowledge to its parent in $G$.

One challenge that was not addressed in the above description is that a node $v$ will receive messages from different children at different times, i.e., the values $\textrm{enc}(v)$ and $\tau(v)$ are computed in each round using only the information that $v$ has received from its children before the current round. There are three potential issues: (1) $v$ may have children that it has not heard from yet (and it has no way of detecting this); (2) $v$ may have one or more children $x$ from which it has received incomplete information (e.g., $x$ may have children that it has not heard from yet); (3) $v$ may hear from a child $x$ multiple times with different encoding values (e.g., $x$ updates its own encoding between transmissions). To deal with issue (3), when we receive an encoding enc($x$) from a child $x$: we replace a previously saved encoding $e$ if enc($x$) is a multiple of $e$, and otherwise we append enc($x$) to our list of saved encodings. Issues (1) and (2) demonstrate that the guarantees claimed previously about distinct encoding values (and distinct transmission delay values) do not necessarily hold at all times. We are able to carefully prove that each of these issues is correctly handled, i.e., that node $v$ \emph{eventually} receives a message from each of its children and has complete information about its entire subtree, and from that point on, the transmission delays will diverge and allow siblings to successfully transmit up to their parent.

Algorithm \ref{aggpseudo} provides a pseudocode description of the {\bf Aggregate} subroutine. 

\begin{algorithm}[H]
	\small
	\caption{The {\bf Aggregate} subroutine executed at each node $v$}
	\label{aggpseudo}
	\begin{algorithmic}[1]
		\Statex {\color{gray} \% Each node has a source message $\mu_v$. Each node knows its distance $d(v)$ from the coordinator $r$. In its label, each node has $\mathit{sched}$ bits: the binary representation of its colour $\psi(v)$}
		\State $\texttt{knownMsgs} \leftarrow \{\mu_v\}$
		\State $\texttt{colour} \leftarrow \textrm{integer value of $\mathit{sched}$ bits}$
		\State $\texttt{dist} \leftarrow d(v)$ \Comment{{\color{gray}my distance to coordinator}}
		\State $\texttt{enc} \leftarrow 2^{\tt colour} * 3^{\tt dist}$ \Comment{{\color{gray}encoding of my subtree}}
		\State $\texttt{childrenEncs} \leftarrow$ empty list \Comment{{\color{gray}will store encodings from children}}
		\State $\texttt{delay} \leftarrow \mathit{prime}(\texttt{enc})$ \Comment{{\color{gray}$\mathit{prime}(n)$ returns the $n^{\textrm{th}}$ prime number}}
		\For{each round $t \geq 1$}
		\If{($\texttt{dist} > 0$) {\bf and} ($t\ \textbf{mod}\ \texttt{delay} = 0$)}
		\Statex {\color{gray} \algindent\algindent\% non-coordinator nodes transmit every \texttt{delay} rounds}
		\State transmit $\langle \texttt{knownMsgs}, \texttt{enc}, \texttt{dist} \rangle$
		\Else 
		\State listen for a message
		\If{received $\langle \texttt{recvMessages}, \texttt{recvEnc}, \texttt{recvDist} \rangle$}
		\If{$\texttt{dist} = \texttt{recvDist}-1$}
		\Statex \algindent\algindent\algindent\algindent {\color{gray}\% I am the parent of this message's transmitter
			\Statex \algindent\algindent\algindent\algindent\% First, save the received source messages}
		\State $\texttt{knownMsgs} \leftarrow \texttt{knownMsgs} \cup \texttt{recvMessages}$
		\Statex \algindent\algindent\algindent\algindent {\color{gray}\% replace previously saved encoding or append}
		\If{$\exists i, \texttt{recvEnc} \textbf{ mod } \texttt{childrenEncs}[i] = 0$}
		\State $\ell \leftarrow \min\{i\ |\ \texttt{recvEnc} \textbf{ mod } \texttt{childrenEncs}[i] = 0\}$
		\State $\texttt{childrenEncs}[\ell] \leftarrow \texttt{recvEnc}$
		\Else
		\State append \texttt{recvEnc} to \texttt{childrenEncs}
		\EndIf
		\Statex \algindent\algindent\algindent\algindent {\color{gray}\% update my encoding and delay values}
		\State $\texttt{nc} \leftarrow |\texttt{childrenEncs}|$ \Comment{{\color{gray}\# of encodings from children}}
		\State $\texttt{enc} \leftarrow 2^{\tt colour} * 3^{\tt dist} * \displaystyle\prod_{j=1}^{\texttt{nc}} [\mathit{prime}(j+2)] ^ {\texttt{childrenEncs}[j]}$ 
		\State $\texttt{delay} \leftarrow \mathit{prime}(\texttt{enc})$
		%			\Statex \algindent\algindent\algindent\algindent\% if \texttt{recvEnc} is a multiple of one of them, then replace
		%			\Statex \algindent\algindent\algindent\algindent\% Otherwise, append \texttt{recvEnc} to the list of encodings
		
		\EndIf 
		\EndIf
		
		\EndIf
		\EndFor
	\end{algorithmic}
\end{algorithm}

To prove the correctness of the {\bf Aggregate} subroutine, we set out to prove that, for each node $w$ in the tree $G$ rooted at $r$, there exists a round $t_w$ such that $w$ has received all of the source messages originating at nodes in the subtree rooted at $w$. Taking $v=r$, this proves that the coordinator node eventually possesses all of the source messages. 

In what follows, for any node $w$, recall that the \emph{height} of $w$ is the length of the longest root-to-leaf path in the subtree rooted at $w$. For any variable $V$ used in the pseudocode of Algorithm \ref{aggpseudo}, we write $V_t(w)$ to represent ``the value of $w$'s variable $V$ at the start of round $t$".

First, we observe that nodes at different distances from the coordinator will have different \texttt{enc} values at all times, as these distances are used as an exponent in the prime decomposition of the encoding value. Based on how the \texttt{delay} values are calculated, we can also conclude that nodes at different distances from the coordinator will have different \texttt{delay} values at all times. This fact will be used later to help ensure that there is an opportunity for each node to successfully transmit to its parent by avoiding interfering transmissions from its parent or grandparent in the tree.
\begin{proposition}\label{ancestordifferent}
	Consider any distinct nodes $v,w$ and any round $t$. If $w$ is an ancestor of $v$, then $\texttt{delay}_t(v) \neq \texttt{delay}_t(w)$.
\end{proposition}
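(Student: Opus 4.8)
The plan is to reduce the statement about \texttt{delay} values to a statement about \texttt{enc} values, and then to observe that the distance from the coordinator is recorded in a recoverable way inside each node's encoding. Since $\texttt{delay}_t(v) = \mathit{prime}(\texttt{enc}_t(v))$ and the map $n \mapsto \mathit{prime}(n)$ is strictly increasing (hence injective), it suffices to prove that $\texttt{enc}_t(v) \neq \texttt{enc}_t(w)$ whenever $w$ is an ancestor of $v$. The key idea I would exploit is that the exponent of the prime $3$ in the prime factorization of $\texttt{enc}_t(v)$ is always exactly $d(v)$, the fixed distance of $v$ from the coordinator $r$.

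To establish this invariant, I would inspect the encoding formula used in Algorithm \ref{aggpseudo}. Regardless of the contents of $\texttt{childrenEncs}$, node $v$ sets $\texttt{enc} = 2^{\texttt{colour}} \cdot 3^{\texttt{dist}} \cdot \prod_{j=1}^{\texttt{nc}} \mathit{prime}(j+2)^{\texttt{childrenEncs}[j]}$, and the only occurrence of the prime $3$ is in the base factor $3^{\texttt{dist}}$: every factor in the product uses a prime $\mathit{prime}(j+2)$ with $j \geq 1$, i.e., a prime that is at least $\mathit{prime}(3) = 5$, so none of these factors contributes a power of $3$. Moreover, the variable $\texttt{dist}$ is set to $d(v)$ during initialization and is never subsequently modified. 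Hence, by uniqueness of prime factorization, the exponent of $3$ in $\texttt{enc}_t(v)$ equals $d(v)$ for every round $t$, independently of which children's encodings $v$ has received so far.

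With the invariant in hand, the conclusion follows immediately. Since $w$ is an ancestor of $v$ in the tree rooted at the coordinator $r$, the two nodes lie at different distances from $r$, with $d(w) < d(v)$; in particular $d(w) \neq d(v)$. Consequently the exponents of $3$ in $\texttt{enc}_t(w)$ and $\texttt{enc}_t(v)$ differ, so $\texttt{enc}_t(w) \neq \texttt{enc}_t(v)$, and therefore $\texttt{delay}_t(v) = \mathit{prime}(\texttt{enc}_t(v)) \neq \mathit{prime}(\texttt{enc}_t(w)) = \texttt{delay}_t(w)$.

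The argument is short, and the only point requiring genuine care is the claim that the recursively-updated factor $\prod_{j} \mathit{prime}(j+2)^{\texttt{childrenEncs}[j]}$ never contributes a power of $3$ to the encoding. This is precisely why the encoding scheme reserves the primes $2$ and $3$ for the colour and the distance and begins the children's primes at $\mathit{prime}(3) = 5$; I would emphasize that this reservation is exactly what makes the $3$-exponent a faithful, time-invariant record of $d(v)$ even as $v$ accumulates information from its children across rounds.
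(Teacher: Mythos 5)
Your proof is correct and follows essentially the same route as the paper: both arguments reduce the claim about \texttt{delay} to one about \texttt{enc} and then observe that the exponent of $3$ in the encoding records $d(v)$, which differs between a node and its ancestor. Your version merely makes explicit the (correct) observation that the children's factors use primes $\mathit{prime}(j+2) \geq 5$ and hence never perturb the $3$-exponent, a point the paper leaves implicit.
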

\begin{proof}
	Suppose that $w$ is an ancestor of $v$. This implies that $d(w) < d(v)$, i.e., the distance from $w$ to $r$ is strictly smaller than the distance from $v$ to $r$. However, this means that in the calculations of $\texttt{enc}_t(v)$ and $\texttt{enc}_t(w)$, the exponent of 3 is different, which implies that $\texttt{enc}_t(v) \neq \texttt{enc}_t(w)$, and thus $\texttt{delay}_t(v) \neq \texttt{delay}_t(w)$.
\end{proof}

Next, we consider the issue that nodes may perform transmissions before they have complete information about all nodes in their subtree, so their encoding of their subtree might change over time. When a node transmits an updated value of its subtree encoding, we need a way for the parent to distinguish whether the information it receives is from a child it has not heard from before, or if the information should overwrite a value that was sent previously by one of its children. The following result will give us a tool that will be useful in solving this issue.
\begin{lemma}\label{divides}
	For any node $v$ in $G$, we have that $\texttt{enc}_t(v)$ divides $\texttt{enc}_{t'}(v)$ for all $t,t'$ such that $t' \geq t$.
\end{lemma}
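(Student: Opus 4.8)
The plan is to track how $\texttt{enc}(v)$ evolves from round to round and show that every change multiplies the current value by a positive integer, so that $\texttt{enc}(v)$ is non-decreasing in the divisibility order. First I would record the invariant that, at the start of every round $t$, the value $\texttt{enc}_t(v)$ equals $2^{\psi(v)} \cdot 3^{d(v)} \cdot \prod_{j=1}^{k} \mathit{prime}(j+2)^{c_j}$, where $(c_1,\ldots,c_k)$ is the current contents of $\texttt{childrenEncs}$ and the empty product is $1$. This holds initially, since the initialization sets $\texttt{enc} \leftarrow 2^{\texttt{colour}} \cdot 3^{\texttt{dist}}$ with $\texttt{childrenEncs}$ empty, and it is re-established by the assignment $\texttt{enc} \leftarrow 2^{\texttt{colour}} \cdot 3^{\texttt{dist}} \cdot \prod_{j=1}^{\texttt{nc}} \mathit{prime}(j+2)^{\texttt{childrenEncs}[j]}$ performed each time the list is modified; note that $\texttt{colour} = \psi(v)$ and $\texttt{dist} = d(v)$ are assigned once and never change. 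Since $\texttt{enc}(v)$ is modified only inside the update block, which executes only when $v$ receives a child message (the guard $\texttt{dist} = \texttt{recvDist}-1$), the value of $\texttt{enc}(v)$ is constant between consecutive updates. Hence, because divisibility is transitive, it suffices to prove that across a single update the new value of $\texttt{enc}(v)$ is a multiple of the old value.

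Consider one such update, and write $\texttt{childrenEncs} = (c_1,\ldots,c_k)$ and $\texttt{enc}_{\mathrm{old}} = 2^{\psi(v)} 3^{d(v)} \prod_{j=1}^{k} \mathit{prime}(j+2)^{c_j}$ for the state just before it. In the \emph{append} case the list becomes $(c_1,\ldots,c_k,\texttt{recvEnc})$, so every old exponent is preserved and a single new factor $\mathit{prime}(k+3)^{\texttt{recvEnc}}$ is introduced; thus $\texttt{enc}_{\mathrm{new}} = \texttt{enc}_{\mathrm{old}} \cdot \mathit{prime}(k+3)^{\texttt{recvEnc}}$, which is a positive-integer multiple of $\texttt{enc}_{\mathrm{old}}$.

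In the \emph{replace} case the algorithm chooses $\ell = \min\{\, i : \texttt{recvEnc} \bmod \texttt{childrenEncs}[i] = 0 \,\}$ and overwrites $c_\ell$ with $\texttt{recvEnc}$, leaving the length $k$ and every other entry unchanged. By the choice of $\ell$ we have $c_\ell \mid \texttt{recvEnc}$, and since both are positive integers this gives $\texttt{recvEnc} \geq c_\ell$. Consequently the exponent of the prime $\mathit{prime}(\ell+2)$ increases from $c_\ell$ to $\texttt{recvEnc}$ while every other exponent (including those of $2$ and $3$) is unchanged, so $\texttt{enc}_{\mathrm{new}} = \texttt{enc}_{\mathrm{old}} \cdot \mathit{prime}(\ell+2)^{\texttt{recvEnc} - c_\ell}$, again a positive-integer multiple of $\texttt{enc}_{\mathrm{old}}$. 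In both cases $\texttt{enc}_{\mathrm{old}} \mid \texttt{enc}_{\mathrm{new}}$, which, combined with transitivity of $\mid$ and the piecewise-constancy of $\texttt{enc}(v)$, yields $\texttt{enc}_t(v) \mid \texttt{enc}_{t'}(v)$ for all $t' \geq t$.

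The argument is essentially bookkeeping on prime factorizations, so I do not expect a deep obstacle; the one point requiring care is the replace case, where I must justify that overwriting an entry never decreases an exponent. This is exactly where the guard $c_\ell \mid \texttt{recvEnc}$ (and hence $\texttt{recvEnc} \geq c_\ell$) is needed, and it explains why the append/replace distinction in the pseudocode is designed the way it is.
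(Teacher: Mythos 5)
Your proof is correct and takes essentially the same approach as the paper's: both reduce to showing that each single update (append or replace) multiplies $\texttt{enc}(v)$ by a positive integer, with the replace case handled via the guarantee that the overwriting value is a multiple of the overwritten one. Your explicit invariant tying $\texttt{enc}(v)$ to the current contents of \texttt{childrenEncs} is a slightly more careful presentation of the same bookkeeping.
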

\begin{proof}
	For any two consecutive rounds for which $\texttt{enc}_t(v) = \texttt{enc}_{t+1}(v)$, it follows immediately that $\texttt{enc}_t(v)$ divides $\texttt{enc}_{t+1}(v)$. So the remainder of the proof considers any round $t$ such that the value of $\texttt{enc}$ changes in round $t$, i.e., $\texttt{enc}_t(v) \neq \texttt{enc}_{t+1}(v)$. From the definition of $\texttt{enc}$, we can write $\texttt{enc}_t(v)$ as a product of prime powers $p_1^{\alpha_1}p_2^{\alpha_2}\cdots p_\ell^{\alpha_\ell}$ for some $\ell \geq 2$. From the definition of {\bf Aggregate}, we see that there are two ways that the value of \texttt{enc} can change in a round: (1) a value is appended to the \texttt{childrenEncs} list; or, (2) a value in the \texttt{childrenEncs} list is overwritten. In case (1), suppose that a new element $e$ is appended to \texttt{childrenEncs}. Then, the value $\texttt{enc}_{t+1}(v)$ is calculated as $p_1^{\alpha_1}p_2^{\alpha_2}\cdots p_\ell^{\alpha_\ell}\cdot p_{\ell+1}^{e} = \texttt{enc}_t(v) \cdot p_{\ell+1}^{e}$, which implies that $\texttt{enc}_t(v)$ divides $\texttt{enc}_{t+1}(v)$. In case (2), suppose that a value in the \texttt{childrenEncs} list is overwritten, say, at position $i$ in the list. This can only occur when the if condition on line 15 evaluates to true, i.e., the new value written at position $i$ is a multiple of the old value, say, $c \cdot \alpha_i$. Then, the value $\texttt{enc}_{t+1}(v)$ is calculated as $p_1^{\alpha_1}p_2^{\alpha_2}\cdots p_i^{c \cdot \alpha_i} \cdots p_\ell^{\alpha_\ell} = p_i^{(c-1)\cdot \alpha_i}\cdot [p_1^{\alpha_1}p_2^{\alpha_2}\cdots p_i^{\alpha_i} \cdots p_\ell^{\alpha_\ell}] = p_i^{(c-1)\cdot \alpha_i}\cdot \texttt{enc}_t(v)$, which implies that $\texttt{enc}_t(v)$ divides $\texttt{enc}_{t+1}(v)$.
\end{proof}
The previous result implies that every time a fixed node transmits, the $\texttt{enc}$ value in the most recent transmission is always a multiple of previous $\texttt{enc}$ values it has transmitted. This motivates the strategy used at lines 15-20 in {\bf Aggregate}: if an \texttt{enc} value is received that is a multiple of an older value, the older value gets overwritten, and otherwise the new value is appended to the list. This gives us the next result, which tells us that the size of the $\texttt{childrenEncs}$ list at any node $v$ will never be larger than the number of children that $v$ has. We will use this fact later when proving that, eventually, \texttt{childrenEncs} consists exactly of the \texttt{enc} values of all of $v$'s children.
\begin{corollary}\label{childrenlimit}
	Consider any node $v$ in $G$. If $v$ has $c$ children, then $|\texttt{childrenEncs}_t(v)| \leq c$ in every round $t$. 
\end{corollary}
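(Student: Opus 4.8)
The plan is to prove the bound by exhibiting, in every round, an injection from the positions of \texttt{childrenEncs} into the set of $v$'s children; since an injection from a set of size $|\texttt{childrenEncs}_t(v)|$ into a set of size $c$ forces $|\texttt{childrenEncs}_t(v)| \le c$, this immediately yields the corollary. I would prove the existence of such an injection by induction on the round number $t$, maintaining the invariant that each entry of \texttt{childrenEncs} equals some past encoding snapshot $\texttt{enc}_{s}(x)$ transmitted by a child $x$ of $v$, and that distinct list positions are associated with distinct children.

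The engine of the induction is Lemma \ref{divides}. When $v$ processes a received message from a child $x$ (recognized because $\texttt{dist} = \texttt{recvDist}-1$), the received value is $\texttt{recvEnc} = \texttt{enc}_{t'}(x)$ for some earlier round $t'$. If $x$ is already associated with some position $i$, then that position holds an earlier snapshot $\texttt{enc}_{s}(x)$ with $s \le t'$ (reception order on a single tree edge matches transmission order), so Lemma \ref{divides} gives $\texttt{enc}_{s}(x) \mid \texttt{recvEnc}$; hence the condition on line 15 of Algorithm \ref{aggpseudo} is satisfied and the algorithm \emph{overwrites} rather than appends. Conversely, an append can only occur when \texttt{recvEnc} is divisible by no current entry, which by the same monotonicity argument forces $x$ to be a child not yet represented in the list; thus every append introduces a genuinely new child and preserves injectivity. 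This already shows that the list size can only grow by recruiting previously-unseen children.

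The main obstacle is the overwrite case, because the algorithm overwrites the \emph{smallest}-indexed position $\ell$ with $\texttt{recvEnc} \bmod \texttt{childrenEncs}[\ell] = 0$, and this need not be the position currently associated with $x$: a sibling $w$ with a smaller encoding (for instance a leaf $w$ whose colour satisfies $\psi(w) \le \psi(x)$) can have $\texttt{enc}(w) \mid \texttt{enc}(x)$, so $w$'s position may also satisfy the divisibility test. The danger is that overwriting $w$'s position with an $x$-snapshot, while $x$ retains a separate position, would doubly represent $x$ and break the injection. I expect the crux of the argument to be ruling this out via a timing/ordering analysis: if $\texttt{enc}(w)$ divides every $x$-snapshot, then at the moment $v$ first recorded an $x$-snapshot the dividing entry for $w$ was already present at a smaller index, so the algorithm would have overwritten that entry instead of ever appending a separate $x$-position; hence $x$ and such a dividing sibling $w$ never simultaneously occupy two distinct positions with $w$ at the smaller index. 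Carefully formalizing this (and checking the symmetric possibility that an append's value divides an existing entry, which is harmless for the injection since it still involves distinct children) completes the inductive step and establishes the invariant, and hence the bound $|\texttt{childrenEncs}_t(v)| \le c$.
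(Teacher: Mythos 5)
You have put your finger on exactly the step that needs an argument: the overwrite at line 16 targets the \emph{minimum}-index entry dividing \texttt{recvEnc}, which need not be the position previously associated with the sender, so one child's slot can be stolen by a sibling. Unfortunately, your proposed resolution of this obstacle does not work, and the invariant you want to maintain (distinct list positions are associated with distinct children) can be violated by reception sequences consistent with Lemma \ref{divides} and with the form of the encodings. Concretely, let $v$ have two children $x$ (colour $1$) and $y$ (colour $2$) at distance $1$ from $r$, so that $x$'s successive encodings include $6\cdot 5^{18}$ and $6\cdot 5^{18}\cdot 7^{G}$, and $y$'s include $12$ and $12\cdot 5^{36}$ (here $18$ and $36$ are encodings of leaf children of $x$ and $y$ at distance $2$). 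Suppose $v$ receives, in order: $6\cdot5^{18}$ from $x$ (appended at position $1$); then $12$ from $y$ (not a multiple of $6\cdot5^{18}$, so appended at position $2$); then $12\cdot5^{36}$ from $y$ (a multiple of \emph{both} entries, so it overwrites position $1$, the minimum index, leaving both positions holding $y$-values and $x$ unrepresented); then $6\cdot5^{18}\cdot7^{G}$ from $x$ (divisible by neither entry, since both entries are divisible by $4$ and this value is not, so it is appended at position $3$). The list now has three entries although $v$ has two children. Your patch assumes that if $w$'s entry divides $x$'s snapshots then it already did so, at a smaller index, at the moment $x$'s slot was created; the example shows the divisibility can instead \emph{begin to hold} only after both slots exist, because both the stored entries and the incoming snapshots grow over time.

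For comparison, the paper proves the corollary by Pigeonhole: if the list exceeds $c$, two entries $\alpha\mid\beta$ came from the same child, and it asserts that when $\beta$ arrives it ``will replace $\alpha$.'' That assertion silently assumes the overwrite lands on $\alpha$'s position, i.e., it elides precisely the difficulty you isolated (in the scenario above, $\beta=12\cdot5^{36}$ replaces the $x$-entry at position $1$ rather than $\alpha=12$ at position $2$). So your diagnosis of where the danger lies is sharper than the paper's treatment of it, but neither argument, as written, establishes the bound from the update rule and Lemma \ref{divides} alone; closing the gap seems to require either an additional argument about which reception orders the transmission schedule actually permits, or a modified update rule (e.g., overwriting every dividing entry, or the maximum-index one, or disambiguating senders).
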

\begin{proof}
	To obtain a contradiction, assume there is a time $t$ such that $|\texttt{childrenEncs}_t(v)| > c$. As elements are only added to\\ \texttt{childrenEncs} at line 19, which only occurs after receiving a message from a child in $G$, it follows that the only elements stored in \texttt{childrenEncs} are \texttt{enc} values received from children. By the Pigeonhole Principle, it follows that there are two elements, say $\alpha,\beta$, in \texttt{childrenEncs} at time $t$ that were received in messages from the same child. Without loss of generality, assume that $\alpha$ was received before $\beta$. However, by Lemma \ref{divides}, we know that $\alpha$ divides $\beta$, i.e., $\beta$ is a multiple of $\alpha$. It follows that, when $\beta$ is received, the if condition at line 15 will evaluate to true and $\beta$ will replace $\alpha$ in \texttt{childrenEncs}, contradicting the fact that $\alpha$ and $\beta$ are both in \texttt{childrenEncs} at time $t$.
\end{proof}

As mentioned earlier, there might be some initial period in the execution of {\bf Aggregate} during which nodes compute and transmit incomplete information about their subtree, so our desired guarantees about distinct encodings and distinct transmission delays do not necessarily hold at all times. The following definition will help us talk about the eventual behaviour of the subroutine.
\begin{definition}
	We say that a node $w$ \emph{is settled in round $t$} if, at the start of round $t$, node $w$ possesses the source messages of all nodes in the subtree rooted at $w$, and, $\texttt{delay}_t(w) = \texttt{delay}_{t'}(w)$ for all $t' > t$. Note that, by the way $\texttt{delay}$ is calculated, this last condition is equivalent to saying $\texttt{enc}_t(w) = \texttt{enc}_{t'}(w)$ for all $t' > t$.
\end{definition}
Earlier, we proved that nodes at different distances from the coordinator will have different transmission delay values. We want to prove something similar about sibling nodes, i.e., any two nodes with the same parent, so that we can guarantee that each sibling has an opportunity to successfully transmit to the parent without interference. The desired result is a corollary of the following lemma.
\begin{lemma}\label{iso}
	Consider any distinct nodes $v,w$ in $G$ and a round $t$ such that: (i) all nodes in the subtrees rooted at $v$ and $w$ have settled at time $t$, and, (ii) $\texttt{enc}_t(v) = \texttt{enc}_t(w)$. Then the subtree rooted at $v$ can be mapped to the subtree rooted at $w$ via a colour-preserving isomorphism.
\end{lemma}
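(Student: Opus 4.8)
The plan is to show that, once a whole subtree has settled, the integer $\texttt{enc}_t(\cdot)$ is a faithful prime-power encoding of the coloured rooted subtree, and then to read off a colour-preserving isomorphism by decoding the two equal integers in parallel via unique factorization. Note that the distinguishing property of $\psi$ is \emph{not} used here; it is reserved for the later step in which this lemma is applied to siblings. The crux is a preliminary claim: if every node in the subtree rooted at $u$ is settled at round $t$, then $\texttt{childrenEncs}_t(u)$ consists of exactly the final encodings $\texttt{enc}_t(c)$ of the children $c$ of $u$, one entry per child, so that $\texttt{enc}_t(u)=2^{\psi(u)}\cdot 3^{d(u)}\cdot\prod_{c}p_{\sigma(c)}^{\texttt{enc}_t(c)}$ for some assignment $\sigma$ of the primes $p_3,p_4,\ldots$ to the children in list order. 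The delicate point is that a node may transmit several \emph{non-final} encodings before it settles, so a parent's list could a priori retain a stale value; I must rule this out \emph{without} appealing to the eventual-transmission-success guarantee, since that guarantee is proved later using this very lemma and assuming it would be circular.

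To establish the claim I would argue from message-completeness alone, which is free because it is part of the definition of \emph{settled}. First I would prove, by induction on the height of $u$, that whenever $u$ possesses all source messages of its subtree at some round $s$, then $\texttt{enc}_s(u)$ already equals its final value $\texttt{enc}_t(u)$. The engine of the induction is the tree structure: a parent acquires the messages of a child $c$'s subtree only through direct receptions from $c$, so at the last round the parent heard from $c$, node $c$ must already have held \emph{all} of its own subtree's messages; by the induction hypothesis $c$'s encoding was therefore already final at that round, and that value (which is maximal among all values $c$ ever sends, by Lemma~\ref{divides}) is exactly what the overwrite rule in Algorithm~\ref{aggpseudo} causes the parent to store. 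Combining this with Corollary~\ref{childrenlimit} (the list never exceeds the number of children) and the observation that every child must have been heard at least once (otherwise the parent would be missing that child's own source message, contradicting settledness) yields the clean recursion. I expect this to be the main obstacle, precisely because it has to be carried out from message-completeness without the circular transmission-success argument.

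Next, given hypothesis (ii), I would observe that $v$ and $w$ cannot be in an ancestor--descendant relation: Proposition~\ref{ancestordifferent} would force $\texttt{delay}_t(v)\neq\texttt{delay}_t(w)$, hence $\texttt{enc}_t(v)\neq\texttt{enc}_t(w)$, contradicting (ii). Therefore the two subtrees are vertex-disjoint and corresponding children are genuinely distinct nodes. I would then run the decoding induction on the height of the subtree rooted at $v$, proving the slightly strengthened statement that there is a colour-preserving isomorphism carrying $v$ to $w$. Applying the clean recursion to both sides of $\texttt{enc}_t(v)=\texttt{enc}_t(w)$ and using unique prime factorization, the exponent of $2$ gives $\psi(v)=\psi(w)$, the exponent of $3$ gives $d(v)=d(w)$, and the exponents of $p_3,p_4,\ldots$ show that $v$ and $w$ have the same number of children and that, matched in list order, the $j$-th child $a_j$ of $v$ and $b_j$ of $w$ satisfy $\texttt{enc}_t(a_j)=\texttt{enc}_t(b_j)$; each such pair has strictly smaller subtree height and lies in a settled subtree, so the induction hypothesis supplies a root-to-root colour-preserving isomorphism $\phi_j$ between their subtrees.

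Finally I would assemble the map $\Phi$ sending $v\mapsto w$ and agreeing with $\phi_j$ on the subtree of $a_j$ for each $j$. Since the child-subtrees partition the two subtrees and are disjoint, $\Phi$ is a bijection; it preserves colours because $\psi(v)=\psi(w)$ and each $\phi_j$ does; and it preserves adjacency because each $\phi_j$ is an isomorphism internally while the edge $\{v,a_j\}$ maps to $\{w,\phi_j(a_j)\}=\{w,b_j\}$. The base case, with $v$ and $w$ both leaves, is immediate since then $\texttt{enc}_t(v)=2^{\psi(v)}3^{d(v)}$ forces $\psi(v)=\psi(w)$ and the single-node map is the required isomorphism.
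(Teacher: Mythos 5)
You have correctly isolated the step the paper leaves implicit --- that for a node $u$ whose whole subtree is settled, $\texttt{childrenEncs}(u)$ holds exactly one entry per child, namely that child's final encoding --- and your decoding induction (unique factorization giving $\psi(v)=\psi(w)$ and $d(v)=d(w)$, children matched exponent by exponent, recursion on height, disjointness of the two subtrees via Proposition~\ref{ancestordifferent}) is the same as the paper's. However, the sub-claim you propose as the engine of your preliminary claim is false: it is not true that at the first round in which $u$ possesses all source messages of its subtree, $\texttt{enc}(u)$ is already final. The overwrite rule on lines 15--20 of Algorithm~\ref{aggpseudo} keys only on divisibility, not on which child a value came from, so the final encoding of one child can evict the final encoding of a \emph{different} child. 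Concretely, write $f_c$ for the settled encoding of a child $c$, and let $u$ have a leaf child $c_1$ and an internal child $c_2$ with $\psi(c_1)=\psi(c_2)$; then $f_{c_1}=2^{\psi(c_1)}3^{d}$ divides $f_{c_2}=2^{\psi(c_2)}3^{d}\prod_j p_{j+2}^{(\cdot)}$. If $u$ receives and stores $f_{c_1}$ first and then receives $f_{c_2}$, line 15 fires and $f_{c_2}$ overwrites $f_{c_1}$; at that moment $u$ may hold every source message of its subtree, yet its list is $[f_{c_2}]$ and its \texttt{enc} will still change when $c_1$'s retransmission is next received (since $f_{c_2}\nmid f_{c_1}$, it gets appended). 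So message-completeness alone does not pin down the list, and the inductive step of your auxiliary induction does not go through.

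The circularity you were trying to sidestep is real, but it cannot be dissolved from message-completeness alone: the faithful-list property genuinely needs the eventual-retransmission argument of Theorem~\ref{wsettles} (each child transmits its settled value forever, and a settled parent's \texttt{enc} cannot change, so each $f_c$ must eventually occupy, and keep, its own slot). The clean resolution is a simultaneous induction on height: at height $h$, first establish settling and list-faithfulness for nodes of height $h$ using Lemma~\ref{iso} only for subtrees of height at most $h-1$ (which is all that Corollary~\ref{siblingdifferent} requires there), and then prove Lemma~\ref{iso} at height $h$ from the faithful list. Relatedly, your remark that the distinguishing property of $\psi$ is not used is too strong: without it, two siblings could root colour-isomorphic subtrees with equal final encodings, line 15 would merge them into a single list entry, and the exponent-by-exponent decoding would miscount children. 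Ruling this out requires exactly the ``swap two isomorphic sibling subtrees'' automorphism argument, i.e., the distinguishing property together with the lemma at smaller height.
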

\begin{proof}
	We proceed by induction on the height of the subtree rooted at $v$. For the base case, suppose that the height of the subtree rooted at $v$ is 0. Then $v$ is a leaf node, and the value of $\texttt{enc}_t(v)$ in all rounds is equal to $2^{colour(v)} \cdot 3^{d(v)}$. Since $\texttt{enc}_t(v) = \texttt{enc}_t(w)$, we conclude that the value of $\texttt{enc}_t(w)$ in all rounds after $t$ is also $2^{colour(v)} \cdot 3^{d(v)}$, which implies that $w$ has no children (otherwise there would be larger primes in the product), and $v$ and $w$ were assigned the same colour by $\psi$. It follows that the subtrees rooted at $v$ and $w$ are both single nodes, and they were assigned the same colour by $\psi$, so the function that maps $v$ and $w$ to each other is a colour-preserving isomorphism. As induction hypothesis, assume that the statement of the lemma holds for nodes with any height at most $h-1$ for some $h \geq 1$. For the inductive step, we suppose that the subtree rooted at some node $v$ has height exactly $h$, we consider a node $w$ and a time $t$ such that all nodes in the subtrees rooted at $v$ and $w$ have settled at time $t$, and, we suppose that $\texttt{enc}_t(v) = \texttt{enc}_t(w)$. Denote by $V = \{v_1,\ldots,v_{|V|}\}$ the children of $v$, and denote by $W = \{w_1,\ldots,w_{|W|}\}$ the children of $w$, where we have assigned the subscripts of the children in increasing order by the child's \texttt{enc} value at round $t$. By the calculation of \texttt{enc}, it follows that\\
	$2^{colour(v)}\cdot 3^{d(v)} \cdot \displaystyle\prod_{i=1}^{|V|} p_{i+2}^{\texttt{enc}_t(v_i)} = 2^{colour(w)}\cdot 3^{d(w)} \cdot \displaystyle\prod_{j=1}^{|W|} p_{j+2}^{\texttt{enc}_t(w_j)}$. It follows that $colour(v) = colour(w)$, $d(v)=d(w)$, $|V| = |W|$, and, for each $i \in \{1,\ldots,|V|\}$, we have $\texttt{enc}_t(v_i) = \texttt{enc}_t(w_i)$. Applying the induction hypothesis to the subtrees rooted at each pair $v_i,w_i$ (which both have height at most $h-1$) we conclude that there is a colour-preserving isomorphism from the subtree rooted at $v_i$ to the subtree rooted at $w_i$. Combining these isomorphisms and mapping $v$ to $w$, we get a colour-preserving isomorphism from the subtree rooted at $v$ to the subtree rooted at $w$.
\end{proof}
\begin{corollary}\label{siblingdifferent}
	Consider any two nodes $a,b$ with the same parent in $G$. If nodes $a$ and $b$ are settled in some round $t$, then $\texttt{delay}_{t'}(a) \neq \texttt{delay}_{t'}(b)$ (equivalently, $\texttt{enc}_{t'}(a) \neq \texttt{enc}_{t'}(b)$) for all rounds $t' \geq t$.
\end{corollary}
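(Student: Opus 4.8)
The plan is to argue by contradiction, combining Lemma \ref{iso} with the distinguishing property of the colouring $\psi$. Since $a$ and $b$ are settled in round $t$, both $\texttt{enc}(a)$ and $\texttt{enc}(b)$ (and hence both \texttt{delay} values, as $\texttt{delay} = \mathit{prime}(\texttt{enc})$ and $\mathit{prime}$ is injective) are constant over all rounds $t' \geq t$. Consequently, $\texttt{delay}_{t'}(a) = \texttt{delay}_{t'}(b)$ for some $t' \geq t$ would force $\texttt{enc}_t(a) = \texttt{enc}_t(b)$, so it suffices to prove $\texttt{enc}_t(a) \neq \texttt{enc}_t(b)$. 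I would therefore assume for contradiction that $\texttt{enc}_t(a) = \texttt{enc}_t(b)$.

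To invoke Lemma \ref{iso} I first need its hypothesis (i), namely that \emph{every} node in the subtrees rooted at $a$ and $b$ is settled in round $t$, whereas the corollary only assumes this for $a$ and $b$ themselves. I would establish the stronger statement by a short downward-propagation argument: a node's \texttt{knownMsgs} set and its \texttt{enc} value both grow monotonically (the latter by Lemma \ref{divides}), and both reach their final value exactly when the node has received complete information from all of its children. Since each transmission carries both quantities together, the fact that $a$ (resp. $b$) possesses every source message of its subtree and has a frozen encoding forces each child to have already transmitted its complete message set together with its final encoding; hence each child is itself settled by round $t$, and inductively so is every descendant.

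With the hypothesis met, Lemma \ref{iso} applied to $a$ and $b$ yields a colour-preserving isomorphism $\sigma$ from the subtree rooted at $a$ onto the subtree rooted at $b$. The central step is then to turn $\sigma$ into a non-trivial $\psi$-preserving automorphism of the whole tree $G$. Letting $p$ be the common parent of $a$ and $b$, I would define $\phi$ to act as $\sigma$ on the subtree of $a$, as $\sigma^{-1}$ on the subtree of $b$, and as the identity everywhere else. Because $a$ and $b$ hang off the same parent $p$, this ``subtree swap'' preserves all adjacencies (edges inside either subtree are mapped by $\sigma$ or $\sigma^{-1}$, the edges $\{p,a\}$ and $\{p,b\}$ are exchanged, and every other edge is fixed), so $\phi \in \mathrm{Aut}(G)$; it is non-trivial since $\phi(a) = b \neq a$; and since $\sigma$ is colour-preserving and $\psi$ is untouched outside the two subtrees, we have $\psi(\phi(v)) = \psi(v)$ for every node $v$.

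This contradicts the fact that $\psi$ is a $D(G)$-distinguishing labeling, which by Definition \ref{distinguishdefs} admits no non-trivial colour-preserving automorphism. Hence $\texttt{enc}_t(a) \neq \texttt{enc}_t(b)$, and by the settled assumption the inequality persists for every round $t' \geq t$, giving $\texttt{delay}_{t'}(a) \neq \texttt{delay}_{t'}(b)$ as claimed. I expect the main obstacle to be the downward-propagation step needed to meet hypothesis (i) of Lemma \ref{iso}: one must check carefully that message-completeness and encoding-completeness of a child coincide and are certified by the \emph{same} transmission, so that settledness genuinely descends through the subtree. The automorphism construction, although it is the conceptual heart of the argument, is then routine to verify.
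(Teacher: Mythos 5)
Your skeleton is the same as the paper's: assume $\texttt{enc}_t(a)=\texttt{enc}_t(b)$, invoke Lemma \ref{iso} to obtain a colour-preserving isomorphism between the two subtrees, extend it to a non-trivial colour-preserving automorphism of $G$, and contradict the fact that $\psi$ is distinguishing. Two of your additions are genuine improvements in rigour over the paper's three-sentence proof. First, your automorphism is built correctly as a swap ($\sigma$ on the subtree of $a$, $\sigma^{-1}$ on the subtree of $b$, the identity elsewhere); the paper's phrasing ``for all other nodes $v$, map $v$ to itself'' only yields a bijection under exactly this reading, so spelling it out is worthwhile. Second, you are right that Lemma \ref{iso} requires \emph{every} node of both subtrees to be settled in round $t$, whereas the corollary only hypothesizes this for $a$ and $b$; the paper applies the lemma without addressing the mismatch.

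However, your downward-propagation argument for closing that gap is the one step that does not hold up as stated. The message half is fine: since $G$ is a tree and \texttt{knownMsgs} is monotone, the last transmission of a child $v$ that $a$ hears before round $t$ must carry all source messages of $v$'s subtree. But the encoding half fails: $a$'s encoding being frozen from round $t$ only says that $a$ \emph{receives} no further changes, not that $v$'s encoding has stopped changing --- a child could keep revising its encoding without those revisions yet being delivered, and a node's encoding can in fact still fluctuate after it already holds all of its subtree's messages, because of the overwrite-at-minimum-index mechanics in lines 15--20 when one stored child encoding divides another (e.g., leaf siblings with encodings $2\cdot 3^d$ and $4\cdot 3^d$). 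Ruling this out requires the eventual-delivery guarantee, which is precisely the content of Theorem \ref{wsettles}, so you cannot assume it here without circularity. The clean repair is not to prove downward propagation but to strengthen the corollary's hypothesis to ``all nodes in the subtrees rooted at $a$ and $b$ are settled in round $t$''; that is exactly what the induction hypothesis of Theorem \ref{wsettles} supplies at the only place the corollary is used, by taking $t$ to be the maximum of the settling times of all nodes in the two subtrees.
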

\begin{proof}
	To obtain a contradiction, assume that $\texttt{enc}_{t'}(a) = \texttt{enc}_{t'}(b)$ for all rounds $t' \geq t$. Then, by Lemma \ref{iso}, the subtree rooted at $a$ can be mapped to the subtree rooted at $b$ using a colour-preserving isomorphism. For all other nodes $v$ in $G$, map $v$ to itself. Altogether, this gives a colour-preserving automorphism of $G$, which contradicts the fact that the colours were assigned to the nodes of $G$ using a distinguishing colouring.
\end{proof}
Finally, we are ready to prove the correctness of {\bf Aggregate}. The essence of the proof is to inductively show, from the leaves upward towards the coordinator, that all nodes eventually settle, i.e., they eventually calculate a fixed transmission delay value, and eventually receive all of the source messages contained in their subtree. The important fact that enables this to happen is that the transmission delay values of all nodes with a common parent, along with the transmission delay values of the parent and the grandparent (if one exists), are all different. Further, the fact that these values are all prime numbers makes it easy to see that, by repeatedly transmitting with the same delay value, each of the sibling nodes will eventually have a chance to transmit successfully to the common parent.
\begin{theorem}\label{wsettles}
	For each node $w$ in $G$, there exists a round $t_w$ such that node $w$ is settled in round $t_w$.
\end{theorem}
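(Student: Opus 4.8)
The plan is to prove the statement by induction on the \emph{height} of $w$ in the tree rooted at $r$. \emph{Base case.} If $w$ is a leaf (height $0$), then $w$ never receives a message from a child, so its \texttt{childrenEncs} list stays empty and $\texttt{enc}_t(w) = 2^{\psi(w)}\cdot 3^{d(w)}$ in every round; hence $\texttt{delay}_t(w)$ never changes. Since $w$ trivially possesses the only source message in its subtree (its own), $w$ is settled in round $1$. \emph{Inductive step.} Assuming every node of height at most $h-1$ eventually settles, consider a node $w$ of height $h$ with children $c_1,\ldots,c_k$. Each $c_i$ has height at most $h-1$ and so, together with every node in its subtree, settles; let $T^{*}$ be a round by which all nodes in all of these subtrees have settled (a finite maximum over finitely many nodes). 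For $t \ge T^{*}$ each $c_i$ has a fixed encoding $E_i := \texttt{enc}_t(c_i)$ and a fixed prime delay $\tau(c_i) := \texttt{delay}_t(c_i)$, and each of its transmitted messages carries $E_i$ together with all source messages of the subtree rooted at $c_i$.

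The crux is to show that, after $T^{*}$, node $w$ receives a message from each child $c_i$ with no interference. Here I would use a \emph{prime-power} choice of round: fix any $a$ with $\tau(c_i)^{a} > T^{*}$ and consider round $t = \tau(c_i)^{a}$. The only prime dividing $t$ is $\tau(c_i)$ itself, so a neighbour $u$ of $w$ transmits in round $t$ only if its (always prime) delay $\texttt{delay}_t(u)$ equals $\tau(c_i)$. The neighbours of $w$ are its parent $g$ (present only if $w \neq r$) and its children. By Proposition \ref{ancestordifferent} we have $\texttt{delay}_t(g) \neq \tau(c_i)$ and (taking $u = w$) $\texttt{delay}_t(w)\neq \tau(c_i)$, since $g$ and $w$ are both ancestors of $c_i$; and by Corollary \ref{siblingdifferent}, $\texttt{delay}_t(c_j) = \tau(c_j) \neq \tau(c_i)$ for every other child $c_j$ (all siblings are settled at $t \ge T^{*}$). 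Consequently none of $g$, $w$, or any $c_j$ with $j \neq i$ transmits in round $t$, while $c_i$ (a non-root, so $\texttt{dist}(c_i) > 0$) does transmit because $\tau(c_i) \mid t$. Thus $w$ listens and hears exactly $c_i$, receiving $E_i$ and all source messages of $c_i$'s subtree. Because there are infinitely many such rounds for each $i$, node $w$ eventually receives every $E_i$ and therefore accumulates all source messages in its own subtree.

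It remains to argue that $w$'s encoding stabilises. Once $w$ has received each settled encoding $E_1,\ldots,E_k$ at least once after $T^{*}$, I would show that \texttt{childrenEncs} becomes exactly $\{E_1,\ldots,E_k\}$ and never changes thereafter: by Corollary \ref{childrenlimit} the list never exceeds $k$ entries, and by Lemma \ref{divides} every encoding a child ever transmits divides its final value $E_i$, so any stale partial encoding from $c_i$ is a divisor-target that gets overwritten by $E_i$ via the rule on lines 15--20, leaving one correct entry per child. With \texttt{childrenEncs}, \texttt{colour}, and \texttt{dist} all fixed, $\texttt{enc}_t(w)$ and hence $\texttt{delay}_t(w)$ are constant from that round on; since $w$ also possesses all source messages of its subtree, $w$ is settled, and the settling round $t_w$ is finite. \emph{Main obstacle.} The delicate point is precisely this last stabilisation of \texttt{childrenEncs}: because the min-index overwrite rule can relocate entries, and because two distinct sibling encodings can still satisfy a divisibility relation ($E_i \mid E_j$ is possible when $c_i$'s subtree embeds into $c_j$'s), one must carefully verify that every child's final encoding ultimately occupies its own slot and that no stale or duplicated value persists. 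The reception step, by contrast, is clean: the prime-power trick makes it unnecessary to control the (possibly unbounded and still-changing) delay of the parent $g$, since a prime power has only a single prime factor.
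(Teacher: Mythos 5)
Your overall strategy matches the paper's: induction on the height of $w$, with the base case for leaves and an inductive step that uses Proposition \ref{ancestordifferent} and Corollary \ref{siblingdifferent} to find collision-free rounds in which each settled child delivers its encoding and its subtree's source messages to $w$. Your prime-power selection of the round ($t=\tau(c_i)^a$ with $\tau(c_i)^a>T^{*}$) is in fact a cleaner and more explicit justification of the reception step than the paper gives: it correctly sidesteps the fact that $w$ and its parent $g$ may not yet be settled, since Proposition \ref{ancestordifferent} guarantees their (possibly changing) prime delays are never equal to $\tau(c_i)$, and a prime power admits no other prime divisor. That part of your argument is sound.

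The genuine gap is exactly where you flag it: the stabilisation of \texttt{childrenEncs}. Your sketch argues that ``any stale partial encoding from $c_i$ is a divisor-target that gets overwritten by $E_i$,'' but this does not address the harder failure mode, namely that a child's \emph{final} encoding can be destroyed by a sibling's final encoding: if $E_i \mid E_j$ for distinct children $c_i, c_j$ and $E_i$ currently sits at the minimal index among entries dividing $E_j$, then the arrival of $E_j$ overwrites the already-correct entry $E_i$, so the list does not simply converge once each $E_i$ has been received once. The paper resolves this with a pairwise case analysis on $\alpha=E_i$ and $\beta=E_j$: if neither divides the other, both are appended and never interact; if $\alpha\mid\beta$ and $\beta$ ends up at a smaller index than $\alpha$, then subsequent arrivals of $\beta$ overwrite $\beta$ itself (it is its own divisor at the minimal index) and arrivals of $\alpha$ cannot overwrite $\beta$ since $\beta\nmid\alpha$, so both entries persist; and if instead $\alpha$ is appended first and later overwritten by $\beta$, the crucial saving observation is that $c_i$ keeps retransmitting $E_i=\alpha$ forever (every $\tau(c_i)$ rounds, with collision-free receptions recurring by your own prime-power argument), so $\alpha$ is eventually re-appended \emph{after} $\beta$, reducing to the previous stable configuration. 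Combining this with Corollary \ref{childrenlimit} bounds the list by the number of children and forces it to equal $\{E_1,\ldots,E_k\}$ permanently. Without this argument (or an equivalent one handling the $E_i\mid E_j$ case), your proof of the inductive step is incomplete, since $\texttt{enc}(w)$ and hence $\texttt{delay}(w)$ have not been shown to stabilise.
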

\begin{proof}
	The proof proceeds by induction on the height of $w$. For the base case, consider any node $w$ with height 0. Then $w$ is a leaf and never receives any messages from any children. It follows that $w$ never executes the lines 14-22 in {\bf Aggregate}, so its \texttt{delay} value is not modified in any round to a value other than its initial value. Further, as $w$ starts with its own source message $\mu_w$, it follows that $w$ possesses the source messages of all nodes in the subtree rooted at $w$. This concludes the proof that $w$ is settled in round 1 in the case where $w$'s height is 0. As induction hypothesis, for some fixed $h \geq 1$, assume that for each node $v$ with height strictly less than $h$, there exists a round $t_v$ such that $v$ is settled in round $t_v$. For the inductive step, we now consider any node $w$ with height exactly $h$. 
	
	First, we consider the case where $w$ has exactly one child $v$. As the height of $v$ is $h-1$, the induction hypothesis implies that there is some round $t_v$ such that $v$ is settled in round $t_v$. After round $t_v$, node $v$ transmits in each round number divisible by $\texttt{delay}_{t_v}(v)$. By Proposition \ref{ancestordifferent}, node $w$ and $w$'s parent (if $w$ has one) have a \texttt{delay} value that is different from (and co-prime with) $\texttt{delay}_{t_v}(v)$ in all rounds after round $t_v$. It follows that there is a round $t'$ after $t_v$ in which $v$ transmits and both $w$ and $w$'s parent (if $w$ has one) listen, so $w$ will receive $v$'s transmission in round $t'$. But, since $v$ is settled in round $t_v$, it follows that $v$'s transmission in round $t'$ contains all of the source messages in the subtree rooted at $v$, so, after round $t'$, node $w$ possesses all of the source messages of nodes in the subtree rooted at $w$. Further, since $v$ is settled in round $t_v$, it follows that $v$'s transmission in every round after $t_v$ (including round $t'$) contains $\texttt{enc}_{t_v}(v)$, so every calculation of $\texttt{delay}$ at node $w$ after round $t'$ will be the same, which concludes the proof that node $w$ settles in round $t'$.
	
	Next, we consider the case where $w$ has at least two children, and we consider an arbitrary pair $a,b$ of $w$'s children. As the heights of $a$ and $b$ are both at most $h-1$, the induction hypothesis implies that there exists some round $t_a$ such that $a$ is settled in round $t_a$, and there exists some round $t_b$ such that $b$ is settled in round $t_b$. Let $t = \max\{t_a,t_b\}$, and note, by definition, that both $a$ and $b$ are settled in round $t$. By Lemma \ref{siblingdifferent}, it follows that $\texttt{delay}_{t'}(a) \neq \texttt{delay}_{t'}(b)$ (and $\texttt{enc}_{t'}(a) \neq \texttt{enc}_{t'}(b)$) for all rounds $t' \geq t$. Moreover, by Proposition \ref{ancestordifferent}, node $w$ and $w$'s parent (if $w$ has one) have a \texttt{delay} value that is different from $\texttt{delay}_{t'}(a)$ and $\texttt{delay}_{t'}(b)$ in all rounds $t' \geq t$. As all of these \texttt{delay} values are distinct prime numbers, and each node transmits when the current round number is divisible by \texttt{delay}, it follows that there is a round after $t$ in which $a$ transmits and all of $b$, $w$, and $w$'s parent (if it has one) listen, and a round after $t$ in which $b$ transmits and all of $a$, $w$, and $w$'s parent (if it has one) listen. It follows that there is some round after $t$ such that $w$ possesses all of the source messages that originated in the subtree rooted at $a$ and in the subtree rooted at $b$. It remains to show that there is some round $t''$ such that the \texttt{delay} value at $w$ remains the same in all rounds after $t''$.
	
	Recall that node $a$ is settled in round $t_a$, which implies that node $a$'s \texttt{enc} variable has some fixed value in all rounds after $t_a$. We denote this value by $\alpha$. Similarly, we denote by $\beta$ the fixed value of node $b$'s \texttt{enc} variable in all rounds after $t_b$. As shown above, we know that $\alpha \neq \beta$. We will prove below that, eventually, both $\alpha$ and $\beta$ are stored permanently as separate entries in node $w$'s \texttt{childrenEncs} list. As the argument below works for arbitrary children $a,b$ of $w$, it follows from Corollary \ref{childrenlimit} that there exists a round $t''$ after which node $w$'s \texttt{childrenEncs} list consists exactly of the settled \texttt{enc} values of all of $w$'s children. Thus, after round $t''$, the \texttt{delay} value at $w$ will remain fixed, which concludes the proof that $w$ is settled in round $t''$.
	
	The remainder of the proof is dedicated to proving that $\alpha$ and $\beta$ are eventually stored permanently as separate entries in node $w$'s \texttt{childrenEncs} list. There are three possible cases: (1) $\alpha$ divides $\beta$; (2) $\beta$ divides $\alpha$; (3) neither $\alpha$ or $\beta$ divides the other. In case (3), we see by lines 15-20 of {\bf Aggregate} that $\alpha$ and $\beta$ will be appended to \texttt{childrenEncs} and not replace one another. Cases (1) and (2) are symmetric, so, without loss of generality, we assume that $\alpha$ divides $\beta$ and that $\beta$ does not divide $\alpha$. First, we consider the case where $\beta$ is appended to \texttt{childrenEncs} before $\alpha$. In all future rounds, $\beta$ will not be overwritten by $\alpha$ since $\alpha$ is not a multiple of $\beta$ (i.e., the if condition on line 15 evaluates to false). Further, $\alpha$ will not be overwritten by $\beta$ since $\alpha$ is not located at the minimum index that contains an element that divides $\beta$ (in particular, $\beta$ itself is located at a smaller index than $\alpha$). So, in the case where $\beta$ is appended before $\alpha$, both elements will be appended to \texttt{childrenEncs} at $w$ and will not overwrite one another. Finally, consider the case where $\alpha$ is appended to \texttt{childrenEncs} before $\beta$. When $\beta$ is received by $w$, the if condition at line 15 will evaluate to true, so it is possible that $\alpha$ is overwritten by $\beta$ in the list. However, there will be a future round in which $\alpha$ is received by $w$ again (due to the fact that $a$ transmits every $\texttt{delay}_{t_a}(a)$ rounds, and the argument from the previous paragraph that there will be no interference from $b$, $w$, and $w$'s parent). When $\alpha$ is received, it will not overwrite $\beta$, as it is not a multiple of $\beta$, so it will be appended to \texttt{childrenEncs} instead. So we are now in the position described earlier: $\beta$ appears earlier in the list than $\alpha$, and the same reasoning shows that they will never overwrite each other in any future round.
\end{proof}

%proof: for any two siblings, there eventually exists a prime that has a different exponent.
%immediate if colours are different, otherwise need to recurse on children (set of encodings of children are different)
%This implies that, for any two siblings, eventually their delays are different. Because co-prime, they will eventually transmit in different slots

Theorem \ref{wsettles} with $w = r$ shows that executing the {\bf Aggregate} subroutine guarantees that the coordinator node eventually possesses the complete set of source messages. However, we have not discussed how the nodes will know when to stop the second stage of the algorithm and begin the third stage (the {\bf Inform} subroutine). There are two challenges: (1) How does the coordinator know when it possesses the complete set of source messages so that it can begin the {\bf Inform} stage? (2) How do we make sure that all of the other nodes have stopped executing {\bf Aggregate} so that they will be listening when the {\bf Inform} stage begins?

To address the first challenge, the labeling scheme simulates the {\bf Aggregate} subroutine described above, and makes note of the last source message $\mu_{last}$ to arrive at the coordinator $r$ (if multiple such source messages arrive within the same transmitted message, then choose one arbitrarily). The labeling scheme uses one bit called $term$, which it sets to 1 at the node $s_{last}$ that started with source message $\mu_{last}$, and sets to 0 at all other nodes. Then, during the actual execution of the algorithm, the node with $term$ bit set to 1 will include with its source message a ``last" message. When $r$ receives a message containing ``last", it knows that it possesses all of the source messages.

To address the second challenge, we introduce a new set of rounds that are interleaved with the rounds of {\bf Aggregate}. In particular, in odd-numbered rounds of the second stage, the {\bf Aggregate} subroutine is executed as described above, and initially, in even-numbered rounds, all nodes listen. Eventually, the coordinator $r$ receives the ``last" message in some round $t$ (which is an odd-numbered round, as this occurs during the execution of {\bf Aggregate}). Then, in round $t+1$, coordinator $r$ transmits a message containing ``finish". Whenever a node receives a ``finish" message for the first time, it stops executing {\bf Aggregate} immediately, and it transmits a ``finish" message two rounds later (in the next even-numbered round). As the height of the tree is $m$, it follows that all nodes have terminated {\bf Aggregate} by round $t+2m-1$ and the last ``finish" message is sent by round $t+2m+1$. Thus, the coordinator can safely begin the {\bf Inform} stage of the algorithm in round $t+2m+2$ (recall that $r$ knows the value of $m$ from the {\bf Initialize} stage).

\begin{theorem}\label{Gossipsolved}
	Consider any $n$-node unlabeled tree $G$, and suppose that each node has an initial source message. By applying the $O(\log D(G))$-bit labeling scheme $\gossipscheme$ and then executing algorithm $\mathcal{GOSSIP}$, all nodes possess the complete set of source messages.
\end{theorem}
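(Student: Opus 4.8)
The plan is to verify two independent claims: that $\gossipscheme$ has length $O(\log D(G))$, and that the three stages of $\mathcal{GOSSIP}$ compose correctly so that every node finishes with the full set of source messages. The length claim is immediate: a label consists of the three $\ackscheme$ bits, the single $term$ bit, and the $sched$ string, and since $sched$ stores the binary representation of $\psi(v) \in \{1,\ldots,D(G)\}$ for a $D(G)$-distinguishing labeling $\psi$, it has length $\lceil \log_2(D(G)+1)\rceil \in O(\log D(G))$; the four remaining bits are constant, so the total length is $O(\log D(G))$. The bulk of the argument is therefore correctness, whose technical heart---that the coordinator eventually gathers all source messages---has already been isolated in Theorem \ref{wsettles}.

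For the \textbf{Initialize} stage, I would invoke the correctness of $\mathcal{B}_{bounded}$ directly: it broadcasts ``init'' from $r$, and because $G$ is a tree there are no collisions, so the counter carried by the broadcast lets each node compute its distance $d(v)$ from $r$ and synchronize to a common global clock. The same collision-freeness forces the learned upper bound $m$ to equal the exact height of the tree rooted at $r$, and guarantees that all nodes terminate this stage simultaneously in round $3m$. These are exactly the preconditions assumed at the start of the \textbf{Aggregate} analysis (each node knows $d(v)$ and $m$, and all nodes share a clock), so the hypotheses needed to apply Theorem \ref{wsettles} are in place.

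For the \textbf{Aggregate} stage, the key step is to apply Theorem \ref{wsettles} with $w = r$: there is a round in which $r$ is settled, meaning $r$ possesses the source messages of every node in the subtree rooted at $r$, i.e.\ of the entire tree. It then remains to handle detecting completion at $r$ and quiescing all other nodes before \textbf{Inform} begins. I would argue that the labeling scheme's choice of $s_{last}$ (the origin of the last source message to reach $r$) is well defined and computable: the core \textbf{Aggregate} dynamics are deterministic and depend only on the $d(\cdot)$ values, the colours $\psi(\cdot)$, and the global clock, and in particular are \emph{unaffected} by the $term$ bit and the attached ``last'' tag, so the labeling scheme can simulate \textbf{Aggregate} exactly, identify $s_{last}$, and set its $term$ bit. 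In the real execution the ``last'' tag rides with $s_{last}$'s source message, so the transmission that first delivers it to $r$ is precisely the one completing $r$'s knowledge, which lets $r$ recognize completion.

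The step I expect to be the main obstacle is the stage-transition argument, since the aggregation core is already in hand. Here I would rely on the odd/even round interleaving: \textbf{Aggregate} transmissions occur only in odd rounds while the termination signal travels only in even rounds, so the two never interfere. Once $r$ receives ``last'' in some odd round $t$, it emits ``finish'' in round $t+1$, and each node, upon first hearing ``finish'', halts \textbf{Aggregate} and relays ``finish'' in the next even round; because $G$ is a tree this relay is a collision-free level-by-level broadcast, so by round $t+2m-1$ every node has halted \textbf{Aggregate} and the final ``finish'' is sent by round $t+2m+1$. Halting \textbf{Aggregate} after round $t$ is harmless, since $r$ already holds all messages by round $t$. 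Consequently $r$ may safely launch \textbf{Inform} in round $t+2m+2$ with every node listening, and by the correctness of $\mathcal{B}_{ack}$ the broadcast of the complete message set from $r$ reaches all nodes, establishing the theorem.
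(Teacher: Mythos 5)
Your proposal is correct and follows essentially the same route as the paper, which leaves Theorem \ref{Gossipsolved} without a separate proof environment and instead relies on exactly the decomposition you give: the length bound from the $\mathit{sched}$ encoding of a $D(G)$-distinguishing colouring, $\mathcal{B}_{bounded}$ for \textbf{Initialize}, Theorem \ref{wsettles} with $w=r$ for \textbf{Aggregate}, the simulated-execution $term$ bit plus the odd/even ``finish'' relay for the stage transition, and $\mathcal{B}_{ack}$ for \textbf{Inform}. Your added remark that the \textbf{Aggregate} dynamics are unaffected by the $term$ bit (so the labeling scheme's offline simulation is faithful) is a small but welcome point the paper leaves implicit.
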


%
% ---- Bibliography ----
%
% BibTeX users should specify bibliography style 'splncs04'.
% References will then be sorted and formatted in the correct style.

 \bibliographystyle{splncs04}
 \bibliography{main-arxiv}

\end{document}